\definecolor{darkgreen}{rgb}{0,.6,0}
\definecolor{darkblue}{rgb}{0,0,0.6}
\theoremstyle{definition}
\newtheorem{defn}{Definition}
\theoremstyle{plain}
\newtheorem{lem}{Lemma}
\newtheorem{cor}{Corollary}
\newtheorem{prop}{Proposition}
\newtheorem{thm}{Theorem}
\newcommand{\diagram}[2][]{\begin{aligned} \xymatrix#1{#2} \end{aligned}}
\newcommand{\au}[3][]{\ar@{.>}#1[#2]^{#3}}
\newcommand{\av}[3][]{\ar@{.>}#1[#2]_{#3}}
\newcommand{\aul}[4][]{\ar@{.>}#1[#2]^(#3){#4}}
\newcommand{\avl}[4][]{\ar@{.>}#1[#2]_(#3){#4}}
\newcommand{\aux}[3][]{\ar@{.)}#1[#2]^{#3}}
\newcommand{\avx}[3][]{\ar@{.)}#1[#2]_{#3}}
\newcommand{\auxl}[4][]{\ar@{.)}#1[#2]^(#3){#4}}
\newcommand{\avxl}[4][]{\ar@{.)}#1[#2]_(#3){#4}}
\newcommand{\aiul}[1]{\ar@{.>}@!+<-1.5em,1.5em>;[0,0]^(.2){#1}}
\newcommand{\aidl}[1]{\ar@{.>}@!+<-1.5em,-1.5em>;[0,0]^(.2){#1}}
\newcommand{\aiur}[1]{\ar@{.>}@!+<1.5em,1.5em>;[0,0]_(.2){#1}}
\newcommand{\aidr}[1]{\ar@{.>}@!+<1.5em,-1.5em>;[0,0]_(.2){#1}}
\newcommand{\ajul}[1]{\ar@{.>}@!+<-1.5em,1.5em>;[0,0]_(.2){#1}}
\newcommand{\ajdl}[1]{\ar@{.>}@!+<-1.5em,-1.5em>;[0,0]_(.2){#1}}
\newcommand{\ajur}[1]{\ar@{.>}@!+<1.5em,1.5em>;[0,0]^(.2){#1}}
\newcommand{\ajdr}[1]{\ar@{.>}@!+<1.5em,-1.5em>;[0,0]^(.2){#1}}
\newcommand{\aiulx}[1]{\ar@{.)}@!+<-1.5em,1.5em>;[0,0]^(.2){#1}}
\newcommand{\aidlx}[1]{\ar@{.)}@!+<-1.5em,-1.5em>;[0,0]^(.2){#1}}
\newcommand{\aiurx}[1]{\ar@{.)}@!+<1.5em,1.5em>;[0,0]_(.2){#1}}
\newcommand{\aidrx}[1]{\ar@{.)}@!+<1.5em,-1.5em>;[0,0]_(.2){#1}}
\newcommand{\ajulx}[1]{\ar@{.)}@!+<-1.5em,1.5em>;[0,0]_(.2){#1}}
\newcommand{\ajdlx}[1]{\ar@{.)}@!+<-1.5em,-1.5em>;[0,0]_(.2){#1}}
\newcommand{\ajurx}[1]{\ar@{.)}@!+<1.5em,1.5em>;[0,0]^(.2){#1}}
\newcommand{\ajdrx}[1]{\ar@{.)}@!+<1.5em,-1.5em>;[0,0]^(.2){#1}}
\newcommand{\arloopr}[1]{\ar@{.>}@(ur,dr)!R+<0em,.25em>;!R+<0em,-.25em>^{#1}}
\newcommand{\arloopl}[1]{\ar@{.>}@(ul,dl)!L+<0em,.25em>;!L+<0em,-.25em>_{#1}}
\newcommand{\arloopd}[1]{\ar@{.>}@(dl,dr)!D+<-.25em,0em>;!D+<.25em,0em>_{#1}}
\newcommand{\arloopu}[1]{\ar@{.>}@(ur,ul)!U+<.25em,0em>;!U+<-.25em,0em>_{#1}}
\newcommand{\arlooprv}[1]{\ar@{.>}@(ur,dr)!R+<0em,.25em>;!R+<0em,-.25em>_{#1}}
\newcommand{\arlooplv}[1]{\ar@{.>}@(ul,dl)!L+<0em,.25em>;!L+<0em,-.25em>^{#1}}
\newcommand{\arloopdv}[1]{\ar@{.>}@(dl,dr)!D+<-.25em,0em>;!D+<.25em,0em>^{#1}}
\newcommand{\arloopuv}[1]{\ar@{.>}@(ur,ul)!U+<.25em,0em>;!U+<-.25em,0em>^{#1}}
\newcommand{\arlooprx}[1]{\ar@{.)}@(ur,dr)!R+<0em,.25em>;!R+<0em,-.25em>^{#1}}
\newcommand{\arlooplx}[1]{\ar@{.)}@(ul,dl)!L+<0em,.25em>;!L+<0em,-.25em>_{#1}}
\newcommand{\arloopdx}[1]{\ar@{.)}@(dl,dr)!D+<-.25em,0em>;!D+<.25em,0em>_{#1}}
\newcommand{\arloopux}[1]{\ar@{.)}@(ur,ul)!U+<.25em,0em>;!U+<-.25em,0em>_{#1}}
\newcommand{\arlooprvx}[1]{\ar@{.)}@(ur,dr)!R+<0em,.25em>;!R+<0em,-.25em>_{#1}}
\newcommand{\arlooplvx}[1]{\ar@{.)}@(ul,dl)!L+<0em,.25em>;!L+<0em,-.25em>^{#1}}
\newcommand{\arloopdvx}[1]{\ar@{.)}@(dl,dr)!D+<-.25em,0em>;!D+<.25em,0em>^{#1}}
\newcommand{\arloopuvx}[1]{\ar@{.)}@(ur,ul)!U+<.25em,0em>;!U+<-.25em,0em>^{#1}}
\newcommand{\scr}[1]{\mathscr{#1}}
\def\C{\mathbb{C}}
\def\E{\scr{E}}
\def\D{\scr{D}}
\def\G{\scr{G}}
\def\I{\scr{I}}
\def\J{\scr{J}}
\def\M{\scr{M}}
\def\Mg{\langle\M\rangle}
\def\MS{\M_\mathrm{S}}
\def\MSg{\langle\M_\mathrm{S}\rangle}
\def\MN{\M_\N}
\def\N{\mathbb{N}}
\def\P{\scr{P}}
\def\R{\mathbb{R}}
\def\S{\scr{S}}
\def\T{\scr{T}}
\def\inv{^{-1}}
\def\tab{\hspace{1em}}
\def\cof{{\vphantom{S}}^{c}\hspace{-0.0833em}}
\def\existsa{{\vphantom{S}}^{\exists}\hspace{-0.0833em}}
\def\nexistsa{{\vphantom{S}}^{\nexists}\hspace{-0.0833em}}
\def\Gm{\varGamma}
\def\1{\mathbf{1}}
\def\2{\mathbf{2}}
\def\n{\bm{n}}
\def\m{\bm{m}}
\def\|{\;|\;}
\def\qA{q_\mathrm{A}}
\def\qR{q_\mathrm{R}}
\def\qe{q_\mathrm{e}}
\def\bs{\text{\textvisiblespace}}
\def\terminal{\Box}
\def\bsigma{\bar{\sigma}}
\def\Qt{\tilde{Q}}
\def\Tt{\tilde{\Theta}}
\def\x{\times}
\def\o{\,\circ\,}
\newcommand{\pii}[1]{{\pi_{#1}}\inv}
\newcommand{\func}[1]{\mathrm{#1}}
\def\dm{\func{dm}}
\def\cdm{\func{cdm}}
\def\id{\func{id}}
\def\cmpl{\func{cmpl}}
\def\add{\func{add}}
\def\sub{\func{sub}}
\def\mult{\func{mult}}
\def\len{\func{len}}
\def\In{\func{in}}
\def\Out{\func{out}}
\def\suc{\func{succ}}
\def\car{\func{car}}
\def\cdr{\func{cdr}}
\def\init{\func{init}}
\def\ltsup{\func{ltsup}}
\def\Inf{\func{Inf}}
\def\lt{\func{lt}}
\def\Max{\func{Max}}
\begin{document}

\newtheorem{definition}{Definition}

\title{\Large Representation and Measure of Structural Information}
\date{}
\author{Hiroshi Ishikawa\\\\
\normalsize Department of Information and Biological Sciences\\
\normalsize  Nagoya City University\\
\normalsize  Nagoya 467-8501, Japan\\
\small hi@nsc.nagoya-cu.ac.jp}
\maketitle

\begin{abstract}
We introduce a uniform representation of general objects that captures the regularities with respect to their structure.
It allows a representation of a general class of objects including geometric patterns and images in a sparse, modular, hierarchical, and recursive manner.
The representation can exploit any computable regularity in objects to compactly describe them, while also being capable of representing random objects as raw data.
A set of rules uniformly dictates the interpretation of the representation
into raw signal, which makes it possible to ask what pattern a given raw signal contains.
Also, it allows simple separation of the information that we wish to ignore from that which we measure,
by using a set of maps to delineate the {\it a priori} parts of the objects,
leaving only the information in the structure.

Using the representation, we introduce a measure of information in general objects relative to structures
defined by the set of maps.
We point out that the common prescription of encoding objects by strings to use Kolmogorov complexity
is meaningless when, as often is the case, the encoding is not specified in any way other than that it exists.
Noting this, we define the measure directly in terms of the structures of the spaces in which the objects reside.
As a result, the measure is defined relative to a set of maps that characterize the structures.
Though it does not depend on Kolmogorov complexity, it raises a question of their relationship,
as the class of applicable objects includes strings.
It turns out that the measure is equivalent to Kolmogorov complexity when it is defined relative to the maps
characterizing the structure of natural numbers.
Thus, the formulation gives the larger class of objects a meaningful measure of information
that generalizes Kolmogorov complexity.
\end{abstract}

\section{Introduction}

What is a pattern?
There does not seem to be a generally accepted mathematical definition.
Intuitively, a pattern is something simpler than it is apparent.
For instance, a repetition of a short substring in a longer string is a pattern:
the longer string is simpler, or contains less information, than most other strings of the same length.
Here, we see a comparison between the apparent size (length in the literal representation) and the ``real'' amount of information.
Formally, this can be stated in terms of Kolmogorov
complexity\cite{Chaitin66,Chaitin69,Kolmogorov65,Solomonoff60,Solomonoff64} of the string,
which is roughly defined as the length of the shortest input to a universal Turing machine that produces the string.
A string can be said to have a pattern if its Kolmogorov complexity is much smaller than its length:
strings that can be effectively described in a significantly shorter description than their length have patterns.
Our goal in this paper is to formalize this notion in the domain of more general objects than strings.

For instance, consider bitmap images.
Ordinary images are much orderly than what is allowed by their representation as an array of colors;
if we take a random bitmap out of all that can be represented as a bitmap, it is almost always a white noise,
rather than what we would consider an ordinary image.
This is similar to the string case where most strings of a given length are random ones that do not have a shorter description than the literal one.
What is the corresponding ``effective description'' of images?
Intuitively, it should be a way to describe the image in which ordinary images can be represented more concisely than noise images.

A Turing machine that produces a bitmap does not suffice because, unlike the case of strings, where all strings
can be represented precisely as they are, the bitmaps are only approximations of what we consider to be real images:
Pixels are artifacts of arbitrary approximation; and we naturally consider bitmaps of various resolutions 
as the same, if they show the same scene.
There would be no problem if it were the case that all important features of an image are independent of the choice
of pixelation.
However, this is clearly not so:
even a notion as simple as that of a line is not so simple to define on bitmaps, especially in such a way a line
in one resolution can be converted into a line in another resolution.

Infinite resolution bitmaps, or functions on an image domain that takes values in the color space,
seem to be good enough for the literal representation.
But then, the objects appearing there are continuous, infinite entities
and thus cannot easily be described effectively as, for instance, an output of a Turing machine.
Yet intuition tells us that some of these infinite entities contain only finite information,
as the extreme cases of ``geometric'' visual patterns shown in Figure~\ref{fig:example1}.

\begin{figure}[t]
\begin{center}
\includegraphics{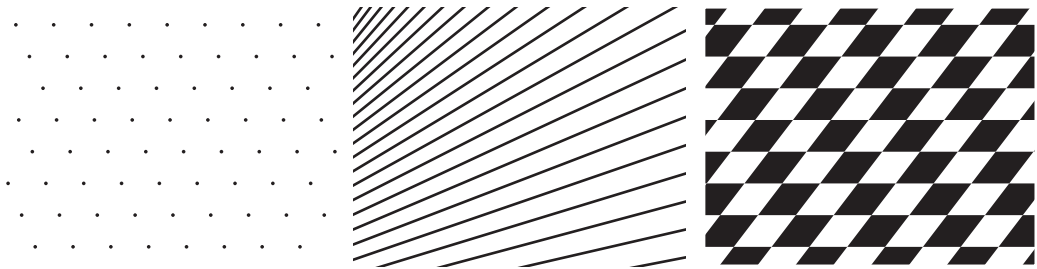}
\caption{Visual patterns.}
\label{fig:example1}
\end{center}
\end{figure}

\subsection{Kolmogorov Complexity Covers All?}\label{kolmocovers}
But surely, one might say, Kolmogorov complexity already covers any domain, since Computer Science teaches us that
information can be encoded by strings.
That is, we can first fix some standard enumeration of the objects, establishing a one-to-one correspondence between
the objects and strings; then we can define the complexity of an object to be the Kolmogorov complexity of the
corresponding string.
That seems to be where such an inquiry usually stops, content with the notion that essentially
we only need to investigate strings.

However, we immediately encounter a few problems.

First, for a class of objects (such as subsets of a Euclidean space) that has a larger cardinality than the set of all
strings, we cannot encode all objects by strings; thus we need to give up the one-to-one correspondence.
We must either encode only some of the objects, encode (perhaps an infinite number of) multiple objects by each string, or employ some combination of the two approaches.
The choice amounts to knowing what to ignore, whether it is some (even most) of the objects that are not encoded,
or the difference between the objects that are encoded into the same string.
{\it How should we make this choice?}

More fundamentally, the resulting measure has little meaning without actually specifying the encoding.
Let $O$ and $S$ be the sets of objects and strings, respectively, and $K(s)$ the Kolmogorov complexity of a string $s$.
With an encoding $f:O\to S$, we might call $K(f(x))$ the complexity of object $x$.
However, if we do not have some good reason to take a particular $f$, we can equally use the encoding
$p\circ f$ with an arbitrary permutation $p:S\to S$.
This observation renders the definition meaningless without $f$ explicitly specified.
So the question is: {\it what is the encoding that gives the complexity some meaning?}
How can we avoid falling into this trap of arbitrariness?
It is certainly not enough just to say that it can be encoded.

With strings, we can choose the identity map as $f$, which gives $K(f(x))$ as much meaning as $K(s)$.
In other cases, however, we need to specify $f$, with at least some justification.
If we insist encoding objects into strings, we need to define a concrete encoding for each class of objects.

Another problem of measuring the information solely through Kolmogorov complexity is that we cannot easily
ignore the part of information we do not care.
For instance, we may try to represent a point in the Euclidean plane by identifying the space with $\R^2$,
i.e., by a pair of real coordinates, and then encoding them by strings.
However, a single real number can contain an arbitrarily large amount of information.
Thus, in this representation, a single point can have an arbitrarily large information when encoded by a string.
That is certainly not what we want here.
Thus, an important part of the encoding is specifying the part of the information we wish to ignore.
But we cannot simply delete such information in the encoding process, since it may be needed to identify and measure
the regularities in the structure later.
If we insist that the computation be carried out strictly in the domain of strings, $f$'s output must contain all
the information in the points.
But after the information has been converted into strings, how do we specify which part of the information should be
ignored?

It would be much better if we can define the notion of computation, such as compression and pattern finding,
directly in terms of the objects we deal with.
What we offer in this paper is a meta definition of $f$ for multiple classes of objects
by specifying how to embed computations in larger spaces.
Central to the formalism is the representation of objects that offers the means to specify the information in individual elements that should be ignored,
while using that very information to find the structures, in which we wish to measure the amount of information.

Thus, the central problem is that of encoding, or representation, of objects.
The paradigm to measure information through computation is the same;
the difference is where the computation takes place.
This question of encoding seems to have suffered a neglect which, in our belief, has prevented us from
formulating a notion of information in objects that have not already been encoded in a convenient way.
We discuss this further in Section \ref{sec:conclusion}.

\subsection{Motivation}
Our motivation for asking this question stems from the desire to model perception.
For perception, there needs to be a large amount of prior knowledge stored in the perceiver, because
perception is an inherently ill-posed problem.
Perception is a process in which the configuration of the signal source is recovered from a signal, as
in recovering a three-dimensional scene from an image.

The problem is that, given the signal, there are usually infinitely many possible source configurations.
Without a preference of possible source configuration on the side of the perceiver, there is no reason
to choose one possibility over another.
For instance, our visual system has a great capability to organize the visual signal into interpretable shapes,
like making sense of the famous Dalmatian photo by R.~C.~James in \cite{Gregory1970}.
To model such a system, it is not enough to know what the possible configurations of the signal source are;
we need to know in advance how likely we are going to encounter each of them.

However, even putting aside the problem of estimating the probabilities, just storing and retrieving the data is
impossible unless we have a very good way to compress the data; for instance, if we store the possible shape of
surfaces as an array of 10 possible heights at each of $10\x 10$ positions, the number of possible surfaces
would be $10^{100}$.
The way this problem has been dealt with is by estimating the probabilities by looking at specific
characteristics of the possible surface.
For instance, the surface smoothness can be computed from a given description of the surface;
we can then decide, for instance, that the smoother the surface is, the higher the probability.
Indeed, the area of computer vision and pattern recognition is full of such heuristics.
Even when machine learning techniques are used, the variables to be learned must be carefully chosen because
we cannot simply learn all possible surfaces.

Our desire to have a measure of information originates from the wish to have a principle for
automatically deciding which quantity to look at and which combinations of variables to learn,
because we consider it reasonable for a perceiving entity to look first for simpler patterns in the signal,
as well as because of the demand of storage efficiency.
That is, if we have a measure of simplicity of general visual objects, we can say, for instance, that the
probability is proportional to the simplicity, or use machine learning techniques to learn
the probabilities that such simpler patterns appear.

In more general terms, this is a problem of inductive inference and modeling: we inductively seek a model of the
world that best explains the data.
There are theories that treat such a problem, and among them are ones with the spirit we describe above.
For instance, the Minimum Description Length (MDL) principle\cite{Rissanen78} advocates Occam's Razor.
Among models that equally fit the data, it chooses the one that is the ``simplest'' in the sense that it allows
for a shorter description of the data.
However, crucially missing from this theory is the problem of representation.
The MDL theory only deals with strings as the data and does not say how the objects should be described by strings.
It is a good principle for people trying to deal with individual problems they understand; but when it comes to
dealing with general objects, it lacks the mathematical concreteness needed to program the principle itself
into machines.

Also, as perceiving entities, we seem to have more interest in the finite part of the data.
One may even say that we can only perceive the finite information out of any infinitely rich source of information,
on the basis  that our capacity of representation is presumably finite.
For instance, if we see a white noise image, we do not perceive the amount of information that can be
encoded in such an image.
Instead, we glean the information that we can; we might just note that it is a white noise,
or if it is a video we would recognize that the noise is constantly changing, and so on.
If we see the three images in Figure~\ref{fig:example2}, which are the same pattern with different
noise added, we do not discriminate among them.
Even though as raw bitmaps they are quite different, we perceive almost nothing about the noise except
for its presence; we just recognize the pattern of the lines as the same and notice that there are some noise.
Thus, to model the perception, we need a way to recognize the part of an infinite signal that represents finite but useful information.
This is why we are especially interested in inherently finite structures whose literal manifestations are infinite.

The human visual system seems to have ``the ability to impose organization on sensory data---to discover
regularity, coherence, continuity, etc., on many levels,'' which is ``apart from both
the perception of tri-dimensionality and from the recognition of familiar objects\cite{WitTen}.''
We agree that such structure and organization that appears at every level is the key to modeling vision and
perception in general.
One purpose of this work is to provide a language to express the perceptual organization that enables us to
implement the ability to impose it on the data.

\begin{figure}[t]
\begin{center}
\includegraphics{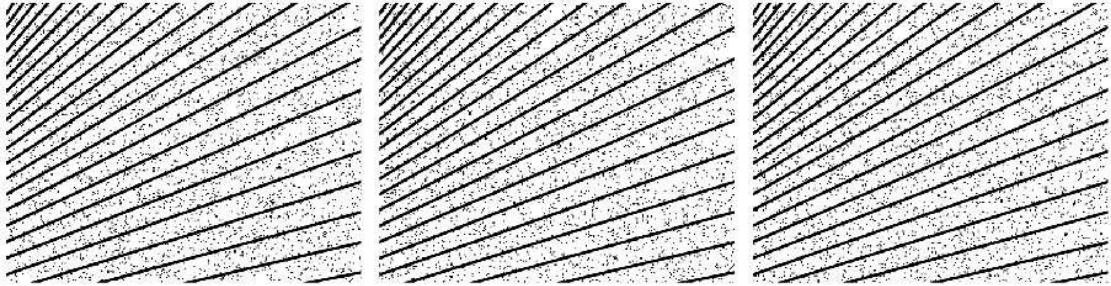}
\caption{Visual patterns with noise.}
\label{fig:example2}
\end{center}
\end{figure}

\subsection{Desiderata}
What we seek is a description, or representation, of general objects with the following properties:
\begin{enumerate}[I.]
\item \label{desideratum:generality}
General: It can represent a general class of objects and all objects in the class, including a part of the descriptions
themselves, allowing hierarchical description.
\item \label{desideratum:uniformity}
Uniform: It represents the objects in a uniform way by simple rules.
\item \label{desideratum:reflectcomplexity}
Reflect complexity: The intuitive complexity, or the amount of information in the object, corresponds to the
complexity of the description.
In particular, intuitively finite object has a finite description.
\item \label{desideratum:groundedness}
Grounded: There is a set of rules that applies to the whole class of objects, not depending on the
instance of the object, dictating how the described objects are related to the raw signal.
\end{enumerate}

In the case of strings, the literal representation satisfies \ref{desideratum:generality}
and \ref{desideratum:uniformity}: representing a string as a string is obviously
general enough to represent any string, and the representation is uniform for any string.
Fixing a universal Turing machine $U$, one can consider a program $p$ for $U$ as a description of a string $x$
if $p$ causes the machine to halt after writing out the string $x$ on the tape.
Intuitively simple string would have a shorter program.
Also, describing strings by other strings automatically satisfy the describability of descriptions and the groundedness.
Thus this description would satisfy all of the desiderata.

In the case of images, we can think of a function on a rectangle in $\R^2$ as the literal representation,
satisfying the desiderata \ref{desideratum:generality} and \ref{desideratum:uniformity}.
But we do not know of a representation that satisfies all of the desiderata.
Perhaps the closest is the page description languages like PostScript, possibly modified to allow infinite precision.
However, it has too many primitives to be convenient for mathematical treatment.
Also, the uniformity and simplicity of the rules of description is important not only for the sake of mathematical
convenience, but also because we aim eventually to develop a way of automatically extract such description from
the literal description, or the signal.
More crucially, PostScript is not general enough: its class of objects is limited to two dimensional pages.
The applicability to more general objects than just images is important because we would like to find
a description that reflects the structure within more abstract data than a two-dimensional page, especially
the description itself.
For instance, if we have a way to describe circles by center points and radii, we would have a three-dimensional
space of circles.
We would like to use the same uniform description to describe a group of the 3D points corresponding
to the circles.
Thus, allowing ``describing the description'' is crucial in order to allow efficient description of repetitive,
hierarchical, and more general structures.

The groundedness requirement is needed to treat general structures in a uniform way.
When we say that some data represents some object, we implicitly assume a set of rules for data interpretation
and manipulation.
It is this set of rules that gives the structure to the object.
It is like a machine with knobs and buttons to control it:
knowing their settings may be enough to determine the state of the machine;
but to describe the effect and
interaction of the machine with the environment and other machines, we need more than the internal parameters.
If the rules are {\it ad hoc}, varying from one instance of representation to another, it would be impossible
to formulate the notion of general structures and describe the manipulation of and interaction between such
structures.

All the desiderata are related to each other.
In particular, we emphasize the following:
it is not enough that the simple objects correspond to less data (\ref{desideratum:reflectcomplexity});
the correspondence must be obtainable from the representation (\ref{desideratum:groundedness})
in a uniform (\ref{desideratum:uniformity}) way that applies to the whole general class (\ref{desideratum:generality}) of objects.
For instance, we can say a pair of a point and a real number represents a circle by regarding them as the
center and the radius.
Or we can say that the pair represents a line by regarding it as a point the line goes through and its angular direction.
But for the representation to cover both cases, one must also include in the representation some data specifying
which case it is for each object.
Such data quickly adds up when one wants to represent various shapes; so when we say that general shapes
are described in a representation, it has not only to cover all the shapes but also include the necessary data in a way
any shape represented can be converted into a common, literal representation.

\subsection{Related Work}

The General Pattern Theory\cite{Gre76,Gre93,GreMil07} is an effort to provide an algebraic framework for
describing patterns as structures.
It defines a vocabulary which is manipulated to cast the concept of pattern in a precise algebraic language.
While it has detailed algebraic and statistical theories with many examples, we only discuss
here the part that deals with the representation of patterns.
The representation is based on graphs.
A graph is fixed; each of its node can be assigned one of {\it generators}, whose set is predetermined;
a restriction as to which combination of generators can be assigned to the nodes is defined as a set of pairwise
restrictions corresponding to the edges of the graph.
There are numerous examples in the literature showing that this representation can be used to represent
many classes of objects.

The Syntactic Pattern Recognition\cite{Fu74} also represents patterns in a way that explicitly handles
the interrelationships between the parts that make up the whole of the object, and use the explicit structure
in patterns to recognize them.
The representation is either by a formal language or a graph.

Neither of the representations used in the two formalisms is satisfactory for us.
The crucial problem is that they are not grounded in the sense above.
They are not uniform from a class of objects to another;
thus, although we can talk about the information in objects in each class, there is no way to compare them
across the classes.
They are general in the sense one can adopt them to many different classes of objects, but they are not
general enough to represent all of the classes in the same uniform way.
They cannot be used, for instance, to define what patterns are, because there is no prescribed way for the representation
to be connected to general enough class of objects.
There is no formal way to give a raw data and ask what pattern it might form.

Besides the Kolmogorov complexity we already mentioned, there are many notions of the complexity of objects.
Most of them are concerned about the complexity of objects that do not include what we deal with in this paper.
Also, note that what we define in this paper is a measure of {\it information} like Kolmogorov complexity and
Shannon information\cite{Shannon48}, rather than a measure of {\it complexity} such as the computational complexity.
We refer the reader to the appendix of \cite{Edmonds99} for an overview of formulations of complexity
with an extensive bibliography.

\subsection{Overview of the Representation}

In this paper, we introduce a representation that fulfills all the desiderata above.
Here, we give an overview of its definition and some of its properties.

We assume that the objects are given {\it a priori} as subsets of some sets.
Those that can be thought of in this way forms a very general class that seems to include most, if not all, objects we
 might deal with.
For example, we can think of a binary string $s=s_0s_1\cdots s_n$ as a subset $\{(i,s_i)\|i=0,\cdots,n\}$ of $\N\times\{0,1\}$;
an image can be thought of as a subset of the product of the image plane and the color space,
i.e., the graph of the image function on the image plane.
A physical object like a bicycle or an automobile, at one level of abstraction, can be thought of as a subset of
$E^3\times M$, where $E^3$ is the 3D Euclidean space and $M$ the set of materials,
e.g., glass, iron, rubber, etc.;
the subset consists of $(x,m(x))$, with $m(x)$ the material that occupies the point $x$ in $E^3$.
We call this representation of objects as subsets the {\it ground representation};
it serves as a signal-level, literal representation.
It is simple to represent something in this way; we can then ask the amount of
information therein.

The ground representation is an abstraction of the kind of data representation that we call the {\it dense}
 representation, which includes strings, bitmaps, and other raw data.
It corresponds to representing a string as itself.
One property of dense representation is that the presence of regularities does not affect it.
For instance, any image can be represented as a bitmap in exactly the same manner, whether it is
a regular image or a white noise.
Another type of representation, which we call the {\it sparse} representation, utilizes regularities
in the object to describe it.
In the image example, if it is an image of geometric objects, it should take very little data
to describe it, at least in principle;
a circle, for instance, can be represented just by specifying its center point and radius.
The same kind of description cannot be used to describe a white noise.

An important feature of the representation proposed in this paper is that it can interpolate between the dense
and sparse representation, so that it can take advantage of regularity in the data while also being capable
of representing any, even random, data.
This is similar to using an input to a Turing machine to represent strings.

As the vocabulary to describe the regularities in such objects, we use the maps that characterize
the space in which the objects are included as subsets.
Maps characterize the structure of spaces in the following sense.
Any two sets with the same cardinality are the same sets in absence of other characteristics.
For instance, $\R$ is the same set in this sense as $\R^2$, i.e., there exists a one-to-one map between them,
if we disregard the structures such as the topology, the vector
space structure, the metric structure, the order, and the algebraic structure.
These structures can be characterized by maps.
For instance, the metric structure is defined by the distance map that gives the distance between two elements of the set;
the order is given by a predicate on a pair of elements that returns {\it true} if the first element is less than the second.
The two sets are different when we consider the structures because the one-to-one map does not commute with
the maps that define the structures.

Using such {\it structure maps} to describe regularity, objects with regularities can be represented
through sparse parameters in the proposed representation.
For a given object, there can be many different ways of representing it, just as there can be
any number of inputs to a Turing machine producing the same string.
Importantly, there is a prescribed way to connect the description to the ground representation;
thus, the representation is grounded.
While taking structures into account so that regular objects can be represented as such,
it automatically provides an interpretation of each represented object into the signal level.
In other words, the relationship between the parameters and the data is part of the representation.
Thus, we can give our data in the ground representation and then ask what sparser,
more structured representation is possible.

Let us be slightly more concrete.
In the proposed representation, we take a number of sets and maps between them (that are to be composed
by the structure maps), which we call a {\it diagram}.
Then we call an assignment of a subset to each set in the diagram its {\it cross section}.
A cross section must satisfy a certain constraints, because of which
we can uniquely determine all the subsets by specifying only a partial cross section, which assigns
subsets to only some of the sets in the diagram.
If one of the subsets coincides with the ground representation of the object in question,
we say that it is represented by the diagram and the partial cross section.

The maps define the structures we take into consideration,
which determine the regularities, which in turn allow more concise description of the object than the literal one.
Since the representation by diagrams and cross sections is explicitly in terms of the maps in the diagram,
it is apparent from the diagram exactly what structure is taken into account.

Some of the properties of the representation are as follows. It is:
\begin{enumerate}[i)]
\item Sparse: Unlike dense representations such as bitmap, it is capable of representing
objects by a combination of their essential structure and instance-specific parameters.
The diagram expresses the essential structure while the partial cross section represents the parameters.
Implementation-dependent approximation of representation only affects the parameters and thus
can be separated from the structure.
The sparseness also makes it  flexible and easy to manipulate.
By modifying the parameters, different instances of the same structure can be easily represented.
Also, comparison of two patterns having the same structure is naturally defined.

\item Modular:
Parts of the representation can be understood as the modules to construct larger and more complex ones.
Complex combinations can be obtained hierarchically and recursively as well as by simple union and intersection.

\item Hierarchical:
Because it can be applied to any data, it can also be applied to the parameter space
parametrizing some other structure, leading to a hierarchical representation.

\item Recursive: It can represent a recursively defined structure,
making it particularly powerful in, for instance, representing repeated patterns.
The ``repeat'' can be in various spaces that can manifest in the final pattern in non-obvious ways.
\end{enumerate}

Finally, diagrams and their cross sections can represent maps between powersets.
In fact, the representation of subsets can be considered a special case where the map sends a trivial set to the subset.
Any computation, in particular, can also be represented.

\subsection{Measure of Structural Information}\label{measurestrinfo}

Using the representation, we introduce a measure of information.
Roughly speaking, it is defined as the size of the smallest diagram representing the object,
where the diagram can contain only those maps that can be composed by a set of given structure maps,
including constant maps.

Thus, the measure is relative to the structure and constants expressed explicitly in the form of maps.
The explicit incorporation of the structure of the object space is the key to avoiding the trap of arbitrariness.
The patterns such as shown in Figure~\ref{fig:example1} all have finite information according to the measure.
The measure is relative to the constants because of the aforementioned need to separate the information in the structure from that in infinite objects such as real numbers.

Because of the reasons laid out in \ref{kolmocovers},
we do not follow the recipe of interfacing Kolmogorov complexity by encoding objects by
strings; strings are given no special status in this theory.
Instead, we define it directly in terms of the structure of the spaces in which the objects reside.
As such, the new measure does not depend on Kolmogorov complexity.
Since the class of applicable objects includes that of strings, however, it raises a question of their relationship.
It turns out that the new measure is equivalent to Kolmogorov complexity in the case where strings are characterized by the structure of natural numbers given by the constant $0$ and the successor function.
Note that this is not obvious {\it a priori}:
the definition of the representation and the measure does not even mention strings.
Also, the new measure is defined relative to structure maps.
It is equivalent to Kolmogorov complexity when it is defined relative to this particular set of structure maps;
relative to other sets, it may not be.
If the set includes the constant maps of all strings, for example, any string's information would be 1.

Thus, the new measure gives the larger class of objects a meaningful measure of information
that generalizes Kolmogorov complexity.

\vspace{1em}
The rest of the paper is organized as follows.
In the next section, we define the notion of diagrams and their cross sections precisely,
as well as what is meant by representing with them.
We also list the notations used throughout this paper.
In section \ref{sec:geopat}, we illustrate some properties of the representation by geometric examples.
In section \ref{sec:computation}, we give more examples, this time those representing computations.
In section \ref{sec:measure}, we define the information measure of structure of general objects.
In section \ref{sec:RelKolm}, we prove that the measure generalizes Kolmogorov complexity.
In section \ref{sec:conclusion}, we further discuss the difference between our approach and the string-centered one,
before concluding.

\section{Representation by Diagrams and Cross Sections}

\subsection{Definitions}\label{subsec:def}

We fix the notation for standard finite sets as $\1=\{0\}, \2=\{0,1\}, \cdots, \n=\{0,1,\cdots,n-1\}$, etc.
The set $\2$ is also used as the set of Boolean values, $0$ meaning {\it false} and $1$ {\it true}.
We mean by $f: X\to Y$ that $f$ is a map from $X$ to $Y$.
We denote the set of all subsets of $X$ (the power set of $X$) by $\2^X$.
The map from $\1$ to $X$ that maps $0\in\1$ to $x\in X$ is denoted by the same letter $x:\1\to X$.
We call it a {\it constant map}.

\begin{defn}
Let $\S=(S_i )_{i \in \I}$ be a family of sets indexed by a set $\I$.
A {\it cross section} $s$ of $\S$ is an assignment to each
set $S_i$ in  $\S$ of its subset $s_i\subset S_i$.
\end{defn}

In other words, a cross section of $\S$ is another family $(s_i )_{i \in \I}$ of sets indexed by $\I$
such that $s_i\subset S_i$ for all $i\in\I$.
We used the index set to make clear that there can be multiple members of the family $\S$ that are identical as sets;
however, we avoid the use of indices almost entirely in this paper.
We use the set-theoretic notation with $\S$ such as $S\in\S$.
The equality of two members of $\S$ means that their indices are the same;
if the indices are different, we treat them as different, even if they are identical as sets.
When we discuss a set $S=S_i$ in $\S$ and a cross section $s$ of $\S$,
$s(S)$ denotes the subset $s_i$ assigned to $S_i$ by $s$.
Thus, $s$ assigns each $S\in \S$ its subset $s(S)\subset S$.

We denote the set of cross sections of $\S$ by $\Gm(\S)$.
Let  $\T$ be a subfamily of $\S$.
A cross section of $\T$ is called a partial cross section of $\S$.
For a cross section $s$ of  $\S$, the cross section of $\T$ that assigns $s(S)$ to $S$ in $\T$ is 
called the restriction of $s$ to $\T$, denoted by $s|_\T$.
For a cross section $t$ of $\T$, we denote the set $\{s\in\Gm(\S)\,|\, s|_\T=t\}$
of cross sections of $\S$ that restrict to $t$ by $\Gm(\S\,|\,t)$.

\begin{defn}
A {\it diagram} is a triple $(\S, \S',\M)$ of a family $\S=(S_i )_{i \in \I}$ of sets, its subfamily $\S'$, and
a family $\M=(\varphi_j )_{j\in\J}$ of maps of the form $\varphi_j:\2^S\to\2^T$, with $S, T\in\S$,
where $\I, \J$ are index sets.
\end{defn}

A diagram $(\S, \S', \M)$ such that both $\S$ and $\M$ are finite is called a finite diagram.
Let $(\S, \S',\M)$ be a diagram.
There are maps $\dm:\M\to\S$ and $\cdm:\M\to\S$ such that
$\varphi:\2^{\dm(\varphi)}\to\2^{\cdm(\varphi)}$ for $\varphi\in\M$.
Also, define the maps $\Out=\dm\inv:\S\to\2^\M$ and $\In=\cdm\inv:\S\to\2^\M$ so that
$\Out(S)=\{\varphi\in\M\,|\,\dm(\varphi)=S\}$ and
$\In(S)=\{\varphi\in\M\,|\,\cdm(\varphi)=S\}$ for $S\in\S$.
\begin{defn}\label{def:crosssection}
The {\it cross section of a diagram} $(\S,\S',\M)$ is a cross section $s$ of $\S$ such that,
for any $S\in\S$ with $\In(S)\neq\emptyset$, the following holds:
\begin{align}
s(S)&=\bigcap_{\varphi\,\in\,\In(S)}\varphi(s(\dm(\varphi)))\hspace{.5em}  \text{if } S\in\S\setminus\S',\label{eq:section}\\
s(S)&=\bigcup_{\varphi\,\in\,\In(S)}\varphi(s(\dm(\varphi)))\hspace{.5em}  \text{if } S\in\S'.\label{eq:section2}
\end{align}
\end{defn}

In diagram $(\S, \S', \M)$, the subfamily $\S'$ of $\S$ specifies the sets for which a cross section should satisfy
\eqref{eq:section2} instead of \eqref{eq:section};
this means that the cross section on that set should be the union, rather than the intersection, of the images by the
incoming maps.
We denote the set of cross sections of diagram $(\S,\S',\M)$ by $\Gm(\S,\S',\M)$.
We also define $\Gm(\S,\S',\M\,|\,t)=\Gm(\S,\S',\M)\cap\Gm(\S\,|\,t)$ for $\T\subset\S$ and $t\in\Gm(\T)$;
i.e., $\Gm(\S,\S',\M\,|\,t)$ is the set of cross sections of diagram $(\S,\S',\M)$ that restrict to the
cross section $t$ of subfamily $\T$ of $\S$.

To illustrate the definitions by example, suppose $\S=\{\1,X, Y, Z,W\}$,
$\S'=\{W\}$, $w\in W$, and $\M=\{w,\varphi,\psi,\eta,\delta,\kappa\}$ with
\begin{align*}
&w:\2^\1\to \2^W,\hspace{2em}\varphi: \2^Y\to \2^X,\hspace{2em}\psi: \2^Z\to \2^Y,\\
&\eta: \2^W\to \2^Y,\hspace{2em}\delta: \2^W\to \2^X,\hspace{2em}\kappa: \2^X\to \2^W,
\end{align*}
where $w$ is an element of $W$; the same letter denotes a constant map.
We denote the diagram $(\S,\S',\M)$ as follows:
\begin{equation}
\diagram[@C=4em]{
{X}_{(1)} \aux[@<.3em>]{rd}{\kappa} & {Y}_{(2)} \av{l}{\varphi} & {Z}_{(3)}  \av{l}{\psi} \\
 & {W}_{(4)} \ajurx{w} \au{u}{\eta} \au[@<.3em>]{ul}{\delta}
}\label{samplediagram}
\end{equation}
For instance, $\varphi$ maps each subset of $Y$ to a subset of $X$.
We omit the set $\1$ from the diagram: a constant map is shown as an incoming arrow, without the domain $\1$.
Note also that the arrows have dotted shafts, which signifies that the map is between power sets.
The parenthesized subscript numbers are for reference:
as more than one sets in the family can be identical as sets, we use these to refer to them.
We always use $S_i$ to mean the set with the subscript $(i)$ in the diagram under discussion.
Thus, if we are discussing the one in equation \eqref{samplediagram}, $S_1$ means the set $X$ and
$S_2$ means $Y$, etc.
Also, there are two kinds of arrowheads: the ordinary arrows and round ones.
An arrow has the round arrowhead if and only if it is coming into a set in $\S'$, in this case $S_4=W$.
For a cross section $s$ of this diagram, we have, e.g.,
\[
\In(X)=\{\varphi,\delta\},\hspace{1.2em}
\dm(\varphi)=Y,\hspace{1.2em}
s(X)=\varphi(s(Y))\cap\delta(s(W)),\hspace{1.2em}
s(W)=\kappa(s(X))\cup\{w\}.
\]

A diagram and its partial cross section represents an object in the following sense.
\begin{defn}
Let $(\S,\S',\M)$ be a diagram, $X$ a set in $\S$, $\T$ a subfamily of $\S$, and $t$ a cross section of $\T$.
Suppose an object is represented in the ground representation as a subset $A$ of $X$.
The object is said to be represented by $(\S, \S', \M, \T, t, X)$ if $s(X)=A$ for every cross section
$s$ in $(\S, \S', \M\,|\,t)$.
\end{defn}

The ground representation is a special case of this as a {\it trivial representation};
just take $\S=\S'=\T=\{X\}, \M=\emptyset$, and $t(X)=A$.
Thus the representation is general enough to include all dense representation.
The aim, however, is to enable more efficient representation that captures the structure.

Here, we also define the concepts of minimality and limit for later use.
\begin{defn}
Let $(\S,\S',\M)$ be a diagram, $X\in\S$, and $G\subset\Gm(\S,\S',\M)$.
A cross section $s\in G$ such that no other $t\in G$ gives $t(X)\subsetneq s(X)$ is said to be {\it minimal}
 on $X$ in $G$. We denote the set of such cross sections by $\min_{X}G$.
\end{defn}

Note that $\min_{Y}\min_{X}G\supset\min_{X}G\cap\min_{Y}G$ for $X,Y\in \S$:
if $s\in\min_{X}G\cap\min_{Y}G$, then $s(Y)\subset t(Y)$ for any $t\in\min_{X}G$ since $t\in G$ and $s\in\min_{Y}G$;
thus $s\in\min_{Y}\min_{X}G$.
Since it is also the case that $\min_{Y}\min_{X}G\subset\min_{X}G$, by symmetry it follows that
$\min_{Y}\min_{X}G\cap\min_{X}\min_{Y}G=\min_{X}G\cap\min_{Y}G$.

\begin{defn}
Let $(\S,\S',\M)$ be a diagram, $X\in\S, \T\subset\S$, and $t\in\Gm(\T)$.
Furthermore, let $X_1, \cdots, X_n$ be a finite number of sets in $\S$.
A subset $A$ of $X$ is said to be {\it represented by the data $(\S,\S',\M, \T, t, X, (X_1, \cdots, X_n))$ as a limit}
if $s(X)=A$ for any cross section $s$ in $\min_{X_n}\min_{X_{n-1}}\cdots\min_{X_1}\Gm(\S, \S', \M\,|\,t)$.
\end{defn}

\subsection{Notations}\label{notations}
Here we list some more notations used in this paper.

\begin{enumerate}[i)]

\item \label{setmaps} For any set $X$, $\id:X\to X$ denotes the identity map on $X$ and
$\omega:X\to \1$ the unique map from $X$ to $\1=\{0\}$.
The complement map $\cmpl:\2^X \to\2^X$ is defined for $A\subset X$ by:
\begin{equation}
\cmpl(A)=\cof A= X\setminus A.\label{def:cmpl}
\end{equation}

\item The product map $f_1\x\cdots\x f_n: X\to Y_1\x\cdots\x Y_n$ of maps
$f_i:X\to Y_i\;(i=1,\cdots, n)$ is defined by $(f_1\x\cdots\x f_n)(x)=(f_1(x),\cdots, f_n(x))$.
Given a map $f: X\to Y$ and a constant map $z:\1\to Z$, one can construct a product map:
\[
f\x (z\circ\omega):X\to Y\x Z
\]
of $f$ and $z\circ\omega:X\to Z$.
By abuse of notation, we denote this map by $f\x z:X\to Y\x Z$.
Similarly, we mix maps of the form $X\to Y$ and $\1\to Z$ freely in making a product map.

\item For a Cartesian product $X_1\x X_2\x\cdots\x X_n$, the map
\[
\pi_i : X_1\x X_2\x\cdots\x X_n \to X_i
\]
is the projection to the $i$'th component.
We use a shorthand $\pi_{ij}$ for the product map
$\pi_i\x\pi_j:X_1\x\cdots\x X_n \to X_i\x X_j$,
$\pi_{ijk}$ for  $\pi_i\x\pi_j\x\pi_k$, and so on.

\item For a disjoint union $X_1+ X_2 + \cdots + X_n$, the map
\[
\iota_i : X_i\to X_1 + X_2 + \cdots + X_n
\]
is the injection from the $i$'th component.

\item The map union $f+g: X+Y \to Z$ of maps $f:X\to Z$ and $g:Y\to Z$ is defined by $(f+g)(x)=f(x)$ if $x\in X$ and $(f+g)(x)=g(x)$ if $x\in Y$.

\item \label{powermapitem} For a map $f:X\to Y$, we denote by the same letter the map
$f:\2^X\to\2^Y$ between the power sets defined by $f(A)=\{f(x)\,|\,x\in A\}\subset Y$
for $A\subset X$.

\item For a map $f:X\to Y$, the map $f^{-1}:\2^Y\to\2^X$ is defined by
$f\inv(A)=\{x\in X\,|\,f(x)\in A\}\subset X$ for $A\subset Y$.
By a slight abuse of notation, by $f\inv(y)$ for $y\in Y$ we mean $f\inv(\{y\})$.

\item For a map $f:X\to X$, $f^0$ denotes $\id_X$.
For a positive integer $n$, $f^n$ denotes the map $f^n:X\to X$ defined as applying $f$ for $n$ times
as well as the map $f^n:\2^X\to\2^X$ defined as in \ref{powermapitem}).
When $n$ is a negative integer, $f^n$ denotes the map $f^n:\2^X\to\2^X$ defined as applying
$f^{-1}:\2^X\to\2^X$ for $-n$ times.

\end{enumerate}

\section{Geometric Patterns}\label{sec:geopat}

Using diagrams and cross sections, we can represent geometric objects in a uniform and compact way.
In this section, we introduce the representation and discuss its properties using examples.

\subsection{Examples}\label{visexample}

As the simplest example, we consider a circle in the Euclidean plane $X$.
Let us denote the vector space of translations in $X$ by $V$.
Also, denote the map that sends $(x,y)\in X\x X$ to $x-y$ by $\sub: X\x X \to V$, and
the map that gives the length of a vector by $\len: V\to \R$.

Consider the following diagram:
\begin{equation}
\diagram{
{\R}_{(1)} \au{r}{\len\inv} & {V}_{(2)} \au{d}{\sub\inv} \\
{X}_{(3)} \aul{r}{.4}{\pii{2}} & {X\x X}_{(4)} \aul{r}{.55}{\pi_1} & X_{(5)}
}
\label{dia:circle}
\end{equation}
This denotes a diagram $(\S,\S',\M)$ with
\[
\S=(S_1, \cdots, S_5),\; \S'=\emptyset,\; S_1=\R,\; S_2=V,\; S_3=S_5=X,\; S_4=X\x X
\]
 and
\[
\M =(\len\inv:\2^{S_1}\to\2^{S_2}, \sub\inv:\2^{S_2}\to\2^{S_4}, \pii{2}:\2^{S_3}\to\2^{S_4}, \pi_1:\2^{S_4}\to\2^{S_5}).
\]

Note that, while the inverse maps are indicated by $\inv$, the power map in the forward direction is to be surmised
from the convention that the map is between powersets.

Suppose that $\T=\{S_1, S_3\}$ and that its cross section $t$ is defined by
\[
t(S_1)=\{r\},\;\;\;t(S_3)=\{p\},
\]
where $r$ is a positive real number and
$p$ is a point in the Euclidean plane $X$.
Let $s$ be a cross section in $\Gm(\S,\S',\M\,|\,t)$.
Then, by \eqref{eq:section}, we have
\begin{align*}
s(S_2)&=\len\inv(s(S_1))=\len\inv(t(S_1))=\len\inv(\{r\})=\{v\in V\|\len(v)=r\},\\
s(S_4)&=\sub\inv(s(S_2))\cap\pii{2}(s(S_3))\\
&=\{(x,y)\in X\x X\|x-y\in s(S_2), y\in s(S_3)\},\\
s(S_5)&=\{\pi_1((x,y))\in X\|(x,y)\in s(S_4)\}\\
&=\{x\in X\| x-y\in s(S_2), y\in s(S_3)\}\\
&=\{x\in X\| \len(x-p)=r\}.
\end{align*}
Thus the cross section $s$ is completely determined and $s(S_5)$ is the set of the points on the circle centered at $p$ with radius $r$.
In this way, $(\S,\S',\M, \T, t, S_5)$ represents the circle.

If $t(S_3)=\{p, q\}$, it represents two circles with the same radius $r$ centered at $p$ and $q$.
Thus, we can think of $S_3$ as the space of centers of the circles.
If $t(S_1)=\{r, t\}$ instead, it would represent two concentric circles with radii $r$ and $t$.
If we modify the diagram to
\[
\diagram[@C=5em]{
{\R\x X}_{(1)} \au{r}{((\len\o\pi_1)\x\pi_2)\inv} & {V\x X}_{(2)} \au{r}{(\sub\x\pi_2)\inv}
 & {X\x X}_{(3)} \aul{r}{.55}{\pi_1} & X_{(4)}
}
\]
and let $\T=\{S_1\}$ and define $t\in\Gm(\T)$ by $t(S_1)=\{(r_1,p_1),(r_2,p_2),\cdots\}$,
then we have
\begin{align*}
s(S_2)&=\{(v,x)\in V\x X\|(\len(v),x)\in t(S_1)\},\\
s(S_3)&=\{(x,y)\in X\x X\|(x-y,y)\in s(S_2)\},\\
s(S_4)&=\{x\in X\| \existsa y\in X, (\len(x-y),y)\in t(S_1)\},
\end{align*}
and we have as $s(S_4)$ the circles specified by the radius-center pairs in $t(S_1)$.

For another example, a line in $X$ can be represented using the following diagram:
\begin{equation}
\diagram{
V_{(1)} \aul{r}{.4}{\pii{1}} & {V\x\R}_{(2)} \aul{d}{.4}{\sub\inv\o\mult} \\
X_{(3)} \aul{r}{.4}{\pii{2}}  & {X\x X}_{(4)}\au{r}{\pi_1} & X_{(5)}
}\label{linediagram}
\end{equation}
Here, $\mult: V\x \R \to V$ is the scalar multiplication.
Other maps are as above.
Suppose that $\T=\{S_1, S_3\}$ and that its cross section $t$ is defined by
\[
t(S_1)=\{v\},\;\;\;t(S_3)=\{p\},
\]
where $p$ is a point in the Euclidean plane $X$ and $v$ is a vector in $V$.
Let $s$ be a cross section in $\Gm(\S,\S',\M\,|\,t)$.
Then, from \eqref{eq:section} we have
\begin{align*}
s(S_2)&=\pii{1}(\{v\})=\{(v,c)\in V\x\R\|c\in\R\},\\
s(S_4)&=\sub\inv(\mult(s(S_2)))\cap\pii{2}(\{p\})\\
&=\sub\inv(\{cv\in V\|c\in\R\})\cap\pii{2}(\{p\})\\
&=\{(x,p)\in X\x X\|\existsa c\in\R, x-p=cv\},\\
s(S_5)&=\{x\in X\|\existsa c\in\R, x-p=cv\}\\
&=\{p+cv\|c\in\R\}.
\end{align*}
Thus, the cross section $s$ is completely determined and $s(S_5)$ consists of the points on the line that goes through $p$
and has the direction parallel to $v$.

\subsection{Union}\label{sec:union}

As mentioned in \ref{subsec:def},
\[
\diagram{S_1\au{r}{\phi} & S_2  & S_3 \av{l}{\psi}}
\]
denotes the case when \eqref{eq:section} in Definition~\ref{def:crosssection} is required,
i.e., $S_2\in\S\setminus\S'$.
Any cross section $s$ of the diagram satisfies $s(S_2)=\phi(s(S_1))\cap \psi(s(S_3))$.
To denote the other case, we use
\begin{equation}
\diagram{
S_1\aux{r}{\phi} & S_2 & S_3 \avx{l}{\psi}
}\label{unionorg}
\end{equation}
to indicate that $S_2\in\S'$ and $s(S_2)=\phi(s(S_1))\cup \psi(s(S_3))$.
Thus, for any set in $\S'$, incoming maps are depicted with the round arrow.

In the examples, we may use two kinds of incoming arrows as:
\[
\diagram{
S_1\aux{r}{\varphi} & S_2 & S_3 \avx{l}{\psi} \\
S_4\aux{ru}{\theta} &    & S_5\av{lu}{\eta}
}
\]
It means $s(S_2)=(\varphi(s(S_1))\cup \psi(s(S_3))\cup \theta(s(S_4)))\cap \eta(s(S_5))$,
i.e., we take the unions first, and then the intersection.
This is simply an abbreviation of
\[
\diagram[@C=4em]{
S_1\aux{r}{\varphi} & S_6 \au{d}{\id} & S_3 \avx{l}{\psi} \\
S_4\aux{ru}{\theta} &  S_2  & S_5\av{l}{\eta}
}
\]

If we allow the complement map $\cmpl:\2^X\to\2^X$ in diagrams, we need only one of the conditions
in Definition~\ref{def:crosssection}, because we can make unions from intersections or
vice versa.
Using $\cmpl$,
\begin{equation}
\diagram[@C=4em]{
S_1\au{r}{\cmpl\o\phi} & S_4 \au{d}{\cmpl}  & S_3 \av{l}{\cmpl\o\psi}\\
 & S_2
}\label{unioncmpl}
\end{equation}
\begin{align*}
s(S_4)&=\cof\phi(s(S_1))\cap \cof\psi(s(S_3))\\
s(S_2)&=\cof s(S_4) = \phi(s(S_1))\cup \psi(s(S_3)).
\end{align*}
Thus, \eqref{unioncmpl} is equivalent to \eqref{unionorg}.

\subsection{Representing maps}
A diagram with partial cross section can represent a map in the following sense:
\begin{defn}
A map $\varphi:\2^X\to\2^Y$ is said to be represented by $(\S,\S',\M, \T, t, X, Y)$ if $(\S,\S',\M)$ is
a diagram, $\T\subset\S$, $t\in\Gm(\T)$, $X,Y\in\S$, and every cross section $s$ in $\Gm(\S,\S',\M\|t)$
satisfies $s(Y)=\varphi(s(X))$.
\end{defn}

As an example, let us represent the map $\mathrm{cl}:\2^X\to\2^X$ that maps a subset $A$ of a Euclidean space $X$
to the topological closure $\bar{A}$ of $A$ in $X$.
Consider the diagram $(\S,\S',\M)$:
\[
\diagram[@C=4.3em]{
X_{(1)}\aul{r}{.35}{\pii{1}} & {X\x X}_{(2)}\au{r}{\sub\x\pi_2} &
{V\x X}_{(3)}\au{r}{(\len\o\pi_1)\x\pi_2} & {\R\x X}_{(4)} \au{r}{\Inf} & {\R\x X}_{(5)} \au{d}{\pi_2}\\
& & & {\R}_{(6)} \au{ru}{\pii{1}}& X_{(7)}
}
\]
Here, the map $\Inf$ maps $B\subset\R\x X$ to
$\{(a, x)\in\R\x X\| x\in \pi_2(B), a=\inf \pi_1(B\cap \pii{2}(x))\}$;
thus, for each $x\in X$ that appears in $B$, there is an element $(a,x)$ in $\Inf(B)$,
where $a$ is the infinimum of the set of real numbers $b$ that appear as $(b,x)$ in $B$.
Now, if $\T=\{S_6\}, t(S_6)=\{0\}, s\in\Gm(\S,\S',\M\|\T)$, and $s(S_1)=A$, we have
\begin{align*}
s(S_2)&=\{(x,y)\|x\in A, y\in X\}\\
s(S_4)&=\{(d,y)\|y\in X, \existsa x\in A, \len(x-y)=d\}\\
s(S_7)&=\{y\in X\|\inf_{x\in A} \len(x-y)=0\}=\bar{A}.
\end{align*}
Thus $(\S,\S',\M, \T, t, S_1, S_7)$ represents $\mathrm{cl}$.

The infinimum map $\Inf$ in turn can be represented by
\[
\diagram[@C=4.5em]{
{\R\x X}_{(1)}\aul{r}{.4}{\pii{13}} & {\R\x\R\x X}_{(2)}\au{r}{\pi_2\x (\lt\o\pi_{12}) \x\pi_3} &
{\R\x\2\x X}_{(3)} \au{r}{\cmpl\o\pi_{13}} & {\R\x X}_{(4)}\au{d}{\Max}\\
& & {\2}_{(5)} \au{u}{\pii{2}} & {\R\x X}_{(6)}
}
\]
with $\T=\{S_5\}, t(S_5)=\{1\}$.
The map $\lt:\R\x\R\to\2$ maps $(a,b)$ to $1$ if $a<b$ and $0$ otherwise, while
the map $\Max$ maps $B\subset\R\x X$ to
$\{(a, x)\in \R\x X\| x\in \pi_2(B), a=\max \pi_1(B\cap \pii{2}(x))\}$.
Then if $s\in\Gm(\S,\S',\M\|\T)$ and $s(S_1)=B$ we have
\begin{align*}
s(S_2)&=\{(a,b,x)\|(a,x)\in B, b\in\R\},\\
s(S_3)&=\{(b,1,x)\|\existsa (a,x)\in B, a< b\},\\
s(S_4)&=\{(b,x)\|\nexistsa (a,x)\in B, a< b\},\\
s(S_6)&=\{(c, x)\| x\in \pi_2(B), c=\max \{b\in\R\|\nexistsa (a,x)\in B, a<b\}\}\\
&=\{(c, x)\| x\in \pi_2(B), c=\inf (B\cap \pii{2}(x))\}.
\end{align*}
Thus $(\S,\S',\M, \T, t, S_1, S_6)$ represents $\Inf$.

Finally, the maximum map $\Max$ can be represented by
\[
\diagram[@C=4.5em]{
{\R\x X}_{(1)} \aul[@<.1667em>]{d}{.4}{\pii{23}} \avl[@<-.1667em>]{d}{.4}{\pii{13}} \au{r}{\id}
& {\R\x X}_{(2)}\\
{\R\x\R\x X}_{(3)}\au{r}{\pi_1\x (\lt\o\pi_{12}) \x\pi_3} &
{\R\x\2\x X}_{(4)} \av{u}{\cmpl\o\pi_{13}}
& {\2}_{(5)} \avl{l}{.3}{\pii{2}}
}
\]
with $\T=\{S_5\}, t(S_5)=\{1\}$.
Then if $s\in\Gm(\S,\S',\M\|\T)$ and $s(S_1)=B$ we have
\begin{align*}
s(S_3)&=\{(a,b,x)\|(a,x), (b,x)\in B\},\\
s(S_4)&=\{(a, 1, x)\| (a,x)\in B, \existsa (b,x)\in B, a<b\},\\
s(S_2)&=\{(a, x)\in B\| \nexistsa (b,x)\in B, a<b\}.
\end{align*}
Thus $(\S,\S',\M, \T, t, S_1, S_2)$ represents $\Max$.

\subsection{Recursive Definition}
Consider the following diagram:
\begin{equation}
\diagram{
 V_{(1)}\aul{r}{.4}{\pii{2}} & {X\x V}_{(2)}\aux[@<.4ex>]{d}{\add}\\
X_{(3)} \aux{r}{\id} & X_{(4)} \au[@<.6ex>]{u}{\pii{1}}
}\label{dotsdiagram}
\end{equation}
Here, $\add: X\x V\to X$ is the parallel translation
$(x,w)\mapsto x+w$ in the Euclidean space $X$.

Suppose that $\T=\{S_1, S_3\}$ and that its cross section $t$ is defined by
\begin{equation}
t(S_1)=\{v\},\;\;\;t(S_3)=\{p\}, \label{sec:dots}
\end{equation}
where $p$ is a point in the Euclidean plane $X$ and $v$ is a vector in $V$.
Let $s$ be a cross section in $\Gm(\S,\S',\M\,|\,t)$.
Then, from \eqref{eq:section} we have
\begin{align}
s(S_2)&=\{(x,w)\| x\in s(S_4), w\in s(S_1)\},\nonumber\\
s(S_4)&=\{p\}\cup\{x+w\| (x,w)\in s(S_2)\}\nonumber\\
&=\{p\}\cup\{x+w\| x\in s(S_4), w\in s(S_1)\}\label{sec:dotsS4}
\end{align}

From \eqref{sec:dotsS4}, clearly $s(S_4)\supset D=\{p$, $p+v$, $p+2v$, $p+3v$, $\cdots\}$, i.e.,
$s(S_4)$ contains equally spaced points beginning at $p$ and separated by $v$.
However, this does not uniquely determine the cross section:
for instance, we can define $s(S_4)=X$;
or indeed any set that is the union of $D$ and a set invariant under the
translation by $v$.

To make it unique, we can take $\min_{S_4}\Gm(\S,\S',\M\,|\,t)$.
Then it only contains the cross section with $s(S_4)=D$.

Or we can use the following proposition.
Let $\N=\{0,1,2,3,\cdots\}$ denote the set of natural numbers.
\begin{prop}\label{grading}
Suppose that a set $S$ has a ``grading'' function $g:S\to\N$ and let $S_n$ denote $g\inv(n)$ for $n\in\N$.
Consider a map $\eta:\2^S\to \2^S$ that satisfies, for $i\in\N$,
\[
\eta(S_i)\subset S_{i+1},\hspace{2em} \eta(S)=\bigcup_{n=0}^\infty \eta(S_n).
\]
If $S$ can be written $S=S_0\cup \eta(S)$, then
\[
S=\bigcup_{n=0}^\infty \eta^n(S_0).
\]
\end{prop}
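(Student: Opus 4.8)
The plan is to show that the grading pins the fibres down exactly, giving $S_n=\eta^n(S_0)$ for every $n$, after which the conclusion is immediate. First I would record the structural facts that come for free from the grading: since $g:S\to\N$ is a function, its fibres $S_n=g\inv(n)$ are pairwise disjoint and cover $S$, so $S=\bigcup_{n=0}^\infty S_n$ with $S_i\cap S_j=\emptyset$ whenever $i\neq j$. In particular $S_0$ is disjoint from every $S_m$ with $m\geq1$, and (using $\eta(S_i)\subset S_{i+1}$) we get $\eta(S)=\bigcup_n\eta(S_n)\subset\bigcup_{m\geq1}S_m$, so $\eta(S)$ is itself disjoint from $S_0$.

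The heart of the argument is the exact identity $S_m=\eta(S_{m-1})$ for every $m\geq1$. To prove it I would intersect the hypothesis $S=S_0\cup\eta(S)$ with $S_m$. The left-hand side is $S_m$; on the right, $S_0\cap S_m=\emptyset$ since $m\geq1$, leaving $S_m=\eta(S)\cap S_m$. I then expand $\eta(S)=\bigcup_n\eta(S_n)$ and distribute the intersection to get $S_m=\bigcup_n\bigl(\eta(S_n)\cap S_m\bigr)$. Because $\eta(S_n)\subset S_{n+1}$ and distinct fibres are disjoint, every summand with $n+1\neq m$ vanishes, so $S_m=\eta(S_{m-1})\cap S_m$; and since $\eta(S_{m-1})\subset S_m$ already, this collapses to $S_m=\eta(S_{m-1})$. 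A one-line induction on $n$ then upgrades this to $S_n=\eta^n(S_0)$: the base case is $\eta^0(S_0)=\id(S_0)=S_0$, and the step is $S_m=\eta(S_{m-1})=\eta\bigl(\eta^{m-1}(S_0)\bigr)=\eta^m(S_0)$. Taking the union over $n$ gives $S=\bigcup_n S_n=\bigcup_n\eta^n(S_0)$, which is the claim (note that both inclusions of the final equation fall out at once, so no separate containment argument is needed).

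The step to be careful about is the derivation of the exact equality $S_m=\eta(S_{m-1})$, and in particular resisting a tempting but unavailable shortcut. One's first instinct is to try to push the inclusion $\eta^n(S_0)\subset S_n$ through an induction by applying $\eta$ to both sides, but this silently assumes $\eta$ is monotone, which is not among the hypotheses: $\eta$ is only a map of power sets and need not respect $\subset$. The resolution is that no monotonicity, and indeed no mere inclusion, is required. It is exactly the hypothesis $\eta(S)=\bigcup_n\eta(S_n)$, combined with the disjointness of the fibres and the grading condition $\eta(S_i)\subset S_{i+1}$, that forces the precise equality $S_m=\eta(S_{m-1})$, and it is this equality rather than an inequality that makes the induction valid for an arbitrary power-set map.
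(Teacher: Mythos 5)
Your proof is correct and follows essentially the same route as the paper's: both establish the exact equality $S_m=\eta(S_{m-1})$ from the disjointness of the fibres, the grading condition $\eta(S_i)\subset S_{i+1}$, and the hypothesis $S=S_0\cup\bigcup_n\eta(S_n)$, then iterate to get $S_n=\eta^n(S_0)$ and take the union. The only difference is presentational (you intersect with $S_m$ and distribute, the paper chases an element $x\in S_{n+1}$), so there is nothing substantive to add.
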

\begin{proof}
Since $S_n\neq S_m$ if $n\neq m$,
for $x\in S_{n+1}$ with $n\in\N$, $x\notin S_0$ and $x\notin \eta(S_m)\subset S_{m+1}$ if $m\neq n$.
Thus $x\in\eta(S_n)$ follows from
\[
S=S_0\cup \eta(S)=S_0\cup\bigcup_{n=0}^\infty \eta(S_n).
\]
Thus $S_{n+1}\subset\eta(S_n)$.
Since $\eta(S_n)\subset S_{n+1}$, it follows $\eta(S_n)=S_{n+1}$.
Therefore, $S_n=\eta(S_{n-1})=\eta(\eta(S_{n-2}))=\cdots=\eta^n(S_0)$.
The proposition follows from
\[
S=\bigcup_{n=0}^\infty S_n.
\]
\end{proof}

To use Proposition~\ref{grading}, we modify \eqref{dotsdiagram} as:
\begin{equation}
\diagram{
 V_{(1)}\aul{r}{.3}{\pii{2}} & {X\x V\x \N}_{(2)}\aux[@<.4ex>]{d}{\varphi}\\
{X\x\N}_{(3)} \aux{r}{\id} & {X\x\N}_{(4)} \au[@<.6ex>]{u}{\pii{13}} \au{r}{\pi_1}
& X_{(5)}
}\label{dotsdiagram2}
\end{equation}
and define $\varphi: X\x V\x\N\to X\x\N$ by
$
\varphi((x,w,k))=(x+w,k+1)
$
as well as modifying \eqref{sec:dots} to
$
t(S_1)=\{v\},\;t(S_3)=\{(p,0)\}.
$
Then \eqref{sec:dotsS4} becomes
\begin{align}
s(S_4)&=s(S_3)\cup\varphi(s(S_2))\nonumber \\
&=\{(p,0)\}\cup\{(x+w,k+1)\| (x,k)\in s(S_4), w\in s(S_1)\} \label{sS4form}.
\end{align}
We define $g:s(S_4)\to\N$ by $g((x,k))=k$ and
$\eta:\2^{s(S_4)}\to \2^{s(S_4)}$ by
\[
\eta(A)=\{(x+w,k+1)\|(x,k)\in A, w\in s(S_1)\}.
\]
Then $g$ and $\eta$ clearly satisfy the condition of Proposition~\ref{grading}.
Thus it follows from \eqref{sS4form} and the proposition that
\begin{align*}
s(S_4)&= \bigcup_{n=0}^\infty \eta(\{(p,0)\}) \\
&=\{(p,0), (p+v,1), (p+2v,2), (p+3v,3), \cdots\}.
\end{align*}

Thus $\Gm(\S,\S',\M\,|\,t)$ contains only this cross section $s$ with $s(S_5)=D$,
and $D$ is represented by $(\S,\S',\M, \T, t, S_5)$.

If we set $t(S_1)=\{v,-v\}$, then
\[
s(S_4)=\{(p,0)\}\cup\{(x+v,k+1)\| (x,k)\in s(S_4)\}\cup\{(x-v,k+1)\| (x,k)\in s(S_4)\}
\]
and thus
\[
s(S_5)=\{\cdots, p-3v, p-2v, p-v, p, p+v, p+2v, p+3v, \cdots\}.
\]
Moreover, if we set $t(S_1)=\{v,-v, u, -u\}$, then in general we get a grid points
as shown in Figure~\ref{fig:grid}(a).

\begin{figure}[t]
\begin{center}
\includegraphics[scale=.88]{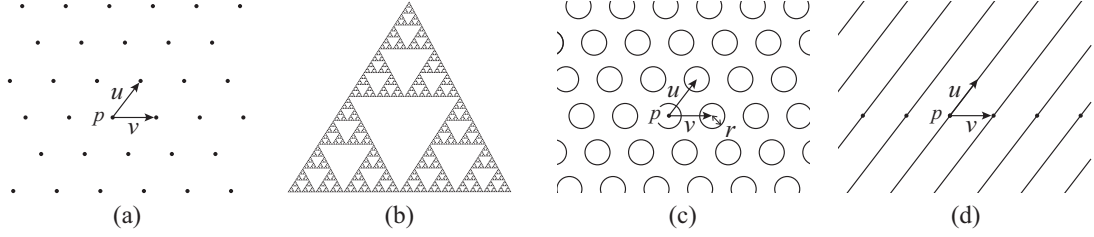}
\caption{(a) Recursively defined points. (b) Sierpinski triangle.
  (c) Hierarchically defined grid of circles. (d) Hierarchically defined lines.}
\label{fig:grid}
\end{center}
\end{figure}

\subsubsection{Sierpinski triangle}

Consider the diagram $(\S,\S',\M)$:
\[
\diagram[@C=4em]{
X_{(1)}\aul{r}{.4}{\pii{1}} & {X\x X}_{(2)} \au{r}{\pii{1}\o\sub} & {V\x\R}_{(3)} \au{d}{\sub\inv\o\mult}
 & {\R}_{(4)} \avl{l}{.4}{\pii{2}}\\
& X_{(5)} \au{r}{\pii{2}} \au{u}{\pii{2}} & {X\x X}_{(6)} \au{r}{\pi_1}  & X_{(7)}
}
\]
Here, let $\T=\{S_4, S_5\}$ and $t(S_4)=\left\{0.5\right\}, t(S_5)=\{p\}$.
Then for $s\in\Gm(\S,\S',\M\|t)$,
\begin{align*}
s(S_2)&=\{(x,p)\|x\in s(S_1)\}\\
s(S_3)&=\{(x-p,0.5)\|x\in s(S_1)\}\\
s(S_6)&=\{(y,p)\|0.5(x-p)=y-p, x\in s(S_1)\}\\
s(S_7)&=\{p+0.5(x-p)\| x\in s(S_1)\}.
\end{align*}
Let us define a map $d_p:X\to X$ for $p\in X$ by $d_p(x)=p+(x-p)/2$.
Then, the power map of $d_p$ is represented by $(\S,\S',\M, \T, t, S_1, S_7)$.

For three points $a, b$, and $c$ forming a triangle $T$ in $X$, let us define a map $\varphi_{abc}:2^X\to 2^X$ by
$\varphi_{abc}(A)=d_a(A)\cup d_b(A)\cup d_c(A)$ for $A\subset X$.
Obviously, $\varphi_{abc}$ can also be represented by a diagram.
Then, consider the diagram
\[
\diagram{
X_{(1)}\auxl{r}{.35}{\id\x 0} & {X\x\N}_{(2)}\arloopux{\varphi_{abc}'} \au{r}{\mathrm{cl}\o\pi_1} & {X}_{(3)}
}
\]
Here, $\varphi_{abc}'$ is $\varphi_{abc}$ that increments the second component and $\id\x 0$ maps
$A\subset X$ to $A\x\{0\}=\{(x,0)\|x\in A\}\subset X\x\N$.
Let $\T=\{S_1\}$ and let $t(S_1)=T$.
Then by Proposition~\ref{grading},
$s(S_2)= \bigcup_{n=0}^\infty \varphi_{abc}'(T\x\{0\})$.
Then $s\in\Gm(\S,\S',\M\|\T)$ gives as $s(S_3)$ the closure of $\bigcup_{n=0}^\infty \varphi_{abc}(T)$, which is
a fractal set known as the Sierpinski triangle.
See Figure~\ref{fig:grid}(b).

\subsection{Hierarchical Definition}
Combining \eqref{dia:circle} and \eqref{dotsdiagram2}, we consider $(\S,\S',\M)$:
\[
\diagram{
 V_{(1)}\aul{r}{.3}{\pii{2}} & {X\x V\x \N}_{(2)}\aux[@<.4ex>]{d}{\varphi} &
{\R}_{(3)} \au{r}{\len\inv} & {V}_{(4)} \au{d}{\sub\inv} \\
{X\x\N}_{(5)} \aux{r}{\id} & {X\x\N}_{(6)} \au[@<.6ex>]{u}{\pii{13}} \au{r}{\pi_1}
& X_{(7)} \aul{r}{.45}{\pii{2}} & {X\x X}_{(8)} \au{r}{\pi_1} & X_{(9)}
}
\]
The left-half comes from \eqref{dotsdiagram2} and produces the grid points in $S_7$,
which is the space of center points in the right-half \eqref{dia:circle}.
With $\T=\{S_1, S_3, S_5\}$ and
\[
t(S_1)=\{v,-v, u, -u\},\;\; t(S_3)=\{r\}, \;\;t(S_5)=\{(p,0)\},
\]
$(\S,\S',\M, \T, t, S_9)$ represents a grid of circles as shown in Figure~\ref{fig:grid}(c).

Similarly, with the diagram $(\S,\S',\M)$:
\[
\diagram{
V_{(1)} \aul{r}{.3}{\pii{2}} & {X\x V\x \N}_{(2)}\aux[@<.4ex>]{d}{\varphi} & V_{(3)} \aul{r}{.4}{\pii{1}} &
 {V\x\R}_{(4)} \aul{d}{.4}{\sub\inv\o\mult}\\
{X\x\N}_{(5)} \aux{r}{\id} & {X\x\N}_{(6)} \au[@<.6ex>]{u}{\pii{13}} \au{r}{\pi_1} &
X_{(7)} \aul{r}{.4}{\pii{2}}  & {X\x X}_{(8)}\au{r}{\pi_1} & X_{(9)}
}
\]
let $\T=\{S_1, S_3, S_5\}$ and
\[
t(S_1)=\{v,-v\},\;\; t(S_3)=\{u\}, \;\;t(S_5)=\{(p,0)\}.
\]
Then $(\S,\S',\M, \T, t, S_9)$ represents the lines that is parallel to $u$ and go through the points
$\{\cdots, p-3v, p-2v, p-v, p, p+v, p+2v, p+3v, \cdots\}$, as shown in Figure~\ref{fig:grid}(d).

\subsection{Two-part Coding}
\begin{figure}[t]
\begin{minipage}{.28\textwidth}
\[
\diagram{
{V}_{(1)}  \aul{r}{.4}{\pii{2}} & {X\x V}_{(2)} \au{d}{\add} \\
{X}_{(3)}  \aul{ur}{.4}{\pii{1}} & X_{(4)}
}
\label{dia:trans}
\]
\end{minipage}
\hfill
\begin{minipage}{.7\textwidth}
\centering
\includegraphics[scale=.99]{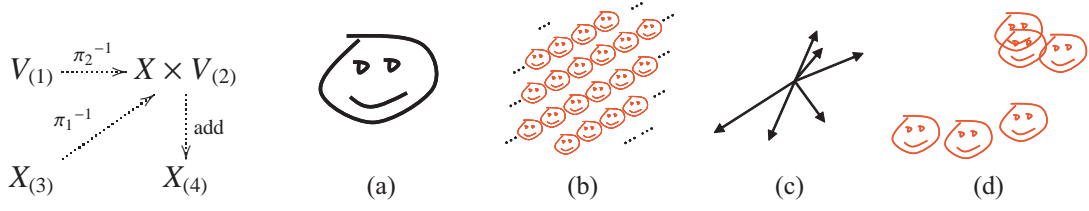}
\end{minipage}
\caption{A diagram to make repeated patterns with arbitrary patches. (a) A patch to be $B$: it can be any random data.
(b) Regularly repeated patches. (c) A set $A$ of random vectors. (d) Randomly repeated patches.
}\label{fig:smiley}
\end{figure}

It may seem that the representation can represent only very regular and simple objects such as geometric shapes like
 circles and lines,
except maybe by the trivial representation.
That is not the case.
It can represent the regular part of the object sparsely and the random part densely,
then mix them in various ways---intersection, union, hierarchically, and recursively---with
any computation, in fact, as we will see.
If there is regularity, it can be exploited.
For example, in the diagram on the left of Figure~\ref{fig:smiley},
we can take any pair of subsets $A\subset V$ and $B\subset X$
and consider a partial cross section $t$ such that $t(S_1)=A, t(S_3)=B$.
This gives us, for instance, repeated patterns with arbitrary patches by setting $A$
to be a regular grid by \eqref{dotsdiagram2} and $B$ to be the patch (Figure~\ref{fig:smiley}(b), with the patch (a)).
Using an arbitrary set $A$ as well (c), we can also have a randomly repeated patches (d).
This has still much more regularity than a completely random data.

Separating the information in an object into a regular part and the remaining random part is the basic idea
behind many formalisms such as Kolmogorov minimum sufficient statistic\cite{Li_Vitanyi_97} and minimum description
length (MDL)\cite{Rissanen78}.
Our representation can also utilize this basic idea, which is how simple objects can be represented more concisely
than random objects.

The representation of lines and circles by specifying points and vectors is simple;
With diagrams and cross sections, such simple representation and more complex ones involving any computation
can be uniformly embedded in the spaces that the objects reside.
We have shown a glimpse of such integration in the grid example and the Sierpinski triangle example.
In the next section, we discuss further the representation of computation by diagrams and their cross sections.

\section{Computation}\label{sec:computation}

With diagrams and their cross sections, we can represent computation.
In this section, we first examine a few examples of explicit representation of concrete computation.
We then show that any Turing machine can be represented by diagrams and cross sections.

\subsection{Examples}\label{compexample}

\subsubsection{Factorials}

Let $\suc:\N\to\N$ denote the successor function and $\mult:\N\x\N\to\N$ the multiplication.
Consider the following diagram $(\S,\S',\M)$:
\[
\diagram{
{\N\x \N}_{(1)} \aux{r}{\id} & {\N\x \N}_{(2)} \arlooprx{\varphi}
}
\]
where $\varphi=(\pi_1\x\mult)\circ((\suc\circ\pi_1)\x\pi_2)$ maps $(n,m)$ to $(n+1,m(n+1))$.
Suppose that $\T=\{S_1\}$ and that its cross section $t$ is defined by
\[
t(S_1)=\{(0,1)\}.
\]
Let $s$ be a cross section in $\Gm(\S,\S',\M\,|\,t)$.
Then we have
\begin{equation}
s(S_2)=s(S_1)\cup \varphi(s(S_2))=\{(0,1)\}\cup \varphi(s(S_2)).\label{s2s1s2}
\end{equation}

We define a grading function $g:\N\x\N\to\N$ by $g((n,m))=n$.
Then $g$ and $\varphi$ clearly satisfy the condition of Proposition~\ref{grading},
and it follows from \eqref{s2s1s2} that
\begin{equation}
s(S_2)= \bigcup_{n=0}^\infty \varphi(\{(0,1)\})
=\{(0,1), (1,1), (2,2), (3,6), \cdots, (n, n!), \cdots\}.\label{factorialsection}
\end{equation}

Therefore, $\Gm(\S,\S',\M\,|\,t)$ only contains the cross section $s$ defined by $s(S_1)=\{(0,1)\}$ and \eqref{factorialsection};
and $(\S,\S',\M, \T, t, S_2)$ represents the set of all pairs of natural number and its factorial.

We can extend the diagram to
\[
\diagram{
{\N\x \N}_{(1)} \aux{r}{\id} & {\N\x \N}_{(2)} \arlooprx{\varphi} \au{d}{\id}\\
{\N}_{(3)} \aul{r}{.4}{\pii{1}} & {\N\x \N}_{(4)} \au{r}{\pi_2} & {\N}_{(5)}
}
\]
Then, the (power map of the) factorial map is represented by $(\S,\S',\M, \T, t, S_3, S_5)$.

\subsubsection{Fibonacci Number}

Consider the following diagram $(\S,\S',\M)$:
\[
\diagram{
{\N^+\x \N^+}_{(1)} \aux{r}{\id} & {\N^+\x \N^+}_{(2)} \arloopdx{\varphi} \aul{r}{.6}{\pi_1} & {\N^+}_{(3)}
}
\]
where $\N^+$ is the set of positive integers, $\varphi=\pi_2\x\add$ with $\add :\N^+\x\N^+\to\N^+$ the addition.
Thus $\varphi((n,m))=(m,n+m)$.
Suppose that $\T=\{S_1\}$ and that its cross section $t$ is defined by $t(S_1)=\{(1,1)\}$.
Then if $s\in\Gm(\S,\S',\M\,|\,t)$ we have
\[
s(S_2)=\{(1,1)\}\cup\varphi(s(S_2)).
\]

If $(n,m)\in s(S_2)\setminus\{(1,1)\}$, it must be the case that $(n,m)\in\varphi(s(S_2))$
and $(m-n,n)\in s(S_2)$.
If this is not in $s(S_2)$, then $(2n-m,m-n)$ is in $s(S_2)$, and so on.
Since the sum of the two components decreases by this process, it cannot go on indefinitely
and has to stop by reaching $(1,1)$.
Thus
\[
s(S_2)= \bigcup_{n=0}^\infty \varphi(\{(1,1)\})=\{(1,1), (1,2), (2,3), (3,5), \cdots\}
\]
and $s(S_3)=\{1,1,2,3,5,8,\cdots\}$, which is the set of Fibonacci numbers.

\subsubsection{Mandelbrot Set}
Consider the diagram $(\S,\S',\M)$:
\[
\diagram[@C=4em]{
{\C\x\C\x\N}_{(1)} \arlooplx{\varphi} \au{r}{\pi_{12}} & {\C\x\C}_{(2)} \au{r}{\cmpl\o\pi_1} &  {\C_{(3)}}\\
{\C\x\C\x\N}_{(4)} \aux{u}{\id}  & {\C}_{(5)} \au{u}{\pii{2}}
}
\]
Here, $\C$ is the set of complex numbers.
The map $\varphi$ is defined by
\[
\varphi((c,z,k))=(c, c+z^2,k+1).
\]
Suppose that $\T=\{S_4,S_5\}$ and that its cross section $t$ is defined by
\begin{align*}
t(S_4)&=\{(c,0,0)\in\C\x\C\x\N\; | \;c\in\C\},\\
t(S_5)&=\{z\in\C\; | \;|z|>2\}
\end{align*}
Let $s$ be a cross section in $\Gm(\S,\S',\M\,|\,t)$.
Then we have
\[
s(S_1)=\varphi(s(S_1))\cup s(S_4)=\varphi(s(S_1))\cup \{(c,0,0)\in\C\x\C\x\N\; | \;c\in\C\}.
\]

Define $g:s(S_1)\to\N$ by $g((c,z,k))=k$.
Then by Proposition~\ref{grading},
\[
s(S_1)= \bigcup_{n=0}^\infty \varphi(s(S_4))
=\{(c,z^c_n,n)\in\C\x\C\x\N\| c\in\C, n\in\N\},
\]
where $z^c_n$ is defined by $z^c_0=0$ and $z^c_{n+1}=(z^c_n)^2+c$.

Thus $\Gm(\S,\S',\M\,|\,t)$ contains only one cross section $s$ given by:
\begin{align*}
s(S_1)&=\{(c,z^c_n,n)\in\C\x\C\x\N\| c\in\C, n\in\N\},\\
s(S_2)&=\{(c,z^c_n)\in\C\x\C\|c\in\C, n\in\N, |z^c_n|>2\},\\
s(S_3)&=\{c\in\C\|\nexistsa n\in\N, |z^c_n|>2\},\\
s(S_4)&=\{(c,0,0)\; | \;c\in\C\},\\
s(S_5)&=\{z\in\C\; | \;|z|>2\}.
\end{align*}

The Mandelbrot set is defined to be the set of complex numbers $c$ such that $z^c_n$ does not tend to infinity.
It is known that $z^c_n$ tends to infinity if and only if $|z^c_n|>2$ for some $n\in\N$.
Thus $s(S_3)$ is the Mandelbrot set.

\subsubsection{Sum}\label{subsec:sum}
Let $X$ be a finite set and let $(a_x)_{x\in X}$ be real numbers indexed by $X$.
Consider the following diagram $(\S,\S',\M)$:
\[
\diagram[@R=4em@C=3.7em]{
{X\x\R}_{(1)} \aux{r}{(s\o\pi_1)\x\pi_2} &
{\2^X\x\R}_{(2)} \ar@{.>}@(dl,ul)[d]_(.65){\pii{12}}
\ar@{.>}@<-.4ex>@(dr,ur)[d]^(.65){\pii{34}}
 \au{r}{\id} & {\2^X\x\R}_{(3)} \au{r}{\pi_2} & {\R}_{(4)}\\
{\2^X}_{(5)}  \au{ru}{\pii{1}\o\cmpl} \avl{r}{.35}{(\cap\o\pi_{13})\inv} &
{\2^X\x\R\x\2^X\x\R}_{(6)} \aux{u}{\varphi} & {\2^X}_{(7)} \au{u}{\pii{1}}
}
\]
where the maps are defined as:
\begin{align*}
s:X\ni x &\mapsto \{x\}\in\2^X\\
\cap:\2^X\x\2^X\ni (A,B) &\mapsto A\cap B\in\2^X\\
\varphi: \2^X\x\R\x\2^X\x\R\ni(A,x,B,y) &\mapsto (A\cup B, x + y)\in \2^X\x\R
\end{align*}

Suppose that $\T=\{S_5,S_7\}$ and that its cross section $t$ is defined by $t(S_5)=\{\emptyset\}$
 and $t(S_7)=\{X\}$.
Let $s$ be a cross section in $\Gm(\S,\S',\M\,|\,t)$ and assume
\[
s(S_1)=\{(x,a_x)\|x\in X\}.
\]
Then we have
\begin{align}
s(S_2)&=\left[\{(\{x\},a_x)\|x\in X\} \cup \varphi(s(S_6))\right]\cap \pii{1}(\cof s(S_5))\nonumber\\
&=\{(\{x\},a_x)\|x\in X\} \cup \{(A\cup B, a+b)\|(A,a,B,b)\in s(S_6), A\cup B\neq\emptyset\}\label{suminit}\\
s(S_6)&=\{(A,a,B,b)\|(A,a), (B,b)\in s(S_2), A\cap B=\emptyset\}
\end{align}
Let us denote $a_A = \sum_{x\in A}a_x$ for $A\subset X, A\neq\emptyset$.
\begin{prop}
\[
s(S_2)=\{(A, a_A)\|A\subset X, A\neq\emptyset\}.
\]
\end{prop}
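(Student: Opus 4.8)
The plan is to collapse the mutual recursion between $s(S_2)$ and $s(S_6)$ into a single fixed-point equation and then pin down its solution by a size induction on subsets of $X$, crucially exploiting that $X$ is finite. Substituting the displayed formula for $s(S_6)$ into \eqref{suminit} and writing $R$ for $s(S_2)$ gives
\begin{equation*}
R=\{(\{x\},a_x)\| x\in X\}\cup\{(A\cup B,\, a+b)\|(A,a),(B,b)\in R,\ A\cap B=\emptyset\}.
\end{equation*}
Before using this I would first note that every element of $R$ has nonempty first component: each is produced either as a singleton $(\{x\},a_x)$ or, by the side condition $A\cup B\neq\emptyset$ in \eqref{suminit}, with nonempty first component. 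Hence the pairs $(A,a),(B,b)$ feeding the union term already have $A,B\neq\emptyset$, so $A\cup B\neq\emptyset$ is automatic and may be dropped.

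Writing $P=\{(A,a_A)\|A\subset X,\ A\neq\emptyset\}$ for the claimed value, I prove $R=P$ by two inclusions. For $P\subset R$ I argue by induction on $|A|$: when $|A|=1$ the pair $(\{x\},a_x)$ lies in $R$ directly; for $|A|\geq 2$ I choose $x\in A$, set $A'=A\setminus\{x\}$, invoke the inductive hypothesis to get $(A',a_{A'})\in R$ together with $(\{x\},a_x)\in R$, and, since $A'\cap\{x\}=\emptyset$, the union term yields $(A,a_{A'}+a_x)=(A,a_A)\in R$ by additivity of $a_\bullet$ over disjoint unions.

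The inclusion $R\subset P$ is where the real content lies, since the fixed-point equation is built from unions and is therefore only monotone; a priori it could admit solutions strictly larger than $P$. Here I would run a minimal-counterexample argument that uses the fact that disjoint nonempty sets are strictly smaller than their union. Suppose some $(A_0,c_0)\in R$ is \emph{bad}, meaning $c_0\neq a_{A_0}$, and choose one with $|A_0|$ minimal. An element with singleton first component can only come from the base term, as a singleton is never a disjoint union of two nonempty sets, so every such element is good; hence $|A_0|\geq 2$, and $(A_0,c_0)$ must arise from the union term as $(A\cup B,\,a+b)$ with $(A,a),(B,b)\in R$, $A\cap B=\emptyset$, $A\cup B=A_0$, and $a+b=c_0$. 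Then $|A|,|B|<|A_0|$, so by minimality both are good, giving $a=a_A$, $b=a_B$, and thus $c_0=a_A+a_B=a_{A_0}$, contradicting badness. Hence $R\subset P$.

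The main obstacle is precisely this uniqueness step: one must rule out spurious fixed points, and the only thing making the argument go through is the well-foundedness coming from $|A|,|B|<|A\cup B|$ for disjoint nonempty $A,B$, together with the finiteness of $X$. For infinite $X$ both the argument and the statement break down, since $a_A$ would then be an infinite sum. Everything else, including the additivity $a_{A\cup B}=a_A+a_B$ for disjoint $A,B$ used in the $P\subset R$ direction, is routine.
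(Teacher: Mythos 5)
Your proposal is correct and takes essentially the same route as the paper: the paper proves $s_n=t_n$ for each cardinality $n$ by strong induction on the size of the first component, and your two inclusions---induction on $|A|$ for $P\subset R$, and a minimal-counterexample argument (strong induction in contrapositive form) for $R\subset P$---rest on exactly the same three facts the paper uses, namely that no element of $s(S_2)$ has empty first component, that disjoint nonempty sets are strictly smaller than their union, and that $A\mapsto a_A$ is additive over disjoint unions. The only difference is organizational (two separate inclusions versus a single levelwise equality), so there is nothing substantive to distinguish the two proofs.
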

\begin{proof}
Define $s_n=\{(A,a)\in s(S_2)\||A|=n\}$ and $t_n=\{(A, a_A)\|A\subset X, |A|=n\}$ for $n=1,2,\cdots,|X|$.
We use an induction on $n$ to prove $s_n=t_n$.
Suppose $s_1\ni (A,a)$.
If $(A,a)$ is not of the form $(\{x\},a_x)$ for some $x\in X$, it is in $\varphi(s(S_6))$ and there must be
$(B,b), (C,c)\in s(S_2)$ such that $A=B\cup C$ and $B\cap C=\emptyset$.
However, it is impossible since $|A|=1$ and there is no element in $s(S_2)$ of the form $(\emptyset,a)$.
Thus $s_1\subset t_n$.
Since $s_1\supset t_1$ by \eqref{suminit}, this proves the case $n=1$.
Now, suppose $n\geq 2$ and $s_n\ni (A,a)$.
Since $|A|\geq 2$, $(A,a)$ must be in $\varphi(s(S_6))$.
Then there exist $(B,b), (C,c)\in s(S_6)$ such that $A=B\cup C, B\cap C=\emptyset$, and $a = b+c$.
But by the induction hypothesis $b=a_B, c=a_C$ so $a = a_B+a_C=a_A$.
Thus $s_n\subset t_n$.
On the other hand, suppose $(A,a_A)\in t_n$ and $x\in A$.
Then $(\{x\},a_x)\in t_1 = s_1 \subset s(S_2)$ and
$(A\setminus\{x\}, a_{A\setminus\{x\}}) \in t_{n-1}=s_{n-1} \subset s(S_2)$
by the induction hypothesis.
Thus $(A,a_A)=\varphi((\{x\},a_x,A\setminus\{x\}, a_{A\setminus\{x\}}))\in s(S_2)$.
Thus $(A,a_A)\in s_n$ and $s_n\supset t_n$.
\end{proof}
It follows
\begin{align*}
s(S_3)&=\{(X,a_X)\}\\
s(S_4)&=\{a_X\}.
\end{align*}
Thus $(\S,\S',\M, \T, t, S_1, S_4)$ represents the map $\Sigma_X:\2^{X\x\R}\to\2^\R$ that satisfies
\[
\Sigma_X(\{(x,a_x)\|x\in X\})=\left\{\sum_{x\in X}a_x\right\}.
\]
For a general subset $A\subset X\x\R$, $\Sigma_X$ gives
\[
\Sigma_X(A)=\left\{\sum_{x\in X}f(x)\;\Biggr|\;f:X\to\R, \text{ s.t. }\forall x\in X, (x,f(x))\in A\right\}.
\]

\subsubsection{Markov Random Field}
A Markov random field (MRF) consists of an undirected graph $G = (V,E)$,
a finite set $L$ of labels, and an energy function $\E$ on the space $L^V$ of label assignments to vertices,
 or {\it configurations}.
The energy function must be of the form:
\[
\E(X) = \sum_{C \in \scr{C}} E_C(X),
\]
where $\scr{C}$ denotes the set of cliques in $G$ and $E_C$ a function on
$L^V$ with the property that $E_C(X)$ depends only on values of $X\in L^V$ on $C$.
An MRF is of the first-order if $E_C(X)$ is constant unless $C$ is a vertex or an edge, and thus
the energy can be written as:
\[
\E(X) = \sum_{v\in V} E_1(v,X(v)) + \sum_{(u,v)\in E} E_2(u,X(u),v,X(v))
\]
with $E_1:V\x L\to\R$ and $E_2:V\x L\x V\x L\to\R$.
We assume that $E_2(u,L,v,L')=0$ unless $(u,v)$ is an edge.
Solving an MRF involves finding the configuration $X$ with the minimum $\E(X)$.

Consider the following diagram $(\S,\S',\M)$:
\[
\diagram[@R=4em]{
& {V\x L + V\x L\x V\x L}_{(1)} \au{r}{\E'} &
  {{(V + V\x V)\x\R}_{(2)}} \aul{r}{.7}{\Sigma_{V + V\x V}} &
{\R}_{(3)} \av{d}{\ltsup}\\
{V\x L\x L}_{(4)} & {V\x L}_{(5)}
\ar@{.>}@(ul,ur)!UL+<.8333em,0em>;[l]!UR+<-.8333em,0em>_(.55){\pii{12}}
\ar@{.>}@(dl,dr)!DL+<.8333em,0em>;[l]!DR+<-.8333em,0em>_{\pii{13}}
\avl{l}{.45}{\cmpl\o\pi_{122}}
\au{r}{\pi_2}
\aul[@<.4ex>]{u}{.6}{(\id+\pi_{12})\inv}
\avl[@<-.4ex>]{u}{.6}{(\id+\pi_{34})\inv}
 & {V}_{(6)}  & {\R}_{(7)}
}
\]
Define the maps
\begin{align*}
E'_1=\pi_1\x E_1&: V\x L\to V\x\R\\
E'_2=\pi_1\x \pi_3\x E_2&: V\x L\x V\x L\to V\x V\x\R.
\end{align*}
Also, define the map
\[
\psi: V\x\R + V\x V\x \R\to(V + V\x V)\x\R
\]
by
\[
\psi=(\iota_1\circ\pi_1+\iota_2\circ\pi_{12})\x(\pi_2+\pi_3).
\]
Then
\[
\E'=\psi\circ(E'_1+E'_2):V\x L + V\x L\x V\x L\to (V + V\x V)\x\R
\]

The map $\Sigma_{V + V\x V}: (V + V\x V)\x\R\to\R$ is defined in \ref{subsec:sum}.
Also, define $\ltsup: 2^\R \to 2^\R$ by $\ltsup(A)=\{x\in\R\|\existsa y\in A, x\leq y\}$,
which can be represented by
\[
\diagram{
{\R} \aul{r}{.45}{\pii{2}}  & {\R\x\R} \aul{r}{.45}{\pi_1\x \leq} &
 {\R\x\2} \au{r}{\pi_1} & {\R}\\
& & {\2} \au{u}{\pii{2}}
}
\]
where the map $\leq:\R\x\R\to\2$ maps $(x,y)$ to $1$ if $x\leq y$ and $0$ otherwise,
and a partial cross section is defined by $s(\2)=\{1\}$.

Suppose that $\T=\{S_4, S_6\}$ and that its cross section $t$ is defined by $t(S_4)=\emptyset, t(S_6)=V$.
Let $s$ be a cross section in $\Gm(\S,\S',\M\,|\,t)$.
Then, from
\begin{align*}
s(S_4)&=\{(v,l_1,l_2)\| (v,l_1), (v,l_2)\in s(S_5)\}\cap \cof\{(v,l,l)\|(v,l)\in s(S_5)\}\\
&=\{(v,l_1,l_2)\| (v,l_1), (v,l_2)\in s(S_5), l_1\neq l_2\}\\
&=\emptyset
\end{align*}
it follows that there is at most one element $(v,l)\in s(S_5)$ for each $v\in V$.
From $s(S_6)=\pi_2(s(S_5))=V$, there must be one for each $v$, thus there is exactly one element
$(v,l)\in s(S_5)$ for each $v\in V$.
Thus, $s(S_5)$ defines a configuration $X_s \in L^V$ by defining $X_s(v)=l$ for each $(v,l)\in s(S_5)$.
Then,
\begin{align*}
s(S_1)=&\{(v,X_s(v))\|v \in V\}+\{(u,X_s(u),v,X_s(v)))\|u,v \in V\}\\
s(S_2)=&\{(v,E_1(v,X_s(v))))\|v\in V\}+\{((u,v),E_2(u,X_s(u),v,X_s(v))))\|u,v \in V\}\\
s(S_3)=&\{\E(X_s)\}\\
s(S_7)=&\{x\in\R\|x\leq\E(X_s)\}
\end{align*}
Thus by minimizing on $S_7$, $\min_{S_7}\hspace{-.1667em}\Gm(\S,\S',\M\,|\,t)$ gives the cross sections in $s(S_5)$
that give the configurations with the minimum energy in $s(S_3)$.
Obviously, higher order MRFs can be treated in the same manner.

\subsubsection{Finite Automaton}
Let $(Q, \Sigma, \delta, q_0, F)$ be a finite automaton, where $Q$ and $\Sigma$ are the finite
sets of states and symbols, respectively, and $\delta : \Sigma\x Q \to Q$ is the transition function,
$q_0 \in Q$ is the initial state, and $F \subset Q$ is the accepting subset of $Q$.

Consider the diagram $(\S,\S',\M)$:
\[
\diagram[@C=4em]{
{\Sigma^{*}}_{(1)} \au{rd}{\pii{1}} & {\Sigma^{*}\x Q\x\N}_{(2)}
\ar@{.)}@(r,rd)!R+<0em,-.25em>;!R+<0em,-.6667em>^{\delta^{*}} \au{r}{\pi_{12}}
&  {\Sigma^{*}\x Q}_{(3)}  \au{rd}{\pi_2}  & Q_{(4)} \avl{l}{.3}{\pii{2}} \\
{Q\x\N}_{(5)} \aul{r}{.4}{\pii{23}} & {\Sigma^{*}\x Q\x\N}_{(6)} \avxl{u}{.4}{\id} &
{\Sigma^{*}}_{(7)} \au{u}{\pii{1}} & {Q}_{(8)}
}
\]
Here, $\Sigma^{*}$ is the set of strings on $\Sigma$ and
\[
\delta^{*}:\Sigma^{*}\x Q\x\N \ni(\tau,q,k)\mapsto
\begin{cases}
(\tau,q,k+1) & \text{ if }\tau=\epsilon\\
(\cdr(\tau), \delta(\car(\tau), q),k+1) & \text{ otherwise}
\end{cases}
\]
where $\car(\tau)\in \Sigma$ is the first symbol
in the string $\tau$ and $\cdr(\tau)\in\Sigma^{*}$ is the rest of the string.

Suppose that $\T=\{S_1,S_4,S_5,S_7\}$ and that its cross section $t$ is defined by
\[
t(S_1)=\{\sigma\},\;\; t(S_4)=F,\;\; t(S_5)=\{(q_0,0)\},\;\; t(S_7)=\{\epsilon\},
\]
where $\sigma\in\Sigma^{*}$ is a string and $\epsilon$ denotes the empty string.

Let $s$ be a cross section in $\Gm(\S,\S',\M\,|\,t)$.
Then
\begin{align*}
s(S_6)&=\{(\sigma, q_0, 0)\},\\
s(S_2)&=s(S_6) \cup \delta^{*}(s(S_2)).
\end{align*}
Define $g:s(S_2)\to\N$ by $g((\tau,q,k))=k$.
Then by Proposition~\ref{grading},
\[
s(S_2)=\bigcup_{n=0}^\infty \delta^{*}(s(S_6))=\{(\delta^{*})^n((\sigma, q_0, 0))\|n=0,1,2,\cdots\}.
\]
Thus $s(S_2)$ is the whole execution history of the automaton, beginning with $(\sigma, q_0, 0)$.
Therefore, from
\begin{align*}
s(S_3)&=\{(\epsilon, q)\| (\epsilon, q, k)\in s(S_2), q\in F\},\\
s(S_8)&=\{q\| q\in F, \existsa (\epsilon, q) \in s(S_3)\},
\end{align*}
$s(S_8)$ is nonempty if and only if the automaton accepts $\sigma$.

\subsection{Turing completeness}

\subsubsection{Turing Machine}
Let $M=(Q, \Sigma, \Theta, \text{\textvisiblespace}, \delta, q_0, \qA, \qR)$ be a Turing machine, where $Q$ is
the finite set of states, $\Sigma$ is the input alphabet, $\Theta$ is the finite set of internal
symbols such that $\Sigma\subset\Theta$,
$\bs$ is the blank symbol in $\Theta\setminus\Sigma$,
$\delta : \Theta\x Q \to \Theta\x Q\x\{-1,1\}$ is the transition function, and
$q_0, \qA, \qR \in Q$ are the initial, accept, and reject states, respectively.

Consider the following diagram $(\S,\S',\M)$:
\begin{equation}
\diagram[@R=3.5em@C=4em]{
{\Sigma^{*}}_{(1)} \aul{rd}{.4}{\pii{1}} & {\Theta^{*}\x Q\x \N\x \N}_{(2)}
\ar@{.)}@(rd,d)!RD+<-1.6667em,0em>;!RD+<-2.5em,0em>^{\delta^{*}} \aul{r}{.6}{\pi_2} & {Q}_{(3)} \\
{Q\x \N\x \N}_{(4)} \aul{r}{.4}{\pii{234}} &
{\Theta^{*}\x Q\x \N\x\N}_{(5)} \avx{u}{\id} & Q_{(6)} \au{u}{\pii{2}}
}\label{TuringMachineDiagram}
\end{equation}
Among the components of $\Theta^{*}\x Q\x\N\x\N\ni(\sigma, q, n, k)$, the first three represent
a configuration of the machine, with $\sigma$ the content of the tape, $q$ the state of the machine,
and $n$ the position of the head.
The last component $k$ is a step number.
The map $\delta^{*}:\Theta^{*}\x Q\x\N\x\N\to\Theta^{*}\x Q\x\N\x\N$ is defined so that
it updates the configuration of the machine by one step, as follows.

For a string $\sigma$, $\sigma[n]$ denotes the $n$'th symbol in the string.
We use the zero-based index, so $\sigma[0]$ denotes the first symbol of $\sigma$.

First, we define
\begin{equation}
\eta:\Theta^{*}\x Q\x\N\x\N\x(\Theta\x Q\x\{-1,1\})\to\Theta^{*}\x Q\x\N\x\N.\label{TuringEta}
\end{equation}
Suppose $(\sigma, q, n, k, (x, q', v))\in\Theta^{*}\x Q\x\N\x\N\x(\Theta\x Q\x\{-1,1\})$.
If $q=\qA$ or $q=\qR$, $\eta(\sigma,q,n,k, (x, q', v))=(\sigma,q,n,k+1)$.
Otherwise, $\eta(\sigma,q,n,k, (x, q', v))=(\sigma',q',n',k+1)$, where $n'=n+v$ unless $n+v<0$, in which case $n'=0$.
As for $\sigma'$, it is the string made from $\sigma$ by (i) replacing $\sigma[n]$ by $x$ if $0\leq n \leq |\sigma|-1$,
or (ii) appending $x$ if $n=|\sigma|$, where $|\sigma|$ denotes the length of the string $\sigma$.

Next, define $\theta:\Theta^{*}\x Q\x\N \to \Theta\x Q\x\{-1,1\}$ by
\[
\theta(\sigma, q, n)=
\begin{cases}
\delta(\sigma[n], q) & \text{if }0\leq n\leq |\sigma|-1\\
\delta(\text{\textvisiblespace}, q) & \text{otherwise}
\end{cases}
\]
for $(\sigma, q, n)\in\Theta^{*}\x Q\x\N$.

Then $\delta^{*}$ is defined by
\[
\delta^{*}(\sigma, q, n, k) = \eta(\sigma, q, n, k, \theta(\sigma, q, n)).
\]

Suppose that $\T=\{S_1,S_4,S_6\}$ and that its cross section $t$ is defined by
\[
t(S_1)=\{\sigma\},\;\; t(S_4)=\{(q_0,0,0)\},\;\; t(S_6)=\{\qA, \qR\},
\]
where $\sigma\in\Sigma^{*}$.

Let $s$ be a cross section in $\Gm(\S,\S',\M\,|\,t)$.
Then
\begin{align*}
s(S_5)&=\{(\sigma, q_0,0,0)\},\\
s(S_2)&=\{(\sigma, q_0, 0, 0)\} \cup \delta^{*}(s(S_2)).
\end{align*}
Define $g:s(S_2)\to\N$ by $g((\tau,q,n,k))=k$.
Then by Proposition~\ref{grading},
\[
s(S_2)=\bigcup_{n=0}^\infty {\delta^{*}}^n(s(S_5))=\{(\delta^{*})^n((\sigma, q_0, 0, 0))\|n=0,1,2,\cdots\}.
\]
Thus $s(S_2)$ is the whole execution history of the Turing machine $M$, beginning with $(\sigma, q_0, 0, 0)$.
Therefore, if $M$ accepts the string $\sigma$, $s(S_3)=\{\qA\}$;
if it rejects $\sigma$, $s(S_3)=\{\qR\}$; and if it does not terminate,
$s(S_3)=\emptyset$.

\subsubsection{Nondeterministic Turing Machine}
The case of nondeterministic Turing machine is similar to the deterministic case.
Let $N=(Q, \Sigma, \Theta, \text{\textvisiblespace}, \delta', q_0, \qA, \qR)$ be a
nondeterministic Turing machine, where $Q$, $\Sigma$, $\Theta$, \textvisiblespace, $q_0$, $\qA$, and
$\qR$ are as before, and
$\delta' : \Theta\x Q \to \2^{\Theta\x Q\x\{-1,1\}}$ is the nondeterministic transition function
that gives the set of possible transitions.

The same diagram \eqref{TuringMachineDiagram} as the deterministic case will do with just a modification of the definition
of $\delta^{*}:\2^{\Theta^{*}\x Q\x\N\x\N}\to\2^{\Theta^{*}\x Q\x\N\x\N}$ as follows:
\[
\delta^{*}(A) = \{\eta(\sigma, q, n, k, \tau)\;|\:(\sigma, q, n, k)\in A, \tau\in\theta'(\sigma, q, n)\},
\]
where $\theta':\Theta^{*}\x Q\x\N \to \2^{\Theta\x Q\x\{-1,1\}}$ is defined by
\[
\theta'(\sigma, q, n)=
\begin{cases}
\delta'(\sigma[n], q) & \text{if }0\leq n\leq |\sigma|-1\\
\delta'(\text{\textvisiblespace}, q) & \text{otherwise}
\end{cases}
\]
and $\eta$ is as before \eqref{TuringEta}.

The cross section has $s(S_9)\ni \qA$ if and only if $N$ accepts the string $\sigma$.

\section{Information Measure of Structure}\label{sec:measure}

We define an information measure of general subsets.
The measure takes into account the structure of the set characterized by maps,
which we call the {\it structure maps}.
Thus, it is a measure of information relative to the structure maps.
Essentially, the measure is the size of the smallest diagram that represents the given subset,
in which the maps in the diagram can be written in terms of the structure maps.

\subsection{Maps Generating Diagrams}

\begin{defn}
Let $\M$ be a set of maps.
The set $\Mg$ of maps {\it generated by} $\M$ is defined as follows:
\begin{enumerate}[a)]

\item If $f$ is in $\M$, or it is $\id$, $\omega$, or a projection map $\pi_i$, then $f$ is in $\Mg$.

\item If maps $f:X\to Y, g:Y\to Z$ are in $\Mg$, then $g\circ f:X\to Z$ is also in $\Mg$.

\item If maps $f_i:X\to Y_i\;(i=1,\cdots, n)$ are in $\Mg$, then the product map
$f_1\x\cdots\x f_n:X\to Y_1\x\cdots\x Y_n$ is also in $\Mg$.
\end{enumerate}
\end{defn}

Note that for any $X$ and $x\in X$, $x:\1\to X$ is a map, which can be contained in the set $\M$ of maps in
this context.
When $\M$ contains all maps $x:\1\to X$, we write $X\subset\M$.

\begin{defn}
Let $\M$ be a set of maps.
For a map $f$ in $\Mg$, its {\it size} $|f|_\M$ relative to $\M$ is defined as follows:
\begin{enumerate}[a)]
\item
If $f$ is in $\M$, or it is $\id$, $\omega$, or a projection map $\pi_i$, then $|f|_\M=1$.

\item Otherwise, $f$ is either a concatenation or a product of some maps in $\Mg$.
In this case, $|f|_\M$ is $1$ plus the minimum of the sum of the size of component maps
relative to $\M$, among all possible combination, i.e.,
\begin{enumerate}[i)]
\item if $f\in\Mg$ is a concatenation,
\[
|f|_\M=1+\min_{\substack{g, h\in\Mg\\ f=g\o h}} (|g|_\M+|h|_\M),
\]
\item if it is a product,
\[
|f|_\M=1+\min_{\substack{f_1, \cdots, f_n\in\Mg\\ f=f_1\x\cdots\x f_n}} (|f_1|_\M+\cdots+|f_n|_\M),
\]
\end{enumerate}
\end{enumerate}
\end{defn}

\begin{defn}
Let $\MS$ be a set of maps.
A diagram $(\S,\S',\M)$ is said to be {\it generated by} $\MS$ if each map in $\M$ is either
\begin{enumerate}[a)]
\item the power map $f:\2^X\to\2^Y$ of a map $f:X\to Y$ in $\MSg$ or

\item the inverse map $f\inv:\2^Y\to\2^X$ of a map $f:X\to Y$ in $\MSg$.
\end{enumerate}
\end{defn}

\begin{defn}
Let $\MS$ be a set of maps.
A diagram $(\S,\S',\M)$ is said to be {\it generated by} $\MS$ {\it with complement} if each map in $\M$ is either
\begin{enumerate}[a)]
\item the power map $f:\2^X\to\2^Y$ of a map $f:X\to Y$ in $\MSg$,

\item the inverse map $f\inv:\2^Y\to\2^X$ of a map $f:X\to Y$ in $\MSg$, or

\item a complement map $\cmpl:\2^X\to \2^X$ for a set $X$ in $\S$.
\end{enumerate}
\end{defn}

\begin{defn}
Let $\MS$ be a set of maps and $(\S,\S',\M)$ a diagram generated by $\MS$ with or without complement.
For a map $\varphi$ in $\M$, its size $|\varphi|_{\MS}$ relative to $\MS$ is defined as follows:
\begin{enumerate}[a)]
\item if $\varphi=\cmpl$, $|\varphi|_{\MS}=1$.
\item If $\varphi=f$ or $\varphi=f\inv$ with $f\in\MSg$, then $|\varphi|_{\MS}=|f|_{\MS}$.
\end{enumerate}
\end{defn}

In the case of the circle example \eqref{dia:circle}, the diagram is generated by $\MS=\{\len, \sub\}$
and each map in the diagram has the size 1 relative to $\MS$.
The diagram \eqref{linediagram} for line  is generated by $\{\sub, \mult\}$.

\subsection{Definition}
We would like to define a measure of structural information of a subset $A\subset X$ relative to a
fixed set $\MS$ of structure maps by considering all data $(\S,\S',\M,\T,t,X)$ that represents $A$
and such that the diagram $(\S,\S',\M)$ is generated by $\MS$, and then taking the minimum of the
total size of the maps appearing in such diagrams.
However, such a measure is not very useful without any restriction,
since for any $A\subset X$ and any $\MS$, such minimum is zero because of the trivial representation
in which $\S=\T=\{X\}, \M=\emptyset,$ and $t$ is defined by $t(X)=A$.

To remedy this problem, we balance the data supplied by the partial cross section with their
Shannon information\cite{Shannon48}.
\begin{defn}\label{withshannon}
Let $X$ be a set, $A$ its subset, and $\MS$ a set of maps.
Also, let $\P$ be a set of probability spaces, each space $Y\in\P$ with a probability measure $P_Y$.
We define the structural information $I(A\,|\,\MS,\P)$ of $A$ relative to $\MS$ and $\P$ as the minimum of
\[
\sum_{\varphi\in\M}|\varphi|_{\MS}-\sum_{T\in\T} \log_2 P_T(t(T))
\]
among all data $(\S,\S',\M,\T,t,X)$ representing $A$ such that $(\S,\S',\M)$ is a diagram generated by $\MS$,
each set in $\T$ is in $\P$, and $t(T)$ is a measurable subset of each $T\in\T$.
If there does not exist such a diagram with which the sum is finite, we define $I(A\,|\,\MS,\P)=\infty$.
We define the measure $I^\mathrm{c}(A\,|\,\MS,\P)$ {\it with complement} similarly except that we
allow diagrams generated by $\MS$ with complement.
\end{defn}

Imagine that $X$ is itself a probability space and $X\in\P$.
Then the existence of trivial representation gives the upper bound:
\[
I(A\,|\,\MS,\P)\leq -\log_2 P_X(A).
\]

In general, the sets in $\T$ can be thought of as parameter spaces, and the Definition~\ref{withshannon}
allows for the treatment of information in an ensemble of objects, just like Shannon information
but taking the structure into account.
However, in this paper, we focus on the information in an individual object.
Thus, for the rest of this paper, we only consider the special case of $I(A\,|\,\MS,\P)$ with
$\P=\{\1\}$, making $\1$ a probability space with the probability measure $P_\1$ defined by
$P_\1(\emptyset)=0, P_\1(\1)=1$.
That is, we require $\T=\{\1\}$ and $t(\1)=\1$.

\begin{defn}\label{infomeasuredef}
The structural information measures $I(A\,|\,\MS)$ and $I^\mathrm{c}(A\,|\,\MS)$ are defined as
$I(A\,|\,\MS,\P)$ and $I^\mathrm{c}(A\,|\,\MS,\P)$ with $\P=\{\1\}$, respectively.
\end{defn}

For instance, the diagram for circle \eqref{dia:circle} can be modified to:
\[
\diagram{
{\1}_{(1)} \au{r}{r} & {\R}_{(2)} \au{r}{\len\inv} & {V}_{(3)} \aul{d}{.45}{\sub\inv} \\
{\1}_{(4)} \au{r}{p} & {X}_{(5)} \aul{r}{.4}{\pii{2}} & {X\x X}_{(6)} \aul{r}{.55}{\pi_1} & X_{(7)}
}
\]
The constants $r\in\R$ and $p\in X$ are identified with the maps $r:\1\to \R$ and $p:\1\to X$
with values $r(0)=r\in\R$ and $p(0)=p\in X$ (see \ref{notations}.)
Remember that we omit the $\1$ in diagrams.
Thus the diagram for line \eqref{linediagram} can be modified thus:
\[
\diagram{
V_{(1)} \aul{r}{.4}{\pii{1}} \aiul{v}
& {V\x\R}_{(2)} \aul{d}{.45}{\sub\inv\o\mult}\\
X_{(3)} \aul{r}{.4}{\pii{2}} \aiul{p} & {X\x X}_{(4)}\aul{r}{.55}{\pi_1} & X_{(5)}
}
\]
The diagram is generated by $\MS=\{p, v, \sub, \mult\}$.
The map $\sub\inv\circ\mult$ is actually not in $\MSg$.
However, in general, if we have $\varphi:\2^S\to\2^T$ and $\psi:\2^T\to\2^U$, we can always
think of $\raisebox{.291667em}{\xymatrix{S \au{r}{\psi\o\varphi} & U}}$ as an abbreviation of
$\raisebox{.291667em}{\xymatrix{S \au{r}{\varphi} & T \au{r}{\psi} & U}}$.

This way, visual patterns such as shown in Figure~\ref{fig:example1} and \ref{fig:grid},
as well as explained in Section~\ref{sec:geopat}, can be shown to have finite information.
The circle $C$ has the upper bound of $I(C\,|\{r,p,\len,\sub\})\leq 6$,
while the line $L$ has $I(L\,|\{p,v,\sub,\mult\})\leq 7$.
When we talk about Euclidean space, we should include all maps that characterize the space $X$ and
related spaces of $\R, V$, which would include $\len, \sub, \mult$ and also any constants.
Thus, the proper set of maps to estimate the structural information of subsets of Euclidean space
would be something like $\M_\mathrm{E}=\{\len, \sub, \mult\}\cup X\cup V\cup\R$.
The above estimates for the circle and line cases are unchanged: $I(C\,|\M_\mathrm{E})\leq 6,
I(L\,|\M_\mathrm{E})\leq 7$.

\section{Relation to Kolmogorov Complexity}\label{sec:RelKolm}

In this section, we show that the information measure defined in the previous section
generalizes Kolmogorov complexity.
Here, we are only concerned with binary strings;
an extension to other alphabet should be straightforward.
For a Turing machine $M$ with binary alphabet, let the partial function defined by $M$
be denoted by the same letter.
Also, let $|\sigma|$ denote the length of the string $\sigma$.

\begin{defn}
The Kolmogorov complexity $K_U(\sigma)$ of a binary string $\sigma\in\2^*$ with respect to
 a universal Turing machine $U$ is defined as
\[
	K_U(\sigma)=\min_{p\in\2^*, U(p)=\sigma} |p|.
\]
\end{defn}

\subsection{Emulating Turing Machines}

It turns out that any Turing machine can be emulated by a diagram generated by
$\MN=\{0,\suc\}$, where $\suc:\N\to\N$ is the successor function and $0:\1\to\N$,
if we represent a binary string $\sigma\in\2^*$ by the subset
\[
\bsigma=\{(i,\sigma[i])\| i=0,1,\cdots,|\sigma|-1\}\cup\{(|\sigma|,0),(|\sigma|,1)\} \subset \N\x\N.
\]

\begin{thm}\label{TMEmulate}
Let $M$ be a Turing machine with binary alphabet.
Then there exists a finite diagram $(\S,\S',\M)$ generated by $\MN=\{0,\suc\}$
and two sets $S=\N\x\N$ and $T=\N\x\N$ in $\S$ such that
$(\S,\S',\M, \{\1\}, s_\1, S, T)$ represents the map $\bar{M}:\2^{\N\x\N}\to\2^{\N\x\N}$ that
maps $\bsigma$ to $\bar{\tau}$ if $M$ accepts $\sigma$ leaving a string
$\tau$ on its tape, or to $\emptyset$ if $M$ rejects $\sigma$ or does not terminate
with input $\sigma$.
\end{thm}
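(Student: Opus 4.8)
The plan is to re-use the Turing-machine diagram \eqref{TuringMachineDiagram}, whose semantics --- the whole execution history obtained through Proposition~\ref{grading}, graded by the step number --- already emulates $M$; the only genuinely new content is to exhibit \emph{every} map occurring there as a power or inverse map of something in $\langle\MN\rangle$, and to replace the abstract sets $\Theta^{*}$ and $Q$ by arithmetic ones. First I would fix a G\"odel numbering of tapes, encoding each tape content as a single natural number, so that a configuration becomes an element of $\N\x\N\x\N\x\N$ (tape code, state code, head position, step number) and the one-step map $\delta^{*}$ becomes an operation on $\N^{4}$. The finite sets $Q$ and $\Theta$ are encoded by fixed constants $n=\suc^{n}\o 0:\1\to\N$, which lie in $\langle\MN\rangle$.

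The core of the argument is an arithmetic library built entirely from $\MN$. The point to stress is that $\langle\MN\rangle$ is weak: under composition and products it contains only the successor, the constants, identities, projections and $\omega$, and in particular \emph{no} recursively defined map. All recursion must therefore come from the fixed-point semantics of Definition~\ref{def:crosssection} together with Proposition~\ref{grading}, exactly as in the factorial and Fibonacci examples. Using recursive sub-diagrams I would represent, in turn: the full set $\N$ (from $\{0\}$ by iterating $\suc$); the predecessor; addition; multiplication; a pairing map and its two unpairing inverses; and comparison and equality. The last is the delicate point: equality is tested via the diagonal $\Delta=(\id\x\id)(\N)$, which is the power-map image of a product of identities, so that agreement of two components is obtained by intersecting with the inverse image $\pii{ij}(\Delta)$, all of whose ingredients are generated by $\MN$. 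Bounded and unbounded search are then handled by the same recursion device together with the minimal/limit cross sections. A short lemma that the composite $g\o f$ of two maps represented by diagrams generated by $\MN$ is again represented by such a diagram --- obtained by gluing the output set of one to the input set of the other --- lets these routines be chained freely.

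With the library in place I would encode the machine proper. The transition function $\delta:\Theta\x Q\to\Theta\x Q\x\{-1,1\}$ is a finite table, hence a finite union of constant maps, each branch selected by intersecting with the inverse image of the appropriate symbol/state constant; reading the scanned symbol, overwriting it, and appending at the right end are operations on the tape code assembled from the library. This realises $\delta^{*}:\2^{\N^{4}}\to\2^{\N^{4}}$ inside a diagram generated by $\MN$; iterating it through Proposition~\ref{grading}, graded by the step number, gives a unique cross section whose value is the full execution history, just as for \eqref{TuringMachineDiagram}. The input $\bsigma\subset\N\x\N$ is first converted to its tape code to seed the initial configuration, and the output is extracted by selecting the configurations in state $\qA$ (intersection with the inverse image of the constant $\qA$), reading off the tape code --- which is constant once $M$ has halted --- and decoding it back into the graph $\bar\tau\subset\N\x\N$; when no accepting configuration exists, that is, when $M$ rejects or diverges, this selection is empty and the output is $\emptyset$, as required.

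The main obstacle is the second paragraph: proving that the deliberately impoverished $\langle\MN\rangle$ really is enough, i.e.\ that the union construction together with Proposition~\ref{grading} supplies genuine primitive recursion and that the minimal/limit cross sections supply the unbounded search needed to capture every computable operation used above. The genuinely delicate representational points are the halting and equality tests through the diagonal and the correctness of the G\"odel-coding routines for reading, writing and appending symbols; once these are checked, assembling the emulator and verifying finiteness of $(\S,\S',\M)$ --- immediate, since $Q$, $\Theta$ and each library routine are finite --- is routine, and the result-extraction step mirrors the one already given for \eqref{TuringMachineDiagram}.
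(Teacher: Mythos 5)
Your proposal is correct in outline but takes a genuinely different route from the paper. The paper never arithmetizes the tape: it keeps each configuration extensional, representing the tape at step $k$ as a set of tuples $(i,x,q,k)\in\N\x\Tt\x\Qt\x\N$ (position, symbol, head/state marker, step), so that the only maps ever needed are products and composites of $0$, $\suc$, constants, identities and projections, together with their power and inverse maps. Its one-step map $\delta^*$ is assembled from set-level gadgets: the finite transition table is realized by a dedicated finite sub-diagram with one node per pair in $\Theta\x Q$ (Lemma~\ref{lemma:delta}), head motion by a map $\psi$, ``everywhere except the head'' by a map $\chi$, and---because the tape is spread over many elements---an accept-state propagation map $\phi$ so that the whole final row can be read off by intersecting with $\pii{3}\o\qA$; maps $\init$ and $\theta$ convert between $\bsigma$ and rows, with $\theta$ also stripping trailing blanks and restoring the end-marker. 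Your G\"odel-coding route instead packs a configuration into a single element of $\N^4$, which makes selecting and reading out the accepting configuration trivial and dispenses with the propagation trick, but it forces you to build an arithmetic library (addition, multiplication, pairing/unpairing, comparison, and division/remainder for symbol access) as recursive sub-diagrams via Proposition~\ref{grading}, plus encode/decode routines between $\bsigma$ and tape codes. That is all feasible---your graph-of-addition-style recursions, the diagonal trick for equality, and your gluing lemma for composition are sound (the paper implicitly uses the same gluing)---so your plan yields a valid, if substantially heavier, alternative proof.

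One point must be corrected, however. You invoke ``minimal/limit cross sections'' to supply unbounded search, and your final paragraph treats this as load-bearing. It cannot be used here: the theorem asserts that $\bar{M}$ is represented by the data $(\S,\S',\M,\{\1\},s_\1,S,T)$ in the sense of the paper's definition of representing a map, i.e.\ \emph{every} cross section in $\Gm(\S,\S',\M\,|\,s_\1)$ must satisfy $s(T)=\bar{M}(s(S))$; no minimization over designated nodes is available in that notion (representation ``as a limit'' is a different, weaker notion, defined only for subsets, not maps). Fortunately you do not need it: every operation in your library is primitive recursive, hence obtainable by graded recursion through Proposition~\ref{grading} alone, and the only unbounded process---the run of $M$ itself---is exactly the execution-history recursion, with acceptance detected afterwards by intersection with the inverse image of the constant $\qA$, which yields $\emptyset$ when $M$ rejects or diverges. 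Strike the appeal to $\min$, and make sure your decoder, like the paper's $\theta$, strips trailing blanks and appends the terminator pair $\{(|\tau|,0),(|\tau|,1)\}$ so that the output is literally $\bar{\tau}$; with those repairs the argument goes through.
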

\begin{proof}
Let $M=(Q, \2, \Theta, \bs, \delta, q_0, \qA, \qR)$, where $Q$ is
the finite set of states, $\Theta$ is the finite set of internal symbols such that $\2\subset\Theta$,
$\bs\in\Theta$ is the blank symbol,
$\delta:\Theta\x Q \to \Theta\x Q\x\2$ is the transition function, and
$q_0, \qA, \qR \in Q$ are the initial, accept, and reject states, respectively.
We define $\Qt=Q+\{\qe\}$ and $\Tt=\Theta+\{\terminal\}$.
Consider the following diagram $(\S,\S',\M)$:
\begin{equation}
\diagram{
{\N\x\N}_{(1)} \auxl{r}{.35}{\init} &
{\N\x\Tt\x\Qt\x\N}_{(2)}
\ar@{.)}@(ru,u)!RU+<-1.6667em,0em>;!RU+<-2.5em,0em>_(.2){\delta^{*}} \au{r}{\id} &
{\N\x\Tt\x\Qt\x\N}_{(3)} \aul{r}{.6}{\theta\o\pi_{12}} \aiur{\pii{3}\o\qA}
& {\N\x\N}_{(4)}\\
}\label{TMEmuDiagram}
\end{equation}
Here, $S_2$ is meant to represent the whole execution history of the Turing machine $M$.

Let us assume that $\sigma$ is a binary string and $s$ is a cross section in $\Gm(\S,\S',\M\|s_\1)$ with $s(S_1)=\bsigma$.
For $k\in\N$, let us call the set $R_k=\{(i,x,q,k)\in s(S_2)\|(i,x,q)\in\N\x\Tt\x\Qt\}$
the $k$'th row.
We call the $R_k$ {\it proper} if the following condition holds:
\begin{align*}
R_k&=R^1_k+R^2_k;\\
R^1_k&=\{(i,x_{ik},\qe,k)\|i=0,\cdots,l_k\}, x_{l_kk}=\terminal, x_{ik}\neq\terminal\text{ for }i<l_k;\\
R^2_k&=\{(i,x_{ik},q_k,k)\|i=m_k,\cdots,n_k\}, q_k\neq\qe, 0\leq m_k\leq n_k\leq l_k;
 \text{ if }q_k\neq\qA\text{ then }m_k=n_k.
\end{align*}
In a proper row, $x_{ik}$ represents the $i$'th symbol on the tape at $k$'th step in execution of
$M$, except for the last entry in the row, which is $\terminal$.
When $q_k\neq\qA$, the single element $(m_k,x_{m_kk},q_k,k)$ in $R^2_k$ indicates that $M$ is in state $q_k$
and its head is at position $m_k$.
When $q_k=q_A$ it means that the machine is in the accept state.

The map $\init$ sends the subset $\bsigma$ to
\begin{equation}\label{initmap}
\init(\bsigma)=\{(i,\sigma[i],\qe,0)\| i=0,\cdots,|\sigma|-1\}\cup\{(|\sigma|,\terminal,\qe,0)\}\cup\{(0,\sigma[0],q_0,0)\},
\end{equation}
setting up the $0$'th row properly.

The map $\delta^*:\2^{\N\x\Tt\x\Qt\x\N}\to\2^{\N\x\Tt\x\Qt\x\N}$
maps $R_k$ to $R_{k+1}$ with the following properties:
\begin{enumerate}
\item If $R_k$ is proper and $q_k\neq\qA$, $\delta^*$ updates the
configuration by emulating $M$, i.e., $R^1_k$ is copied to $R^1_{k+1}$ except that $x_{m_kk}$ is
mapped to $x_{m_kk+1}$ according to $M$'s tape rewriting,
and that the single element $(m_k, x_{m_kk}, q_k,k)$ of $R^2_k$ is mapped to $(m_{k+1}, x_{m_{k+1}k},q_{k+1},k+1)$
in $R^2_{k+1}$, where $m_{k+1}$ is the position of the head at step $k+1$.

\item If $k$'th row is proper and $q_k=\qA$, $R^1_k$ is copied to $R^1_{k+1}$ and $R^2_k$ is expanded.
That is, if
\[
R^2_k=\{(i,x_{ik},\qA,k)\|i=m_k,\cdots,n_k\},
\] then
\[
R^2_{k+1}=\{(i,x_{ik},\qA,k)\|i=\max(0,m_k-1),\cdots,\min(l_k, n_k+1)\}.
\]
\end{enumerate}
In Lemma \ref{deltastar}, we show that such a $\delta^*$ can be represented using a subdiagram.
If we define a grading map $g:\N\x\Tt\x\Qt\x\N\to\N$ by $g((i,x,q,k))=k$,
it is clear from the proof of the lemma that $\delta^*$ satisfies the requirement of
Proposition~\ref{grading}, which is
\[
\delta^*(S_i)\subset S_{i+1},\hspace{2em} \delta^*(S)=\bigcup_{n=0}^\infty \delta^*(S_n),
\]
where $S=\N\x\Tt\x\Qt\x\N$ and $S_i=g\inv(i)$.
Thus from the proposition we have
\[
s(S_2)=\bigcup_{k=0}^\infty {\delta^{*}}^k(\init(s(S_1)))
\]
We then have the following:
\[
s(S_3)=\{(i,x,q,k)\in s(S_2)\|q=\qA\}.
\]
If $M$ reaches the accept state at $k$'th step, $R^2_k$ would have one element with the accept state $\qA$.
The rows after that would have increasingly more $\qA$ without changing the symbols,
until $R^2_{k'}=\{(i,x_{ik},\qA,k')\|i=0,\cdots,l_k\}$.
The map $\theta:\2^{\N\x\Tt}\to\2^{\N\x\N}$ just fixes the termination of the string,
removing the blank symbols that might be left at the end of the string and properly terminating:
\begin{equation}
\begin{aligned}
\theta(A)=&\{(i,x)\|(i,x),(i+1,y)\in A,x,y\in\2\}\cup\{(0,0),(0,1)\|(0,y)\in A, y\in B\}\cup\\
&\{(i,x),(i+1,0),(i+1,1)\|(i,x),(i+1,y)\in A,x\in\2, y\in B\},
\end{aligned}\label{termmap}
\end{equation}
where $B=\{\bs,\terminal\}$.
The second part of the RHS is to take care of the case when the string is empty.
Thus, $s(S_4)$ contains the string that remains when $M$ reaches the accept state if it does,
and is empty if it does not.

Without loss of generality, we can assume that $Q=\n, \qe=n$ and $\Theta=\m, \terminal=m$.
Remember, in our notation, the finite subset $\{0,1,\cdots,n-1\}$ of $\N$ is denoted by $\n$.
Thus, constant maps such as $\qe, q_0$ are natural number constants.
Any natural number constant map $k$ can be made from $0$ and $\suc$ as $\suc^k\circ 0$.

It remains to prove that the maps  $\init$, $\theta$, and $\delta^*$ can be represented by a diagram
generated by $\MN$, which is done in the following lemmas.
\end{proof}

\begin{lem}
The map $\init:\2^{\N\x\N}\to\2^{\N\x\Tt\x\Qt\x\N}$ that satisfies
\eqref{initmap} can be represented by a diagram generated by $\MN$.
\end{lem}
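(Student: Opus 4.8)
The plan is to exhibit an explicit finite diagram and to check that its cross section sends the input $\bsigma$ to the set in \eqref{initmap}. The map $\init$ is not recursive, so the diagram I build will be acyclic: once a subset is placed on the domain copy of $\N\x\N$, the forward rules \eqref{eq:section} and \eqref{eq:section2} determine every other entry uniquely, and Proposition~\ref{grading} is not needed here. The right-hand side of \eqref{initmap} is a union of three sets, so I will put the codomain copy of $\N\x\Tt\x\Qt\x\N$ in $\S'$ and feed it by three round-headed maps, one producing the tape body $\{(i,\sigma[i],\qe,0)\mid i<|\sigma|\}$, one the end marker $\{(|\sigma|,\terminal,\qe,0)\}$, and one the head marker $\{(0,\sigma[0],q_0,0)\}$.

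The main obstacle is to separate the $|\sigma|$ genuine tape positions from the single doubled position $|\sigma|$ using only intersections together with images and preimages of $\pi_1,\pi_2,\suc$, since the complement map is not allowed. My key observation is that the full position set $P=\pi_1(\bsigma)=\{0,\dots,|\sigma|\}$ is obtained by the power map $\pi_1$, and that the data positions are exactly $\suc\inv(P)=\{i\mid i+1\in P\}=\{0,\dots,|\sigma|-1\}$: applying $\suc\inv$ discards precisely the maximal position $|\sigma|$. Hence the tape body elements are $D=\bsigma\cap\pii{1}(\suc\inv(P))=\{(i,\sigma[i])\mid i<|\sigma|\}$, and the product map $(i,b)\mapsto(i,b,\qe,0)$ attaches the constants. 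The doubled position is recovered by a second intersection: writing $A_b=\pi_1(\bsigma\cap\pii{2}(\{b\}))$ for $b=0,1$ for the positions carrying symbol $b$, the only position lying in both is the end position, so $A_0\cap A_1=\{|\sigma|\}$, and $i\mapsto(i,\terminal,\qe,0)$ gives the end marker. The head marker comes from $\bsigma\cap\pii{1}(\{0\})=\{(0,\sigma[0])\}$ followed by $(i,b)\mapsto(i,b,q_0,0)$, the singleton $\{0\}$ being the image of the constant $0:\1\to\N$ and $1=\suc\o 0$.

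Every coordinate used is a projection, $\suc$, or a natural-number constant $\suc^k\o 0$, so each arrow is the power map or the inverse map of a map generated by $\MN$, and there are only finitely many sets and maps; this is exactly what ``generated by $\MN$'' demands. The remaining care, more a bookkeeping nuisance than a genuine difficulty, is the finite coordinates: I identify $\Tt$ with $\{0,\dots,m\}$ and $\Qt$ with $\{0,\dots,n\}$ inside $\N$, so that the symbol slot of each product map is the corestriction of $\pi_2$ and the state slot is a natural-number constant, which is legitimate because every value placed there ($0,1,\terminal,\qe,q_0$) already lies in the finite range. With this in hand, verifying that the value on the codomain copy of $\N\x\Tt\x\Qt\x\N$ equals $\init(\bsigma)$ is a termwise comparison of the three unioned pieces against \eqref{initmap} for nonempty $\sigma$; the empty string, where position $0$ simultaneously serves as the end marker, is the sole degenerate case and is accommodated by reading the blank symbol at the head there.
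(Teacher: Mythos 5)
Your construction is correct wherever \eqref{initmap} is actually defined, and its two load-bearing tricks coincide with the paper's: the tape body is obtained as $\bsigma\cap\pii{1}(\suc\inv(\pi_1(\bsigma)))$ (this is literally the paper's pair of arrows $\id$ and $\pii{1}\o\suc\inv\o\pi_1$ into its $S_2$), and the end position is singled out as the unique position carrying both symbols $0$ and $1$ (the paper intersects $(\id\x 0)\inv(\bsigma)$ with $(\id\x 1)\inv(\bsigma)$; your $\pi_1(\bsigma\cap\pii{2}(\{b\}))$ for $b=0,1$ computes the same two sets using projections and inverse projections instead of inverses of $\id\x b$). Where you genuinely differ is the final assembly: you put the codomain copy of $\N\x\Tt\x\Qt\x\N$ in $\S'$ and union three separately completed pieces, whereas the paper keeps it outside $\S'$ and intersects three constraint sets, $\pii{12}$ of the symbol assignment, $\pii{13}$ of the state assignment, and $\pii{4}(\{0\})$, pushing the unions upstream (symbols: body $\cup$ end marker; states: $\qe$ everywhere $\cup$ $\{(0,q_0)\}$). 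Both assemblies are legitimate diagrams generated by $\MN$, and both gloss over the same bookkeeping of $\Tt$ and $\Qt$ as finite subsets of $\N$, which you at least make explicit. The one place the choice of assembly matters is the degenerate empty string: with your union assembly the head-marker node $\bsigma\cap\pii{1}(\{0\})$ equals $\{(0,0),(0,1)\}$, so your output contains $(0,0,q_0,0)$ and $(0,1,q_0,0)$ --- not a blank read, contrary to your closing remark --- and the resulting row is not proper in the sense used in Theorem~\ref{TMEmulate}; the paper's intersection assembly instead forces the head marker to carry whatever symbol actually sits at position $0$ (namely $\terminal$), so the row stays proper. Since \eqref{initmap} presupposes that $\sigma[0]$ exists, this does not invalidate your proof of the lemma as stated, but it is worth noting that the intersection-style assembly is what lets the construction survive empty input in the downstream emulation without any patch.
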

\begin{proof}
Consider the diagram
\[
\diagram[@C=5.5em]{
{\N\x\N}_{(1)} \au[@<0em>]{d}{(\id\x 1)\inv} \av[@<-0.333em>]{d}{(\id\x 0)\inv}
\ar@{.)}@(ur,ul)!UR+<-.5em,0em>;[rr]!UL+<0em,0em>_{\pi_1\times\qe}
\av[@<-.1667em>]{r}{\id} \au[@<.1667em>]{r}{\pii{1}\o\suc\inv\o\pi_1} &
{\N\x\N}_{(2)} \aux{d}{\id} &
{\N\x\Qt}_{(3)} \ajurx{0\x q_0} \au{d}{\pii{13}}\\
{\N}_{(4)} \aux{r}{\id\x\terminal}&
{\N\x\Tt}_{(5)} \aul{r}{.4}{\pii{12}} &
{\N\x\Tt\x\Qt\x\N}_{(6)}
\ar@{.>}@!+<-2.5em,1.8333em>;[0,0]+<-2em,.83333em>_(.0){\pii{4}\o 0}
}
\]
Let $\sigma$ be a binary string and $s$ a cross section of the diagram such that
$s(S_1)=\bsigma=\{(i,\sigma[i])\| i=0,1,\cdots,|\sigma|-1\}\cup\{(|\sigma|,0), (|\sigma|,1)\}$.
Then
\begin{align*}
s(S_2)&=\{(i,x)\|(i+1,x')\in s(S_1), x'\in\2, (i,x)\in s(S_1)\}=\{(i,\sigma[i])\| i=0,\cdots,|\sigma|-1\},\\
s(S_3)&=\{(i,\qe)\|(i,x)\in s(S_1), x\in\2\}\cup\{(0,q_0)\}=\{(i,\qe)\|i=0,\cdots,|\sigma|\}\cup\{(0,q_0)\},\\
s(S_4)&=\{i\in\N\|(i,0),(i,1)\in s(S_1)\}=\{|\sigma|\},\\
s(S_5)&=\{(i,\sigma[i])\|i=0,1,\cdots,|\sigma|-1\}\cup\{(|\sigma|,\terminal)\},\\
s(S_6)&=\{(i,\sigma[i],\qe,0)\| i=0,\cdots,|\sigma|-1\}\cup\{(|\sigma|,\terminal,\qe,0)\}\cup\{(0,\sigma[0],q_0,0)\}.
\end{align*}
Thus, $s(S_6)$ is the result of applying $\init$ to $s(S_1)$.
\end{proof}

\begin{lem}
The map $\theta:\2^{\N\x\Tt}\to\2^{\N\x\N}$ that satisfies
\eqref{termmap} can be represented by a diagram generated by $\MN$.
\end{lem}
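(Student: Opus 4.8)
The plan is to exhibit an explicit finite diagram, generated by $\MN=\{0,\suc\}$, with a source set $S_{\mathrm{in}}=\N\x\Tt$ carrying the input $A$ and a target set $S_{\mathrm{out}}=\N\x\N$ whose forced value is $\theta(A)$. Because the diagram I build is acyclic --- every set is computed forward from $A$ and from constants --- each cross section extending $s_\1$ is uniquely determined on $S_{\mathrm{out}}$, so Proposition~\ref{grading} is \emph{not} needed here (unlike for $\delta^{*}$). The basic gadgets, all lying in $\langle\MN\rangle$, are: the finite ``filter'' sets $\2=\{0,1\}$ and $B=\{\bs,\terminal\}$, each obtained as a union (a set placed in $\S'$) of natural-number constant maps $\suc^k\o 0$; the full set $\N$, obtained as $\omega\inv$ applied to $\1$; and the ``shift'' map $(\suc\o\pi_1)\x\pi_2\colon\N\x\Tt\to\N\x\Tt$ sending $(i,x)\mapsto(i+1,x)$, whose inverse map carries $A$ to $A'=\{(i,y)\|(i+1,y)\in A\}$ and thereby lets me inspect the symbol at the neighbouring position $i+1$.

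First I would reduce everything to position data living in $\N$. Intersecting $A$ (resp.\ $A'$) with the filters $\N\x\{b\}$, $\N\x\2$, $\N\x B$ --- each built by an inverse projection $\pii{2}$ from the gadgets above --- and then projecting by $\pi_1\colon\N\x\Tt\to\N$, yields the position sets $A_b=\{i\|(i,b)\in A\}$, $A_\2=\{i\|\existsa x\in\2,(i,x)\in A\}$, $M_B=\{i\|\existsa y\in B,(i,y)\in A\}$, $N_B=\{i\|\existsa y\in B,(i+1,y)\in A\}$, and $N_{\mathrm{bit}}=\{i\|\existsa y\in\2,(i+1,y)\in A\}$, all subsets of $\N$. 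Keeping the combinatorics in $\N$ is the device that avoids needing any map $\Tt\to\N$: every output pair $(i,b)$ with $b\in\2$ is rebuilt inside $\N\x\N$ as a product $P\x\{b\}=\pii{1}(P)\cap\pii{2}(\{b\})$ of a position set $P\subset\N$ with a bit constant $\{b\}\subset\N$.

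With these in hand the three clauses of \eqref{termmap} assemble directly. Clause~1 (keep $(i,b)$ when positions $i$ and $i+1$ both carry bits) is $\bigcup_{b\in\2}(A_b\cap N_{\mathrm{bit}})\x\{b\}$; the kept bits of Clause~3 are $\bigcup_{b\in\2}(A_b\cap N_B)\x\{b\}$ and its terminal caps are $\suc(A_\2\cap N_B)\x\2$; Clause~2 contributes $(\{0\}\cap M_B)\x\2$, which equals $\{(0,0),(0,1)\}$ exactly when some $(0,y)\in A$ has $y\in B$ and is empty otherwise. Taking the union of these six pieces in a final set of $\S'$ over $\N\x\N$ gives $s(S_{\mathrm{out}})=\theta(A)$ for every $A\subset\N\x\Tt$. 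Throughout, constant maps into the finite symbol set $\Tt$ are treated as natural-number constants exactly as in the preceding lemma, so that each map in the diagram is the power map or the inverse-power map of a member of $\langle\MN\rangle$.

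I expect the main obstacle to be conceptual rather than computational: it is the coupling between adjacent tape cells $i$ and $i+1$ that the clauses impose, which has to be realized using only the successor function. The shift $(\suc\o\pi_1)\x\pi_2$ and the reconstruction of output pairs as products $P\x\{b\}$ are the two ideas that both express this coupling and keep every intermediate object inside $\N$ or $\N\x\N$; once they are fixed, the clause-by-clause check that the forced cross section equals $\theta(A)$ is routine set-algebra of the same kind already carried out in the $\init$ lemma.
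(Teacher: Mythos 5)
Your proposal is correct: all the maps you invoke (projections and their inverse maps, the shift $(\suc\o\pi_1)\x\pi_2$ and its inverse, the power map of $\suc$, and natural-number constants $\suc^k\o 0$) are admissible in a diagram generated by $\MN$; the diagram is acyclic, so every cross section extending $s_\1$ is forced once $s(S_{\mathrm{in}})$ is fixed; and the union of your six pieces is exactly the right-hand side of \eqref{termmap}. The paper's own proof uses the same two primitives --- the successor shift to couple positions $i$ and $i+1$, and finite constant-built filter sets --- but organizes them differently: it shifts $A$ \emph{forward} into $s(S_5)=\{(i+1,x)\|(i,x)\in A\}\cup\{(0,0)\}$, intersects two inverse projections to form the adjacency-triple set $s(S_2)=\{(j,x,y)\|(j,x)\in s(S_5),\,(j,y)\in A\}\subset\N\x\Tt\x\Tt$, filters these triples against the constant sets $\2\x B$ and $\2\x\2$, and finally maps down to $\N\x\N$ through $\alpha=((\suc\o\pi_1)\x\pi_2)\inv\o\pi_{12}$ together with $\pi_1\x 0$ and $\pi_1\x 1$, a union of four pieces. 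The main structural difference is the treatment of the empty-string clause: the paper's seed element $(0,0)$ in $S_5$ lets clause 2 of \eqref{termmap} fall out of the same pipeline as clause 3 (the triple $(0,0,y)$ with $y\in B$ survives the $\2\x B$ filter, is annihilated by $\alpha$ since $0$ has no predecessor, and is caught by $\pi_1\x 0$ and $\pi_1\x 1$), whereas you must handle it by the separate piece $(\{0\}\cap M_B)\x\2$. What your flattening buys is transparency and modularity: every intermediate object lives in $\N$ or $\N\x\N$, each clause is visibly a product of a position set with a bit constant, and no auxiliary triple space is needed; the price is a slightly larger diagram (six union pieces plus the extra intersection nodes, against the paper's four incoming maps at the output set). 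Both constructions meet the definition of representing $\theta$, so there is no gap in your argument.
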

\begin{proof}
Consider the diagram
\[
\diagram[@R=3em@C=4em]{
{\N\x\Tt}_{(1)} \auxl{d}{.3}{(\suc\o\pi_1)\x\pi_2} \aul{r}{.45}{\pii{13}}&
{\N\x\Tt\x\Tt}_{(2)} \au{r}{\id} \au{dr}{\id} &
{\N\x\Tt\x\Tt}_{(3)}
\aux{rd}{\alpha}
\ar@{.)}@(dr,l)!D+<0em,0em>;[dr]!L+<0em,.5em>_(.2){\pi_1\x 0}
\ar@{.)}@(r,ul)!D+<2.8333em,.333em>;[dr]!L+<.8333em,1em>^(.7){\pi_1\x 1}
&
{\Tt\x\Tt}_{(4)}
\ar@{.)}@!+<-1em,1.8333em>;[0,0]+<-1.5em,.8333em>^(.0){(0,\bs)}
\ar@{.)}@!+<1.5em,1.8333em>;[0,0]+<1em,.83333em>^(.0){(1,\bs)}
\ar@{.)}@!+<2.8333em,0em>;[0,0]+<2em,0em>_(.0){(0,\terminal)}
\ar@{.)}@!+<2.8333em,-1.5em>;[0,0]+<1.6667em,-1em>_(.0){(1,\terminal)}
\avl{l}{.4}{\pii{23}}\\
{\N\x\Tt}_{(5)} \av{ru}{\pii{12}}
\ar@{.)}@!+<1.5em,-1.8333em>;[0,0]+<1em,-0.8333em>_(.0){(0,0)} &
{\Tt\x\Tt}_{(6)}
\ar@{.)}@!+<-1em,1.8333em>;[0,0]+<-1.5em,.8333em>^(.0){(0,0)}
\ar@{.)}@!+<1.5em,1.8333em>;[0,0]+<1em,.8333em>^(.0){(0,1)}
\ar@{.)}@!+<-1em,-1.8333em>;[0,0]+<-1.5em,-.8333em>_(.0){(1,0)}
\ar@{.)}@!+<1.5em,-1.8333em>;[0,0]+<1em,-.8333em>_(.0){(1,1)}
\au{r}{\pii{23}}&
{\N\x\Tt\x\Tt}_{(7)} \avx{r}{\alpha}&
{\N\x\N}_{(8)}
}
\]
Here, $\alpha=((\suc\circ\pi_1)\x\pi_2)\inv\circ\pi_{12}$.
Then, for a cross section $s$ of the diagram with $s(S_1)=A$,
\begin{align*}
s(S_5)=&\{(i+1,x)\|(i,x)\in A\}\cup\{(0,0)\},\\
s(S_2)=&\{(i+1,x,y)\|(i,x), (i+1,y)\in A\}\cup\{(0,0,y)\|(0,y)\in A\},\\
s(S_4)=&\{(x,y)\|x\in\2, y\in B\},\\
s(S_3)=&\{(i,x,y)\in s(S_2)\|x\in\2, y\in B\}\\
=&\{(i+1,x,y)\|(i,x), (i+1,y)\in A, x\in\2, y\in B\}\cup\{(0,0,y)\|(0,y)\in A, y\in B\},\\
s(S_6)=&\{(x,y)\|x,y\in\2\},\\
s(S_7)=&\{(i,x,y)\in s(S_2)\|x,y\in\2\}\\
=&\{(i+1,x,y)\|(i,x), (i+1,y)\in A, x,y\in\2\}\cup\{(0,0,y)\|(0,y)\in A, y\in\2\}\\
s(S_8)=&\{(i,x),(i+1,0),(i+1,1)\|(i,x),(i+1,y)\in A,x\in\2, y\in B\}\cup\\
&\{(0,0),(0,1)\|(0,y)\in A, y\in B\}\cup\{(i,x)\|(i,x),(i+1,y)\in A,x,y\in\2\},
\end{align*}
where $B=\{\bs,\terminal\}$.
Thus, $s(S_8)$ is the result of applying $\theta$ to $s(S_1)$.
\end{proof}

\begin{lem}\label{lemma:delta}
Let $\delta$ be a map from $\m\x\n$ to $\m\x\n\x\2$.
Then $\delta$ can be represented by a diagram generated by $\MN$.
\end{lem}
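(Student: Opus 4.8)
The plan is to exploit the fact that $\delta$ is a completely arbitrary map between finite sets, so it carries no structure worth compressing; the only option is to encode it by listing its graph
\[
\Gamma=\{(a,b,\delta(a,b))\| (a,b)\in\m\x\n\}\subset\m\x\n\x\m\x\n\x\2,
\]
a finite constant set, and then to read off $\delta(A)$ from $\Gamma$ for an arbitrary input $A\subset\m\x\n$. The key observation is that application of a function to a subset factors as a selection followed by a projection: writing $\pi_{12}$ for the projection onto $\m\x\n$ and $\pi_{345}$ for the projection onto $\m\x\n\x\2$ (both viewed on the five-fold product), one has $\delta(A)=\pi_{345}(\Gamma\cap\pii{12}(A))$, since $\Gamma\cap\pii{12}(A)=\{(a,b,\delta(a,b))\|(a,b)\in A\}$.

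Concretely, I would take the diagram with $S_1=\m\x\n$ (the source), $S_2=S_3=\m\x\n\x\m\x\n\x\2$ as two members of the family that are equal as sets, and $S_4=\m\x\n\x\2$ (the target), declaring $S_3\in\S'$ and all other sets outside $\S'$. Into $S_3$ I feed one constant map $\1\to S_3$ with value $(a,b,\delta(a,b))$ for each of the $mn$ pairs $(a,b)\in\m\x\n$; since $S_3\in\S'$, its cross section is the union of these points, so $s(S_3)=\Gamma$. I connect $S_1$ to $S_2$ by $\pii{12}$ and $S_3$ to $S_2$ by $\id$; as $S_2\notin\S'$ we get $s(S_2)=\pii{12}(s(S_1))\cap s(S_3)=\Gamma\cap\pii{12}(A)$. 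Finally the map $\pi_{345}:S_2\to S_4$ yields $s(S_4)=\delta(A)$. Because the diagram is acyclic with $S_1$ the only node having $\In(S_1)=\emptyset$, the value $s(S_1)=A$ determines every other $s(S_i)$ uniquely, so every cross section in $\Gm(\S,\S',\M\|s_\1)$ satisfies $s(S_4)=\delta(s(S_1))$; this is exactly what it means for $(\S,\S',\M,\{\1\},s_\1,S_1,S_4)$ to represent $\delta$.

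It then remains to check that the diagram is finite and generated by $\MN$. The maps $\pi_{12}$ and $\pi_{345}$ are products of projection maps and hence lie in $\langle\MN\rangle$, so $\pii{12}$ is an admissible inverse map and $\pi_{345}$ an admissible power map, while $\id$ is always allowed. Each constant $\1\to\m\x\n\x\m\x\n\x\2$ is the product of five natural-number constants of the form $\suc^k\o 0$, each built from $0$ and $\suc$, so it too lies in $\langle\MN\rangle$; here I use, as in the proof of Theorem~\ref{TMEmulate}, that $\m,\n,\2$ are finite subsets of $\N$ whose elements are the corresponding constants. Since $\m\x\n$ is finite, only finitely many such constants occur, so the diagram is finite.

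I expect the only real subtlety — rather than a genuine obstacle — to be the bookkeeping verifying that the union over the $mn$ point-constants reproduces $\Gamma$ exactly and that the subsequent intersection with $\pii{12}(A)$ selects precisely the rows $(a,b,\delta(a,b))$ with $(a,b)\in A$. Once this is confirmed, the projection $\pi_{345}$ delivers $\delta(A)$ for every $A$, and the uniqueness of the cross section follows from the acyclicity of the diagram.
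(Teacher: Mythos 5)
Your construction is correct, but it takes a genuinely different route from the paper's. Both proofs rest on the same selection-plus-projection identity $\delta(A)=\pi_{345}\bigl(\Gamma\cap\pii{12}(A)\bigr)$, and your use of tuple-valued constants (products of numerals $\suc^k\o 0$) is legitimate by the paper's own conventions, since its other lemmas freely use such constants (e.g.\ $0\x q_0$ into $\N\x\Qt$, and $(0,\bs)$, $(0,\terminal)$ into $\Tt\x\Tt$). The difference lies in how the finite graph $\Gamma$ is synthesized from $\{0,\suc\}$. You form $\Gamma$ inside a single set of $\S'$ as the union of $mn$ constant maps, each of which rebuilds its five coordinates from scratch as $\suc^k\o 0$; since such a numeral has size $O(k)$, each constant costs $O(m+n)$ and your diagram has size $O\bigl(mn(m+n)\bigr)$. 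The paper instead materializes each singleton $\{i\}$, $i<\max(m,n)$, exactly once, as the cross section of a set $S_{\mathrm{A},i}=\N$ in a $\suc$-chain starting at the constant $0$, and then pins each entry $(i,j,\delta(i,j))$ in its own set $S_{\mathrm{B},i,j}=\m\x\n\x\m\x\n\x\2$ by intersecting the size-$1$ inverse projections $\pii{1},\dots,\pii{5}$ of these shared singletons with $\pii{12}$ of the input, taking the union at the target through the maps $\pi_{345}$; this sharing yields total size $O(mn)$. Both versions prove the lemma as stated, and either suffices for Theorem~\ref{KCTheorem}, where the machine-dependent part of the diagram is absorbed into the constant $c_U$; but the corollary following that theorem, which bounds the diagram emulating a Turing machine $M$ by $O(|\Theta_M||Q_M|)$, relies on the paper's $O(mn)$ count and would degrade to $O\bigl(|\Theta_M||Q_M|(|\Theta_M|+|Q_M|)\bigr)$ under your construction. (One minor slip: $S_1$ is not the only set with no incoming maps---$\1$ has none either---but uniqueness of the cross section still holds because $s(\1)=\1$ is fixed by the partial cross section.)
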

\begin{proof}
Let $k$ be the larger of $n$ and $m$.
The diagram has three parts: $S_{\mathrm{A},i}=\N$ for $i = 0,\cdots,k-1$;
$S_{\mathrm{B},i,j}=\m\x\n\x\m\x\n\x\2$ for $(i,j)\in\m\x\n$;
and $S_{\mathrm{C},0}=\m\x\n, S_{\mathrm{C},1}=\m\x\n\x\2$.

The first part is as follows:
\[
\diagram{
{\N}_{(\mathrm{A},0)} \aiul{0} \aul{r}{.45}{\suc} & {\N}_{(\mathrm{A},1)} \au{r}{\suc}  &
{\cdots} \aul{r}{.35}{\suc} & {\N}_{(\mathrm{A},k-1)}
}
\]
This part has no incoming arrows from other parts.
Thus, any cross section $s$ has $s(S_{\mathrm{A},i})=\{i\}$ for $i=0,1,\cdots,k-1$.

The other parts are organized thus:
\[
\diagram[@R=1em]{
{\m\x\n}_{(\mathrm{C},0)} \aul{r}{.35}{\pii{12}} \aul{rdd}{.45}{\pii{12}} &
 {\m\x\n\x\m\x\n\x\2}_{(\mathrm{B},0,0)}  \aux{rdd}{\pi_{345}}\\
 & {\vdots} \\
 & {\m\x\n\x\m\x\n\x\2}_{(\mathrm{B},m-1,n-1)} \auxl{r}{.6}{\pi_{345}} &
 {\m\x\n\x\2}_{(\mathrm{C},1)}
}
\]
with maps, for each $(i,j)\in\m\x\n$, as follows:
\[
\diagram[@C=4em]{
{\N}_{(\mathrm{A},i)} \aul{r}{.3}{\pii{1}} &
{\m\x\n\x\m\x\n\x\2}_{(\mathrm{B},i,j)} &
{\N}_{(\mathrm{A},\delta_1(i,j))} \avl{l}{.3}{\pii{3}} \\
{\N}_{(\mathrm{A},j)} \aul{ur}{.4}{\pii{2}} &
{\N}_{(\mathrm{A},\delta_2(i,j))} \av{u}{\pii{4}} &
{\N}_{(\mathrm{A},\delta_3(i,j))} \avl{ul}{.4}{\pii{5}}
}
\]
where $\delta(i,j)=(\delta_1(i,j),\delta_2(i,j),\delta_3(i,j))$.

Let $s$ be a cross section of this diagram.
Then for each $(i,j)\in\m\x\n$,
\begin{align*}
s(S_{\mathrm{B},i,j})&=
\pii{12}(s(S_{\mathrm{C},0}))\cap \pii{1}(s(S_{\mathrm{A},i}))\cap \pii{2}(s(S_{\mathrm{A},j}))\\
&\hspace{5em}\cap \pii{3}(s(S_{\mathrm{A},\delta_1(i,j)}))\cap
\pii{4}(s(S_{\mathrm{A},\delta_2(i,j)})) \cap \pii{5}(s(S_{\mathrm{A},\delta_3(i,j)})) \\
&=\{(i,j,\delta_1(i,j),\delta_2(i,j),\delta_3(i,j))\in\m\x\n\x\m\x\n\x\2\|(i,j)
\in s(S_{\mathrm{C},0})\}\\
&=
\begin{cases}
\{(i,j,\delta_1(i,j),\delta_2(i,j),\delta_3(i,j))\} & \text{if }(i,j)\in s(S_{\mathrm{C},0})\\
\emptyset & \text{if }(i,j)\notin s(S_{\mathrm{C},0})
\end{cases}
\end{align*}
and thus
\begin{align*}
s(S_{\mathrm{C},1})&=\bigcup_{(i,j)\in s(S_{\mathrm{C},0})}\pi_{345}(s(S_{\mathrm{B},i,j}))\\
&=\{\delta(i,j)\|(i,j)\in s(S_{\mathrm{C},0})\}=\delta(s(S_{\mathrm{C},0})).
\end{align*}
\end{proof}

\begin{lem}\label{psi}
The map $\psi:\2^{\N\x\Tt\x Q\x\2\x\N}\to\2^{\N\x Q\x\N}$ defined by
\begin{align*}
\psi(A)&=\{(0,q,k)\| (0,x,q,0,k)\in A\}\\
&\hspace{1em}\cup\{(i+1,q,k)\|(i,x,q,1,k)\in A\}\cup\{(i,q,k)\|(i+1,x,q,0,k)\in A\}
\end{align*}
can be represented by a diagram generated by $\MN$.
\end{lem}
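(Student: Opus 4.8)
The plan is to realize the diagram as three parallel branches, one for each set in the union defining $\psi$, all feeding into a single target set $W=\N\x Q\x\N$ that we place in $\S'$, so that Definition~\ref{def:crosssection} forces $s(W)$ to be exactly the union of the three branch images. The source set $S_{\text{in}}=\N\x\Tt\x Q\x\2\x\N$ carries $A$, and the only real work is to (i) select elements by the value of the direction coordinate $\pi_4\in\2$, (ii) relabel the position coordinate by $\pm 1$ with clamping at $0$, and (iii) discard the symbol coordinate $\pi_2$, using solely power maps and inverse maps of members of $\langle\MN\rangle$. Since every finite-set constant $0,1$ is $\suc^k\o 0\in\langle\MN\rangle$, selection is available through intersection against preimages of constant singletons.

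Concretely, I would place a copy $V_1$ of $S_{\text{in}}$ receiving $\id$ from $S_{\text{in}}$ and $\pii{4}$ from a set carrying $\{1\}$, so $s(V_1)=A\cap\pii{4}(\{1\})=\{(i,x,q,1,k)\in A\}$, and likewise $V_0$ using $\{0\}$ to obtain $\{(i,x,q,0,k)\in A\}$. The right-move branch then sends $V_1$ through $g_2=(\suc\o\pi_1)\x\pi_3\x\pi_5\in\langle\MN\rangle$, which produces $\{(i+1,q,k)\mid(i,x,q,1,k)\in A\}$, the second set in the union, with $\pi_2$ dropped simply by omission.

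The left-move is the delicate part, because there is no predecessor map in $\langle\MN\rangle$. I would instead decrement the position by the \emph{inverse} map of $\suc\x\id\x\id=(\suc\o\pi_1)\x\pi_2\x\pi_3\in\langle\MN\rangle$ on $\N\x Q\x\N$. First send $V_0$ through $\pi_1\x\pi_3\x\pi_5$ to a set $Y_3=\{(i,q,k)\mid(i,x,q,0,k)\in A\}$, then apply $(\suc\x\id\x\id)\inv$; since this inverse sends $\{(p,q,k)\}$ to $\{(p-1,q,k)\}$ when $p\geq 1$ and to $\emptyset$ when $p=0$, it yields $\{(i,q,k)\mid(i+1,x,q,0,k)\in A\}$, the third set, while silently deleting every position-$0$ left-moving element. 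Those deleted elements are exactly what the first set must recover with clamping, so I would add a boundary branch: intersect $V_0$ with $\pii{1}(\{0\})$ (using the constant $0\in\N$) to isolate $\{(0,x,q,0,k)\in A\}$, then map by $\pi_1\x\pi_3\x\pi_5$ to land at position $0$, giving $\{(0,q,k)\mid(0,x,q,0,k)\in A\}$. Routing $g_2$, $(\suc\x\id\x\id)\inv$, and this boundary map all into $W\in\S'$ makes $s(W)$ their union, namely $\psi(A)$.

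I expect the main obstacle to be precisely the left-move/clamping interplay: recognizing that a decrement inside $\langle\MN\rangle$ can only be effected by an inverse of $\suc$, and then noticing that this inverse automatically discards the position-$0$ case, so that the apparently redundant first set in the definition of $\psi$ is exactly the boundary correction the interior branch cannot generate. Once the three branches are fixed, checking that each arrow is the power map or the inverse of a member of $\langle\MN\rangle$, and that the intersection/union equations collapse to the stated cross sections, is routine and parallels the verifications in the preceding lemmas.
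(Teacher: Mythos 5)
Your proposal is correct and takes essentially the same route as the paper's own proof: three branches feeding a single union node in $\S'$, with the right move given by the power map of a successor-based product, the left-move decrement obtained as the inverse map of a successor-based product (which automatically annihilates the position-$0$ left-movers), and a separate boundary branch reinstating the clamped elements $(0,q,k)$. The only difference is bookkeeping: the paper fuses direction-selection and coordinate-dropping into single inverse maps such as $(\pi_{12}\x 1\x\pi_3)\inv$ and $(0\x\pi_1\x 0\x\pi_2)\inv$ applied after projecting out the symbol coordinate, whereas you select first by intersecting with preimages of constant singletons and then project; both are diagrams generated by $\MN$ computing exactly $\psi$.
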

\begin{proof}
Consider the diagram
\[
\diagram[@C=6.2em]{
& & {\N\x Q\x\N}_{(3)} \av{d}{(\suc\circ\pi_1)\x\pi_{23}}\\
{\N\x\Tt\x Q\x\2\x\N}_{(1)} \au{r}{\pi_{1345}} & {\N\x Q\x\2\x\N}_{(2)}
\au{ur}{(\pi_{12}\x 1\x\pi_3)\inv} \au{r}{((\suc\circ\pi_1)\x\pi_2\x 0\x\pi_3)\inv}
\au{dr}{\hspace{1em}(0\x\pi_1\x 0\x\pi_2)\inv}
& {\N\x Q\x\N}_{(4)}\\
& & {Q\x\N}_{(5)} \av{u}{0\x\pi_{12}}
}
\]
Then we have
\begin{align*}
s(S_2)&=\{(i,q,v,k)\|(i,x,q,v,k)\in s(S_1)\},\\
s(S_3)&=\{(i,q,k)\|(i,x,q,1,k)\in s(S_1)\},\\
s(S_5)&=\{(q,k)\|(0,x,q,0,k)\in s(S_1)\},\\
s(S_4)&=\{(i+1,q,k)\|(i,x,q,1,k)\in s(S_1)\}\cup\{(i,q,k)\|(i+1,x,q,0,k)\in s(S_1)\}\\
&\hspace{1em}\cup\{(0,q,k)\| (0,x,q,0,k)\in s(S_1)\}.
\end{align*}
Thus we have $s(S_4)=\psi(s(S_1))$.
\end{proof}

\begin{lem}
The map $\chi:\2^{\N\x\N}\to\2^{\N\x\N}$ such that
\[
\chi(\{(i,k)\})=\{(j,k)\|j\in\N, j\neq i\}
\]
can be represented by a diagram generated by $\MN$.
\end{lem}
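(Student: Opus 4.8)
The plan is to exploit the decomposition
$\{(j,k)\mid j\neq i\}=\{(j,k)\mid j>i\}\cup\{(j,k)\mid j<i\}$
and to generate each of the two rays by a recursion in the style of Proposition~\ref{grading}, walking the first coordinate upward with $\suc$ and downward with $\suc\inv$. Since $\MN=\{0,\suc\}$ generates no comparison map, and the complement map is not permitted, the order relation cannot be \emph{tested} by a single map; it must instead be \emph{produced} dynamically by iterated shifts, which is what the two recursions accomplish.

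First I would embed the input: send $\{(i,k)\}\subset\N\x\N$ into $\N\x\N\x\N$ by the power map of $\id\x 0$, producing the seed $\{(i,k,0)\}$, whose third coordinate will serve as the grading counter. For the upward ray, introduce a node $S_G=\N\x\N\x\N$ in $\S'$ with incoming maps the seed and the self-loop $\eta_G=(\suc\circ\pi_1)\x\pi_2\x(\suc\circ\pi_3)$, so that $s(S_G)=\{(i,k,0)\}\cup\eta_G(s(S_G))$. Taking the grading $g=\pi_3$, one has $\eta_G(g\inv(n))\subset g\inv(n+1)$, so Proposition~\ref{grading} forces $s(S_G)=\{(i+n,k,n)\mid n\geq 0\}$; projecting by $\pi_{12}$ and then applying the power map $(\suc\circ\pi_1)\x\pi_2$ once yields exactly $\{(j,k)\mid j>i\}$.

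The downward ray is symmetric, the only new point being that the first coordinate must be \emph{decremented}, which I do with an inverse map. Set $F=(\suc\circ\pi_1)\x\pi_2\x\pi_3$ and realize the step map $\eta_L$ as the inverse map $F\inv$ (sending $(j,k,n)$ to $(j-1,k,n)$, and to $\emptyset$ when $j=0$) followed by the power map $\pi_1\x\pi_2\x(\suc\circ\pi_3)$ advancing the counter; an auxiliary node carries $F\inv(s(S_L))$. With $g=\pi_3$ again, Proposition~\ref{grading} gives $s(S_L)=\{(i-n,k,n)\mid 0\leq n\leq i\}$, the walk halting automatically at $j=0$ because $F\inv$ of a zero-first-coordinate element is empty. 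Projecting by $\pi_{12}$ and applying $((\suc\circ\pi_1)\x\pi_2)\inv$ then produces $\{(j,k)\mid j<i\}$. A final node in $\S'$ unions the two rays to give $\{(j,k)\mid j\neq i\}$. Every arrow is the power map or the inverse map of a map in $\MNg$ (assembled from $0,\suc,\id,\omega$, projections, products, and compositions), so the diagram is generated by $\MN$.

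I expect the main obstacle to be handling strictness and the downward direction together: one must exclude $j=i$ while producing all $j<i$ using only $\suc\inv$, and the hypothesis of Proposition~\ref{grading} requires the seed to sit at grading level $0$, which is precisely why a dedicated step counter (the third $\N$) is introduced rather than grading directly by the first coordinate. I would also verify that the construction uses no intersection node, so that the represented map is additive over unions and its value on a singleton is exactly the asserted $\chi(\{(i,k)\})$, and that the cross section is unique (Proposition~\ref{grading} pins down $s(S_G)$ and $s(S_L)$, and the remaining nodes are determined in the forward direction).
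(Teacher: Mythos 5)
Your construction is correct, and at the top level it is the same as the paper's: split $\{(j,k)\|j\neq i\}$ into the ray above $i$ and the ray below $i$, generate the first by iterating $(\suc\circ\pi_1)\x\pi_2$, the second by iterating $((\suc\circ\pi_1)\x\pi_2)\inv$, and merge the two with a union node. The genuine difference is in how the recursions are controlled. The paper loops both maps directly on $\N\x\N$, with no counter: its nodes $S_2$ and $S_3$ carry self-loops $s_1=(\suc\circ\pi_1)\x\pi_2$ and $s_2=((\suc\circ\pi_1)\x\pi_2)\inv$, and the proof simply reads off the cross-section values. For the upward loop this is sound, since any element of a solution of $s(S_2)=s_1(\{(i,k)\})\cup s_1(s(S_2))$ descends in finitely many steps to $(i+1,k)$, exactly as in the Fibonacci example. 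But the downward equation $s(S_3)=s_2(\{(i,k)\})\cup s_2(s(S_3))$ does \emph{not} pin down $s(S_3)$: adjoining any full column $\{(j,l)\|j\in\N\}$, which is invariant under $s_2$, yields another valid cross section, so the paper's diagram represents $\chi$ only in the minimal-cross-section (limit) sense --- a caveat its proof leaves implicit. Your dedicated step counter closes exactly this hole: both of your loops strictly increase the third coordinate, Proposition~\ref{grading} applies to each, and the cross section is genuinely unique, so $\chi$ is represented in the strict sense of the definition. What this buys is rigor; what it costs is a modestly larger diagram (the $\id\x 0$ embedding, the auxiliary node carrying $F\inv$, the final projections). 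Your handling of strictness on the two rays and your remark that the absence of intersection nodes makes the represented map additive, hence correct on non-singleton inputs, are both right as well.
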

\begin{proof}
Consider the diagram
\[
\diagram{
{\N\x\N}_{(1)} \aux{r}{s_1}\aux{d}{s_2} & {\N\x\N}_{(2)} \aux{d}{\id} \arlooprx{s_1} \\
{\N\x\N}_{(3)} \aux{r}{\id} \arlooplx{s_2} & {\N\x\N}_{(4)}
}
\]
Here, $s_1=(\suc\circ\pi_1)\x\pi_2$ and $s_2=((\suc\circ\pi_1)\x\pi_2)\inv$.
Then if $s(S_1)=\{(i,k)\}$,
\begin{align*}
s(S_2)&=\{(j,k)\|j=i+1,i+2,\cdots\},\\
s(S_3)&=\{(j,k)\|j=i-1,i-2,\cdots,0\},\\
s(S_4)&=\{(j,k)\|j\in\N, j\neq i\}.
\end{align*}
\end{proof}

\begin{lem}\label{deltastar}
The map $\delta^*:\2^{\N\x\Tt\x\Qt\x\N}\to\2^{\N\x\Tt\x\Qt\x\N}$
with the properties in the proof of the theorem can be represented by a diagram
generated by $\MN$.
\end{lem}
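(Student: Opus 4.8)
The plan is to build $\delta^{*}$ as a union of a handful of constituent maps, each responsible for one part of the row update $R_k\mapsto R_{k+1}$, and to realize each constituent from the maps already shown representable over $\MN$: the finite transition function $\delta$ of Lemma~\ref{lemma:delta}, the head-move map $\psi$ of Lemma~\ref{psi}, the exclusion map $\chi$ of the preceding lemma, together with $\suc$, projections, products, and the constant maps for $\qe$, $\qA$, and the states and symbols (all natural-number constants built from $0$ and $\suc$). Since the output set of the subdiagram computing $\delta^{*}$ is placed in $\S'$, its cross section is the union of the images of the incoming maps, so it suffices to produce each piece of $R_{k+1}$ on a separate branch and let the diagram take their union.

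First I would separate the cells of a row by their state, using inverse images of the constant maps $\qe$, $\qA$, and $q\in Q$ to split $R_k$ into the tape part $R^1_k$ (state $\qe$), the active head cell (state in $Q\setminus\{\qA\}$), and the accept cells (state $\qA$); the emulation case and the accept-expansion case are then handled on separate branches. On the emulation branch I would (a) apply $\delta$ to the $(\text{symbol},\text{state})$ pair of the head cell to obtain $(x',q',v)$ with $v\in\2$; (b) emit the rewritten tape cell $(m_k,x',\qe,k+1)$ at the head position; (c) copy the untouched tape by feeding the head's position–step pair $(m_k,k)$ through $\chi$ to obtain the allowed positions $\{(j,k)\| j\neq m_k\}$, pulling this back along the position–step projection and intersecting with $R^1_k$ to retain the cells at positions $i\neq m_k$, then incrementing the step with $\suc$; and (d) build the new head cell by feeding $(m_k,\cdot,q',v,k)$ into $\psi$ to get the clamped new position $m_{k+1}$ carrying $q'$, reading the old symbol $x_{m_{k+1}k}$ at that position off $R^1_k$ by a product-and-intersection matched on the position coordinate, and finally incrementing the step to output $(m_{k+1},x_{m_{k+1}k},q',k+1)$.

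The accept-expansion branch ($q_k=\qA$) copies $R^1_k$ with incremented step and widens the $\qA$-range by one cell on each side: the left step uses the same $\psi$-style clamped predecessor, while the right step is clamped at $l_k$ by detecting the terminal marker $\terminal$. Unioning the outputs of all branches in the $\S'$-set yields $R_{k+1}$, and the subdiagram is generated by $\MN$ since every constituent is. Finally I would verify the two hypotheses of Proposition~\ref{grading}: every branch applies $\suc$ exactly once to the step coordinate, so $\delta^{*}(S_i)\subset S_{i+1}$; and because every matching, intersection, and exclusion above is keyed on the common step coordinate $k$, cells from different rows never combine, so that by distributivity the cross terms of the intersections vanish and $\delta^{*}(S)=\bigcup_{n}\delta^{*}(S_n)$.

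I expect the main obstacle to be step (d): reading the symbol at the new head position from the old tape and recombining it with the freshly computed state, while keeping the step coordinate synchronized and correctly handling the clamped boundaries (the head at position $0$ moving left, and the right boundary near the terminal marker). The accept branch's simultaneous clamping at both ends is a milder version of the same bookkeeping. Everything else reduces to routine products, intersections, and step increments of the kind already exhibited in the preceding lemmas.
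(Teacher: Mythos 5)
Your overall architecture coincides with the paper's: split the row by state via inverse images of constant maps, run the head cell through the finite transition map of Lemma~\ref{lemma:delta}, copy the untouched tape using the exclusion map $\chi$, move the head with $\psi$, handle accept-expansion on a separate branch, take unions in sets placed in $\S'$, and verify the hypotheses of Proposition~\ref{grading} on the step coordinate. Up to bookkeeping this is the paper's construction, but the two places you defer as ``the main obstacle'' are exactly where your mechanism, as stated, fails --- and flagging an obstacle is not the same as closing it.

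First, in step (d) you read the symbol of the new head cell off the \emph{old} tape $R^1_k$. This is wrong precisely in the clamped case you name: if the head is at position $0$ and moves left, then $m_{k+1}=m_k=0$, and the symbol under the head at step $k+1$ must be the freshly written $x'$, not the old $x_{0k}$. With your construction the tape branch records $(0,x',\qe,k+1)$ while the head branch records $(0,x_{0k},q',k+1)$, so $R_{k+1}$ is no longer proper and the next application of $\delta$ fires on a stale symbol; the simulation diverges from $M$. The paper avoids this by intersecting the moved state triples $(i,q',k+1)$ with the \emph{updated} tape (its set $S_6$, the union of the rewritten cell and the untouched cells), never with $R^1_k$. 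Second, you provide no mechanism for extending the tape when the head moves right onto the terminal marker. In that case your step (d) produces a head cell carrying $\terminal$, which is not a symbol of $M$, so no valid transition applies at the next step; this is not a rare boundary event but occurs for every computation that uses more tape than its input. The paper inserts a dedicated map $\eta$ at the end of the pipeline, representable by a small sub-diagram generated by $\MN$, which rewrites $(i,\terminal,q,k)$ with $q\notin\{\qe,\qA\}$ into the two elements $(i,\bs,q,k)$ and $(i+1,\terminal,\qe,k)$, i.e., converts the marker under the head into a blank and pushes the marker one cell to the right. Without an analogue of $\eta$ (and with the symbol lookup corrected to use the updated tape), your diagram does not have the properties required of $\delta^*$ in the proof of Theorem~\ref{TMEmulate}.
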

\begin{proof}
Consider the diagram:
\[
\diagram[@C=4.6em@R=3em]{
{\N\x\Tt\x\Qt\x\N}_{(1)} \aul{r}{.47}{\pi_{123}\x(\suc\o\pi_4)} \av{d}{\pi_{123}\x(\suc\o\pi_4)} &
{\N\x\Tt\x\Qt\x\N}_{(2)} \aul{r}{0.45}{\pi_1\x(\delta\o\pi_{23})\x\pi_4} \av{d}{\chi\o\pi_{14}}
\ar@{.>}@!+<1.5em,-1.8333em>;[0,0]+<1em,-.8333em>_(.0){\pii{3}\o(\Qt\setminus\{\qe,\qA\})} &
{\N\x\Tt\x\Qt\x\2\x\N}_{(3)} \avxl{dddl}{.3}{\psi} \aux{d}{\pi_{125}} \\
{\N\x\Tt\x\Qt\x\N}_{(4)} \av{d}{\pi_{134}} \aul{dr}{0.4}{\pi_{124}}
\ar@{.>}@!+<2em,1.8333em>;[0,0]+<1.5em,.8333em>^(.0){\pii{3}\o\{\qe,\qA\}} &
{\N\x\N}_{(5)} \au{d}{\pii{13}}&
{\N\x\Tt\x\N}_{(6)} \au{d}{\pii{124}} \\
{\N\x\Qt\x\N}_{(7)} \av{d}{\phi}
\ar@{.>}@!+<2em,1.8333em>;[0,0]+<1.5em,.8333em>^(.0){\pii{2}\o\qA} &
{\N\x\Tt\x\N}_{(8)} \auxl{ur}{.3}{\id} &
{\N\x\Tt\x\Qt\x\N}_{(9)} \au{d}{\eta}\\
{\N\x\Qt\x\N}_{(10)} \aux{r}{\id} &
{\N\x\Qt\x\N}_{(11)} \au{ur}{\pii{134}}
\ar@{.)}@!+<-1.5em,1.8333em>;[0,0]+<-1em,.8333em>_(.0){\pii{2}\o\qe} &
{\N\x\Tt\x\Qt\x\N}_{(12)}
}
\]
For $k\in\N$, let us assume that $s(S_1)$'s $k$'th row $R_k$ is proper.
We see that $s(S_2)$ contains the single element $(i,x,q,k), q\neq\qA$ in $R^2_k$ and $s(S_4)$ the rest,
with the step number $k$ incremented.
Note that $\{\qe,\qA\}$ and $\Qt\setminus\{\qe,\qA\}$ in the diagram denotes the constant union maps
$\1\to\Qt$ whose images are $\{\qe,\qA\}$ and $\Qt\setminus\{\qe,\qA\}$.

The map $\chi\circ\pi_{14}$ sends to
\[
s(S_5)=\bigcup_{(i,x,q,k)\in s(S_2)}\{(j,k)\|j\neq i\}.
\]
Thus, $s(S_5)$ contains exactly the position-step pair $(i,k)$ such that there is no element
$(i,x,q,k)$ in $s(S_2)$.
The element $(i,x,q,k)$ in $R^2_k$ indicates that the head is at position $i$ on step $k$.
Thus $s(S_5)$ contains the position-step pair everywhere except where the head is,
and $s(S_8)$ contains the symbols that are not under the head.

The symbol-state pair $(x,q)$ from $(i,x,q,k)$ in $R^2_k$ is fed to $\delta$ and the result is
\[
s(S_3)=\{(i,x',q',v,k)\| (i,x,q,k)\in s(S_2), \delta((x,q))=(x',q',v)\},
\]
which contains the symbol $x'$ to replace the one at the current position $i$, the new state $q'$,
and the direction $v$ to move the head.
The new symbol is sent to $s(S_6)$ where it is combined with the symbols that are not changed,
which are contained in $s(S_8)$.
The direction to move the head is indicated by $v$: if $v=0$, the head is to be moved
to the left, and if $v=1$ to the right.
The map $\psi$ in Lemma \ref{psi} is defined so that the position is moved accordingly, while
also taking care of the case $(i,v)=(0,0)$, when the head is indicated to move to the left and
the position is $0$, and therefore has to stay at $0$.
Also, without loss of generality, we can assume that $\delta$ satisfies $\delta((x,\qR))=(x,\qR,0)$,
which makes the simulated head to move to the left without changing any symbol
after the rejection state is reached, until it reaches the leftmost position, where it stays.

Coming back to $s(S_4)$,
\[
s(S_7)=\{(i,\qA,k)\| (i,x,\qA,k)\in s(S_4)\},
\]
takes position-step pairs with state $\qA$.
The map $\phi:\2^{\N\x\Qt\x\N}\to\2^{\N\x\Qt\x\N}$ is defined so that
\begin{align*}
s(S_{10})=\phi(s(S_7))&=\{(i,\qA,k)\| (i+1,\qA,k)\in s(S_7)\}\\
&\hspace{1em}\cup\{(i,\qA,k)\| (i,\qA,k)\in s(S_7)\}\\
&\hspace{1em}\cup\{(i+1,\qA,k)\| (i,\qA,k)\in s(S_7)\},
\end{align*}
which propagates the positions with accept state $\qA$.
This can be represented by
\[
\diagram[@C=10em]{
{\N\x\Qt\x\N} \aux{r}{\id}
\ar@{.)}@(ur,ul)!UR+<-.8333em,0em>;[r]!UL+<.8333em,0em>_{((\suc\o\pi_1)\x\pi_{23})\inv}
\ar@{.)}@(dr,dl)!DR+<-.8333em,0em>;[r]!DL+<.8333em,0em>^{(\suc\o\pi_1)\x\pi_{23}} &
{\N\x\Qt\x\N}
}
\]

Now, the position-state-step triples are combined in $s(S_{11})$:
\[
s(S_{11})=\psi(s(S_3)) \cup s(S_{10}) \cup \{(i,\qe,k)\|i,k\in\N\},
\]
where the last part ensures that the state $\qe$ is in $R^1_k$ everywhere there is a symbol.

The states and the symbols are combined in $s(S_9)$:
which is combined with the symbols to make
\[
s(S_9)=\{(i,x,q,k)\| (i,x,k)\in s(S_6), (i,q,k)\in s(S_{11})\}.
\]

Finally, the map $\eta:\2^{\N\x\Tt\x\Qt\x\N}\to\2^{\N\x\Tt\x\Qt\x\N}$ takes care of the case
when the head moves past the right end of the string:
\begin{align*}
\eta((i,\terminal,q,k))&=\{(i,\bs,q,k), (i+1,\terminal,\qe,k)\},\;\;(q\neq\qe, q\neq\qA),\\
\eta((i,x,q,k))&=\{(i,x,q,k)\}\;\;\text{otherwise.}
\end{align*}
This can be represented by:
\[
\diagram[@C=6em]{
{\N\x\Tt\x\Qt\x\N} \au{r}{\id} \au{rd}{\id} \au{rdd}{\id} &
 {\N\x\Tt\x\Qt\x\N}  \aux[@<.25em>]{r}{\pi_1\x\bs\x\pi_{34}}
\avx[@<-.25em>]{r}{(\suc\o\pi_1)\x\terminal\x\qe\x\pi_4}
\ar@{.>}@!+<-.3333em,2em>;[0,0]+<-.8333em,.8em>^(.0){\pii{23}\o(\terminal\x(\Qt\setminus\{\qe,\qA\}))}
& {\N\x\Tt\x\Qt\x\N} \\
& {\N\x\Tt\x\Qt\x\N} \avx[@<-.25em>]{ru}{\id}
\ar@{.>}@!+<-.3333em,1.8333em>;[0,0]+<-.8333em,.8333em>^(.0){\pii{2}\o(\Tt\setminus\{\terminal\})}\\
& {\N\x\Tt\x\Qt\x\N} \avx[@<-.25em>]{ruu}{\id}
\ar@{.>}@!+<.6667em,-1.8333em>;[0,0]+<.8333em,-.8333em>_(.0){\pii{3}\o\{\qe,\qA\}}
}
\]

\end{proof}

\begin{thm}\label{KCTheorem}
For any universal Turing machine $U$, there exists a constant $c_U\in \N$ such that
\[
I(\bsigma\,|\MN)\leq 6K_U(\sigma)+c_U
\]
for any binary string $\sigma\in\2^*$.
\end{thm}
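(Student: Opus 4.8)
The plan is to combine the universal Turing machine emulator of Theorem~\ref{TMEmulate} with a small ``input gadget'' that hard-codes a shortest $U$-program for $\sigma$, and then to read off the resulting output set. Fix a program $p\in\2^*$ with $U(p)=\sigma$ and $|p|=K_U(\sigma)$; we may assume $U$ is arranged so that halting in the accept state leaves its output on the tape, so that $U(p)=\sigma$ means exactly that $U$ accepts $p$ leaving $\sigma$ behind. By Theorem~\ref{TMEmulate} there is a finite diagram generated by $\MN$, with distinguished sets $S=\N\x\N$ and $T=\N\x\N$, representing the map $\bar U$ that sends $\bar p$ to $\bsigma$; moreover this holds for \emph{every} cross section, so no minimization is needed. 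Its total map-size is a constant depending only on $U$, say $c_U'$. It remains to feed $\bar p$ into $S$ using maps generated by $\MN$ whose sizes sum to at most $6|p|=6K_U(\sigma)$, and to take $X=T$ as the represented set.

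First I would build the input gadget. Recall $\bar p=\{(i,p[i])\|i=0,\dots,|p|-1\}\cup\{(|p|,0),(|p|,1)\}$. I make a successor chain $N_0,\dots,N_{|p|}$ of copies of $\N$, with a constant $0:\1\to\N$ into $N_0$ and $\suc$ carrying $N_i$ to $N_{i+1}$; each $N_i$ is an intersection node with a single incoming map, so every cross section is forced to have $s(N_i)=\{i\}$. Separately, and once, I build the singletons $\{0\}$ and $\{1\}$ via the constants $0$ and $1=\suc\o 0$. For each position $i$ I attach the correct bit by forming $\{i\}\x\{p[i]\}$ on an intersection node $P_i=\N\x\N$ through the two inverse projections $\pii{1}$ (from $N_i$) and $\pii{2}$ (from the appropriate bit-singleton), so that $s(P_i)=\pii{1}(\{i\})\cap\pii{2}(\{p[i]\})=\{(i,p[i])\}$; an identity map then carries $s(P_i)$ into $S$, which I place in $\S'$ so that it takes the union of all incoming images. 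The terminal pair $\{(|p|,0),(|p|,1)\}$ is attached the same way using $N_{|p|}$ and the doubleton $\{0,1\}$, at fixed extra cost. Since every $N_i$, every bit-singleton, and every $P_i$ has a forced value, and $S$ is the union of the $P_i$ with the terminal contribution, every cross section satisfies $s(S)=\bar p$.

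Next I would total the sizes relative to $\MN$. Projections and $\id$ have size $1$, and $|\pii{j}|_{\MN}=|\pi_j|_{\MN}=1$; the two bit-singletons (costing $1$ and $|1|_{\MN}=3$) and the terminal apparatus are built once and fall into the additive constant. The cost charged to a position $i$ is then one successor step, two inverse projections, and one identity into $S$, i.e.\ at most a fixed constant per bit (indeed $4$, comfortably below $6$). Hence the gadget contributes at most $6|p|$ up to an additive constant, and the whole diagram has total map-size at most $6K_U(\sigma)+c_U$, where $c_U$ collects $c_U'$, the bit-singletons, and the terminal apparatus. Because $s(S)$ is forced to $\bar p$ and Theorem~\ref{TMEmulate} makes $s(T)=\bar U(s(S))=\bsigma$ for every cross section, the data $(\S,\S',\M,\{\1\},s_\1,T)$ represents $\bsigma$; since $\P=\{\1\}$ makes the Shannon term vanish, the definition of $I(\bsigma\,|\MN)$ as a minimum over such representing diagrams yields $I(\bsigma\,|\MN)\le 6K_U(\sigma)+c_U$.

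The main obstacle is the size bookkeeping that pins the coefficient at $6$. The naive attachment of a bit by the product map $\id\x p[i]$ already costs $7$ for the bit $1$, because this map is really $\id\x(p[i]\o\omega)$ and the hidden $\omega$ forces $|1\o\omega|_{\MN}=5$; together with the successor step this would spend $8$ per bit and blow the budget. Routing each bit instead through $\pii{1}$ and $\pii{2}$ into an intersection node is precisely what keeps the per-bit cost a small constant \emph{independent of $U$}, so that the coefficient of $K_U(\sigma)$ stays at $6$ while everything genuinely specific to $U$ (the emulator) and to the fixed sentinel machinery is absorbed into $c_U$. A secondary point to verify is the harmless convention that $U$'s output is its final tape content, so that $K_U$ and $\bar U$ refer to the same function.
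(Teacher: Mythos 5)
Your proof is correct, and its overall architecture is the same as the paper's: invoke Theorem~\ref{TMEmulate} for $U$, hard-code a shortest program $p$ by a successor chain $N_0,\dots,N_{|p|}$ whose nodes are forced to $s(N_i)=\{i\}$, inject $\bar{p}$ into the input set $S$ (placed in $\S'$) so that every cross section has $s(S)=\bar{p}$ and hence $s(T)=\bsigma$, and then bound the total map size, the Shannon term vanishing since $\P=\{\1\}$. Where you genuinely depart from the paper is the per-bit attachment, and your variant is the more careful one. The paper maps $N_i$ into $S$ directly by the power map $\id\x p[i]:\N\to\N\x\N$ (that is, $\id\x 0$ or $\id\x(\suc\o 0)$), counts $|\id\x(\suc\o 0)|_{\MN}=5$, and so spends $6$ per bit --- which is exactly where the coefficient $6$ in the statement comes from. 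That count of $5$ depends on reading the product-size clause as admitting a constant component $\1\to\N$ directly (the ``mix freely'' abuse of notation of Section~\ref{notations}); under the strict definition, in which the factors of a product must share a domain, the second factor is $(\suc\o 0)\o\omega$ of size $5$, the map costs $7$, and the paper's gadget spends $8$ per bit, precisely the failure mode you flag. Your gadget --- an intersection node $P_i$ receiving $\pii{1}$ from $N_i$ and $\pii{2}$ from a shared bit-singleton, followed by $\id$ into $S$ --- uses only maps of unambiguous size $1$ and costs $4$ per bit, so it establishes the stated bound $6K_U(\sigma)+c_U$ with room to spare and independently of how that ambiguity is resolved. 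In short: the paper's attachment is diagrammatically leaner (no extra nodes per bit), but its coefficient $6$ hinges on a charitable reading of the size definition; yours adds a node per bit and in exchange makes the bookkeeping airtight.
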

\begin{proof}
By Theorem~\ref{TMEmulate}, there exists a diagram $(\S,\S',\M)$ generated by $\MN$
that emulates $U$.
By definition, there exists $p\in\2^*$ such that $U(p)=\sigma$ and $|p|=K_U(\sigma)$.
There are sets $S=\N\x\N$ and $T=\N\x\N$ in $\S$, as $S_1$ and $S_4$ in \eqref{TMEmuDiagram},
such that $s(T)=\bar{\sigma}$ if $s(S)=\bar{p}$, for any cross section $s$ of $(\S,\S',\M\|s_\1)$.
Add to the diagram the following:
\[
\diagram{
{\N}_{(\mathrm{A},0)} \aiul{0} \au{r}{\suc} & {\N}_{(\mathrm{A},1)} \au{r}{\suc}  &
{\cdots} \aul{r}{.35}{\suc} & {\N}_{(\mathrm{A},|p|)}
}
\]
with maps from $S_{\mathrm{A},i}$ to $S$:
\[
\diagram[@C=4em]{
{\N}_{(\mathrm{A},i)} \aux{r}{\id\x p[i]} & {\N\x\N}
}
\]
for $i=0,1,\dots,|p|-1$, where $\id\x p[i]:\N\to\N\x\N$ denotes the map
$\id\x 0$ if $p[i]=0$ and $\id\x (\suc\circ 0)$ if $p[i]=1$.
Finally, we add maps
\[
\diagram[@C=4em]{
{\N}_{(\mathrm{A},|p|)} \aux[@<.25em>]{r}{\id\x 0} \avx[@<-.25em>]{r}{\id\x 1} & {\N\x\N}
}
\]
from $S_{\mathrm{A},|p|}$ to $S$.
Then any cross section $s$ of the resulting diagram satisfies $s(S)=\bar{p}$, and thus $s(T)=\bar{\sigma}$.

Maps $\suc$ and $\id\x p[i]$ is added for each symbol in $p$.
The size of $\suc$ is 1 and that of $\id\x (\suc\circ 0)$ is 5.
Thus the size of the added diagram is at most $6|p|+\mathrm{const}$.
Since the only part of the diagram that depends on $\sigma$ is the part added here,
the existence of this diagram proves the theorem.
\end{proof}

\begin{cor}
For any binary string $\sigma\in\2^*$, $I(\bsigma\,|\MN)$ is bounded from above by
\[
c\cdot \min_{\substack{M:\mathrm{TM}, p\in \2^*\\M(p)=\sigma}}(|p|+|\Theta_M||Q_M|),
\]
where $c$ is a constant and $\Theta_M$ and $Q_M$ are the sets of internal symbols and states of
Turing machine $M$, respectively.
\end{cor}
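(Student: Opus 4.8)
The plan is to re-run the argument of Theorem~\ref{KCTheorem}, but with the fixed universal machine $U$ replaced by an arbitrary Turing machine $M$, so that the constant previously absorbed into $c_U$ becomes an explicit function of $M$. First I would fix $\sigma$ and an arbitrary pair $(M,p)$ with $M(p)=\sigma$. By Theorem~\ref{TMEmulate} there is a finite diagram generated by $\MN$ that emulates $M$, with distinguished sets $S=\N\x\N$ and $T=\N\x\N$ such that $s(T)=\bsigma$ whenever $s(S)=\bar p$. Exactly as in the proof of Theorem~\ref{KCTheorem}, I would then append the chain of $\N$'s linked by $\suc$ together with the maps $\id\x p[i]$ that encode $\bar p$ into $S$; this $\sigma$-dependent addition has size at most $6|p|+\mathrm{const}$.

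The crux is to bound the size of the emulation diagram itself, uniformly in $M$, by $O(|\Theta_M||Q_M|)$. Here I would inspect the construction underlying Theorem~\ref{TMEmulate} and observe that the maps $\init$, $\theta$, $\psi$, $\chi$, $\eta$, and $\phi$ are generic manipulations whose representing subdiagrams have size independent of $M$, while the constant maps naming states and symbols (such as $q_0,\qe,\qA,\qR$ and the constant union maps with images $\{\qe,\qA\}$ and $\Qt\setminus\{\qe,\qA\}$) are natural-number constants $\suc^k\circ 0$ of size $O(|Q_M|)$ or $O(|\Theta_M|)$. The only part whose size genuinely grows is the representation of the transition function $\delta$ via Lemma~\ref{lemma:delta}, whose diagram carries one block $S_{\mathrm{B},i,j}$ for each $(i,j)\in\m\x\n$, each of bounded size, for a total of $O(|\Theta_M||Q_M|)$. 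Summing, the emulation diagram has size at most $C_1|\Theta_M||Q_M|+C_2$ for absolute constants $C_1,C_2$.

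Combining the two parts, every pair $(M,p)$ with $M(p)=\sigma$ yields a diagram generated by $\MN$ representing $\bsigma$ of total size at most $6|p|+C_1|\Theta_M||Q_M|+C_2$, so
\[
I(\bsigma\,|\MN)\le 6|p|+C_1|\Theta_M||Q_M|+C_2.
\]
Since $|\Theta_M||Q_M|\ge 1$, the additive constant $C_2$ can be absorbed into a multiplicative one, giving a single constant $c$ with $I(\bsigma\,|\MN)\le c\,(|p|+|\Theta_M||Q_M|)$. Taking the minimum over all $(M,p)$ with $M(p)=\sigma$ then yields the claimed bound. I expect the main obstacle to be the uniform size bookkeeping in the middle step: one must confirm that nothing in the emulation of Theorem~\ref{TMEmulate}, apart from the $\delta$-block of Lemma~\ref{lemma:delta} and the finitely many state and symbol constants, scales with $M$, so that the dominant term is indeed $O(|\Theta_M||Q_M|)$ and every remaining contribution is either constant or of lower order.
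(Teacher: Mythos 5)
Your proposal is correct and follows essentially the same route as the paper: the paper's proof likewise reuses the construction of Theorem~\ref{KCTheorem} with an arbitrary (possibly non-universal) Turing machine and observes, via Lemma~\ref{lemma:delta}, that the emulation diagram has size $O(|\Theta_M||Q_M|)$, with your write-up merely making the size bookkeeping and the final minimization over $(M,p)$ explicit.
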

\begin{proof}
Use the same construction as in the proof of Theorem~\ref{KCTheorem} with any (possibly non-universal) Turing machine.
It can be seen from the proof of the theorem and Lemma~\ref{lemma:delta} that the size of the diagram emulating a
Turing machine $M$ is $O(|\Theta_M||Q_M|)$.
\end{proof}

\subsection{Reading Cross Section}

Conversely, if a finite diagram generated by $\MN$ can represent $\bsigma$ for some
binary string $\sigma$, there is a Turing machine that produces $\sigma$ and terminate.

Let $X$ be a set of Boolean variables, i.e., variables that take values in $\2$.
We use the standard notion like logical AND ($\wedge$) and OR ($\vee$),
treating $1$ as true and $0$ as false.
By an assignment to $X$, we mean a map $f:X\to\2$, which assigns $0$ or $1$ to each
variable in $X$.
If $Y\subset X$, there is a natural map $\2^X\to\2^Y$ by restriction that maps an
assignment to $X$ to an assignment to $Y$.
A logical constraint $\chi$ on a set $X$ of Boolean variables is a map $\chi:\2^X\to\2$,
which is said to be satisfied by an assignment $f\in\2^X$ to $X$ if $\chi(f)=1$.
For a logical constraint $\chi$ on $X$, an assignment $g$ to a subset $Y$ of $X$
is said to satisfy $\chi$ if every assignment that restricts to $g$ satisfies it.
For a set $C$ of constraints on $X$, we say an assignment to $X$ satisfies $C$ if it
satisfies all the constraints in $C$.

\begin{thm}\label{DSimulate}
Let $(\S, \S',\M)$ be a finite diagram generated by $\MN$
such that the data $(\S$, $\S'$, $\M$, $\{\1\}$, $s_{\1}$, $S)$ represents $\bsigma\subset S=\N\x\N$ for
a binary string $\sigma$,
where $s_{\1}$ is the cross section of $\{\1\}$ defined by $s_{\1}(\1)=\1$.
Then there exists a Turing machine that takes an encoded description of any such diagram and produces $\sigma$.
Therefore, for any universal Turing machine $U$, there exist constants $d_U, e_U\in \N$ such that
\[
K_U(\sigma)\leq d_U I(\bsigma\,|\MN)+ e_U
\]
for any binary string $\sigma\in\2^*$.

\end{thm}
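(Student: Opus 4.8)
The plan is to construct a single Turing machine $D$ that, given an encoding of a finite diagram $(\S,\S',\M)$ generated by $\MN$ together with the distinguished set $S=\N\x\N$, recovers the value $\bsigma=s(S)$ common to all cross sections $s\in\Gm(\S,\S',\M\,|\,s_\1)$ and then reads $\sigma$ off it. First I would recast the cross-section conditions \eqref{eq:section} and \eqref{eq:section2} as a (generally infinite) system of logical constraints over Boolean variables $v_{T,e}$, one for each set $T\in\S$ and each element $e\in T$, where $v_{T,e}$ records whether $e\in s(T)$, exactly as in the Boolean-constraint preliminaries preceding the statement. A cross section is then precisely a satisfying assignment, and the hypothesis that the data represents $\bsigma$ says that each variable $v_{S,e}$ takes the same value in every satisfying assignment; that is, its value is forced by the system.

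The engine of $D$ is a monotone saturation: starting from the everywhere-false assignment with $v_{\1,0}$ pinned true by $s_\1$ and every equation-free set set to $\emptyset$, it repeatedly applies the constraints in the adding direction. An incoming inverse map $\varphi=f\inv$ contributes $v_{T,e}$ from the single variable $v_{\dm(\varphi),f(e)}$; an incoming power map $\varphi=f$ contributes $v_{T,e}$ whenever some already-true $v_{\dm(\varphi),e'}$ has $f(e')=e$; and the union/intersection bookkeeping of \eqref{eq:section}/\eqref{eq:section2} is carried out elementwise. Because every generator of $\MNg$ either fixes a coordinate or shifts it by a fixed nonnegative constant, each application reaches only elements whose coordinates exceed those already present by amounts bounded by the finitely many constants occurring in $\M$; hence $D$ can maintain each $s(T)$ on a finite, effectively enlarging window, and the saturation is computable stage by stage. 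By monotonicity of all the maps involved, together with the convergence argument of Proposition~\ref{grading}, this process enumerates exactly the elements that must belong to $s(S)$, so every element it reports lies in $\bsigma$ and, in the limit, all of $\bsigma$ is reported.

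To halt and output, I would exploit the shape $\bsigma=\{(i,\sigma[i])\}\cup\{(|\sigma|,0),(|\sigma|,1)\}$, whose terminator is the unique index carrying both bits. $D$ watches the enumerated elements of $s(S)$ until it finds an index $n$ with both $(n,0)$ and $(n,1)$ present; since nothing spurious is ever produced, $n=|\sigma|$, and $D$ then waits until the single bit at each position $0,\dots,n-1$ has appeared (finitely many events, guaranteed since the enumeration exhausts the finite set $\bsigma$) and prints those bits as $\sigma$. For the quantitative bound I would fix an encoding of diagrams of length $O\!\big(\sum_{\varphi\in\M}|\varphi|_{\MN}\big)+O(1)$, each map being written as an expression in $0,\suc,\id,\omega,\pi_i$ of length proportional to its size while the number of sets is controlled by the number of maps. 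Applying this to a diagram achieving $I(\bsigma\,|\MN)$ and letting a universal machine $U$ simulate $D$ on that encoding yields $K_U(\sigma)\le d_U\,I(\bsigma\,|\MN)+e_U$, with $d_U$ absorbing the per-symbol encoding cost and $U$'s simulation overhead and $e_U$ the fixed description of $D$.

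The main obstacle is proving that the forced value of each $v_{S,e}$ is exactly what this monotone build-up computes. Soundness (no spurious element) follows because the saturation started from $\emptyset$ on the equation-free sets never exceeds any cross section componentwise; the delicate point is completeness, since a non-injective power map such as a projection imposes an existential constraint with infinitely many potential witnesses, and an equation-free set could in principle be required nonempty to justify some element of $\bsigma$. I would resolve this by showing that the least saturation is itself a cross section in $\Gm(\S,\S',\M\,|\,s_\1)$, so that, the representation forcing all cross sections to agree on $S$, its $S$-value is necessarily $\bsigma$; here the nonnegative-shift structure of $\MN$-maps is what guarantees that any existential witness present in some cross section already appears in the minimal one, keeping the saturation both sound and complete while remaining effective.
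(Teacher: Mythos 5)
Your proposal follows the paper's route in all essentials: the Boolean-constraint encoding of cross sections, the monotone saturation started from the pinned variable for $0\in\1$, the key lemma that the least saturation is itself a cross section in $\Gm(\S,\S',\M\,|\,s_{\1})$ (so that, the data representing $\bsigma$, its value on $S$ must equal $\bsigma$), the use of the terminator pair $(|\sigma|,0),(|\sigma|,1)$ to halt and output, and an $O\bigl(\sum_{\varphi\in\M}|\varphi|_{\MN}\bigr)$ encoding of a minimal diagram fed to $U$. That is exactly the paper's architecture, and the lemma you isolate is exactly the paper's proof that the assignment $h$ of finite forcing rank satisfies $C$.

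The genuine gap is in your effectiveness argument. You claim that $\MN$-generated maps only fix coordinates or shift them by bounded nonnegative constants, so that each saturation stage stays inside a finite, effectively enlarging window. This is false for the inverse maps that the definition of ``generated by $\MN$'' allows: a single application of $\pii{1}:\2^{\N}\to\2^{\N\x\N}$ forces infinitely many variables with unbounded second coordinate, e.g.\ $\pii{1}(\{5\})=\{(5,j)\|j\in\N\}$, and $\omega\inv:\2^{\1}\to\2^{T}$ forces all of $T$ at once; the diagrams of Theorem~\ref{TMEmulate}, which are the ones needed for Theorem~\ref{KCTheorem}, are full of such maps, so this is not a corner case. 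Consequently the stages of your saturation can be infinite and ``computable stage by stage'' fails as described. The repair (and the paper's actual mechanism) is to dovetail: enumerate increasingly large \emph{finite} subsets $Z$ of the countably many variables, saturate only within $Z$, and prove that this semi-decision procedure terminates on every variable of finite rank by exhibiting a finite witness set for it --- which uses only that every conjunction in $C$ is finite (because $\M$ is finite) and that a disjunction needs a single witness. Relatedly, your closing attribution of completeness to the ``nonnegative-shift structure'' of $\MN$ is a red herring (as is the appeal to Proposition~\ref{grading}): the least-saturation-is-a-cross-section lemma is purely order-theoretic and holds for arbitrary diagrams; the special structure of $\MN$ neither delivers what you wanted from it (finite one-step images) nor is needed for what you actually require (finite derivations).
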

\begin{proof}
For each element $t$ of each set $T$ in $\S$, we define a Boolean variable $x_t$.
Also, for each set $T$ in $\S$, each map $\varphi\in\In(T)$, and each element $t$ of $T$,
we define a Boolean variable $y^\varphi_t$.
Let us denote the set of the variables by $X$.
The variable $x_t$ is for indicating if $t$ is in the cross section, and $y^\varphi_t$ is to
indicate if $t$ is in the image of the cross section by $\varphi$.
We define a set $C$ of constraints on $X$ to establish a one-to-one correspondence
between the cross sections in $\Gm(\S,\S',\M\|s_{\1})$ and the assignments to $X$ satisfying $C$.
We let $C$ contain the following constraints:
\begin{enumerate}[i)]

\item \label{mapconst} For each map $\varphi\in\M$:
\begin{enumerate}[a)]

\item If $\varphi$ is the power map $f:\2^U\to\2^T$ of $f:U\to T$, $C$ contains the
constraint $y^\varphi_t=0$ for each $t\in T\setminus f(U)$ and, for each $t\in f(U)$,
\[
y^\varphi_t=\bigvee_{u\in f\inv(t)}x_u.
\]
\item If $\varphi$ is the inverse map $f\inv:\2^U\to\2^T$ of $f:T\to U$, $C$ contains the constraint
\[
y^\varphi_t=x_{f(t)}
\]
for each $t\in T$.
\end{enumerate}

\item For each set $T$ in $\S\setminus\S'$ such that $\In(T)\neq\emptyset$ and each $t$ in $T$,
$C$ contains the constraint
\[
x_t=\bigwedge_{\varphi\in \In(T)}y^\varphi_t.
\]

\item For each set $T$ in $\S'$ such that $\In(T)\neq\emptyset$ and each $t$ in $T$,
$C$ contains the constraint
\[
x_t=\bigvee_{\varphi\in \In(T)}y^\varphi_t.
\]

\item For the variable $x_0$ corresponding to the element $0\in\1$ that appears in the data
$(\S,\S',\M, \{\1\}, s_{\1}, S)$, $C$ contains the constraint $x_0=1$.
\end{enumerate}
Note that for each variable $x\in X$, there is exactly one constraint in $C$ with that variable on the LHS.
We call it $\chi_x$.
Note also that any conjunction in $C$ has only a finite number of variables and that
there are only countably many variables in $X$.
We fix a one-to-one correspondence $\nu:X\to\N$ for later use.

For a cross section $s$ of $\S$, we define an assignment $g_s$ to $X$ by defining $g_s(x_t)=1$ if and only if $t\in s(T)$ for each
$T\in\S, t\in T$
and $g_s(y^\varphi_t)=1$ if and only if $t\in\varphi(s(\dm(\varphi)))$.
Then $g_s$ satisfies $C$ if and only if $s$ is a cross section in $\Gm(\S,\S',\M\|s_{\1})$,
as follows.

First, suppose that $g_s$ satisfies $C$.
If $T\in\S\setminus\S'$, then $g_s(x_t)=\bigwedge_{\varphi\in \In(T)}g_s(y^\varphi_t)$ and
thus $t\in s(T)$ (i.e., $g_s(x_t)=1$) if and only if
$t\in\bigcap_{\varphi\in\In(T)}\varphi(s(\dm(\varphi)))$.
If $T\in\S'$, then $g_s(x_t)=\bigvee_{\varphi\in \In(T)}g_s(y^\varphi_t)$;
thus $t\in s(T)$ (i.e., $g_s(x_t)=1$) if and only if
$t\in\bigcup_{\varphi\in\In(T)}\varphi(s(\dm(\varphi)))$.
This proves that $s$ is in $\Gm(\S,\S',\M)$.
Finally, the constraint $g_s(x_0)=1$ for the variable $x_0$ corresponding to the element $0\in\1$
ensures that $s(\1)=\1$, and thus $s$ is in $\Gm(\S,\S',\M\|s_{\1})$.
Conversely, suppose that $s$ is in $\Gm(\S,\S',\M\|s_{\1})$.
If $T\in\S\setminus\S'$, $g_s(x_t)=1$ (i.e., $t\in s(T)$) if and only if $t\in\bigcap_{\varphi\in\In(T)}\varphi(s(\dm(\varphi)))$;
since $g_s(y^\varphi_t)=1$ if and only if $t\in\varphi(s(\dm(\varphi)))$, $g_s(x_t)=\bigwedge_{\varphi\in \In(T)}g_s(y^\varphi_t)$.
If $T\in\S'$, $g_s(x_t)=1$ (i.e., $t\in s(T)$) if and only if $t\in\bigcup_{\varphi\in\In(T)}\varphi(s(\dm(\varphi)))$;
thus $g_s(x_t)=\bigvee_{\varphi\in \In(T)}g_s(y^\varphi_t)$.
Finally, $g_s(x_0)=1$ is satisfied since $s(\1)=\1$.

We define a subset $X_i$ of $X$ for $i\in\N$ by
\begin{align*}
X_0&=\{x_0\},\\
X_{i+1}&=X_i\cup\{x\in X\|x\text{ is forced by }X_i\}.
\end{align*}
Here, a variable $x\in X$ is said to be forced by a subset $Y$ of $X$ either if $\chi_x$ is
a disjunction and at least one variable on its RHS is in $Y$ or if $\chi_x$ is a conjunction and all
the variables on its RHS is in $Y$.
Note that, if an assignment $g$ to $X$ satisfies $C$, $g(x_0)=1$ and,
by following the definition, any variable $x$ in $X_i$ has $g(x)=1$.

We define a function $\tau$ on $X$:
\[
\tau(x)=
\begin{cases}
i & \text{if }x\in X_i\setminus X_{i-1}\\
\infty &\text{otherwise}
\end{cases}
\]
and using $\tau$, we define an assignment $h$ to $X$ by:
\[
h(x)=1\;\; \Longleftrightarrow \;\;\tau(x)<\infty.
\]
Then $h$ satisfies $C$.
To see this, assume that there is a constraint $\chi\in C$ that is not satisfied by $h$.
If $\chi$ is of the form $x=\wedge y_j$ with a finite number of $y_j$'s,
either $h(x)=1$ and $h(y_j)=0$ for some $y_j$, or $h(x)=0$ and $h(y_j)=1$ for all $y_j$.
Neither is possible by the definition of $h$:
if $h(y_j)=0$ for some $y_j$, $\tau(y_j)=\infty$ and thus $\tau(x)=\infty$;
if $\tau(y_j)<\infty$ for all $y_j$, then all $y_j$'s are in $X_k$, where $k$ is the maximum
of $\tau(y_j)$, and thus $\tau(x)=k+1$.
If $\chi$ is of the form $x=\vee y_j$,
either $h(x)=1$ and $h(y_j)=0$ for all $y_j$, or $h(x)=0$ and $h(y_j)=1$ for some $y_j$.
These are not possible either:
if $h(y_j)=0$ and thus $\tau(y_j)=\infty$ for all $y_j$, then $\tau(x)=\infty$;
if $h(y_j)=1$ for some $y_j$, $\tau(y_j)<\infty$ and $\tau(x)\leq\tau(y_j)+1$.

For any $x$ in $X$ with $\tau(x)<\infty$, define a finite subset $Y_x$ of $X_{\tau(x)}$
as follows:
\begin{align*}
x&\in Y_x,\\
y\in Y_x,\; \chi_y=\text{``}y=\wedge y_j\text{'' }&\Rightarrow \forall y_j\in Y_x,\\
y\in Y_x,\; \chi_y=\text{``}y=\vee y_j\text{'' }&\Rightarrow \arg\min_{y_j, \tau(y_j)<\tau(y)} \nu(y_j)\in Y_x.
\end{align*}
Since only a finite number of variables appear in any conjunction in $C$,
for a variable $y$ already in $Y_x$, each rule adds at most a finite number of variables $y_j$,
which all satisfies $\tau(y_j)<\tau(y)$.
Thus there are only a finite number of variables in $Y_x$.
Also, if we denote $Y_x^i=Y_x\cap X_i$, each variable in $Y_x^i$ is forced by $Y_x^{i-1}$
 for $i=1,\cdots,\tau(x)$.

\newcommand{\numright}[1]{\makebox[1.1667em][r]{#1}}

Consider the following algorithm:

\begin{tabbing}
\tab{\sc IsOne}$(x)$\\
\tab\numright{1}\tab\={\bf for} \= {\bf each }increasingly large finite subset $Z\subset X$ such that $x_0,x\in Z$\\
\tab\numright{2}\>\>$Y_0\leftarrow\{x_0\}$ \\
\tab\numright{3}\>\>{\bf for }\= $i=1,2,\cdots$ \\
\tab\numright{4}\>\>\>$Y_i\leftarrow Y_{i-1}$ \\
\tab\numright{5}\>\>\>{\bf for }\={\bf each} $z\in Z$\\
\tab\numright{6}\>\>\>\>add $z$ to $Y_i$ if $z$ is forced by $Y_{i-1}$\\
\tab\numright{7}\>\>{\bf until} $Y_i=Y_{i-1}$\\
\tab\numright{8}\>{\bf until} $x\in Y_i$ for some $i$
\end{tabbing}
On line 1, we make sure the increasing subset $Z$ will contain each variable eventually by
adding variable $\nu\inv(k)$ for $k=0,1,2,\cdots$.

If $\tau(x)<\infty$, {\sc IsOne}$(x)$ terminates in finite steps, since
$Z$ will eventually include $Y_x$.
Conversely, if {\sc IsOne}$(x)$ terminates in finite steps, the set $Y_i$ that contains $x$
is a subset of $X_i$.
Thus $\tau(x)<\infty$.

Let us assume $s\in\Gm(\S,\S',\M\|s_{\1})$.
Since the data $(\S$, $\S'$, $\M$, $\{\1\}$, $s_{\1}$, $S)$ represents $\bsigma\subset S=\N\x\N$,
any cross section $s'$ in $\Gm(\S,\S',\M\|s_{\1})$ satisfies $s'(S)=\bsigma$.
Thus, for any assignment $g$ to $X$ that satisfies $C$, $g(t)=g_s(t)$ for any $t\in S$.
In particular, we have $h(t)=g_s(t)$ for any $t\in S$.
This means that $\tau(x_{(i,b)})<\infty$ for $(i,b)\in S$ if and only if $\sigma[i]=b$.

Now, we define a Turing machine $M$ that executes the following algorithm:
\begin{tabbing}
\tab\numright{1}\tab\=Given $(\S,\S',\M, \{\1\}, s_{\1}, S)$\\
\tab\numright{2}\tab Let $\rho$ be an empty string variable\\
\tab\numright{3}\tab\={\bf for }\=$i=0,1,\cdots$\\
\tab\numright{4}\>\>{\bf run }\= {\sc IsOne}$(x_t)$ in parallel for $t=(i,0), (i,1)$,
and $(i-1, 1-\rho[i-1])$ if $i>0$.\\
\tab\numright{5}\>\>\>{\bf until }one of the parallel processes terminates\\
\tab\numright{6}\>\>{\bf if } $i>0$ and {\sc IsOne}$(x_{(i-1, 1-\rho[i-1])})$ terminated\\
\tab\numright{7}\>\>\>{\bf then} pop $\rho[i-1]$ and terminate returning $\rho$\\
\tab\numright{8}\>\>{\bf else if} {\sc IsOne}$(x_{(i,0)})$ terminated\\
\tab\numright{9}\>\>\>{\bf then} $\rho[i]\leftarrow 0$\\
\tab\numright{10}\>\>{\bf else if} {\sc IsOne}$(x_{(i,1)})$ terminated\\
\tab\numright{11}\>\>\>{\bf then} $\rho[i]\leftarrow 1$
\end{tabbing}
For $i=0,1,\cdots,|\sigma|-1$, {\sc IsOne}$(x_{(i,\sigma[i])})$ terminates in finite steps but
the other processes do not.
For $i=|\sigma|$, either {\sc IsOne}$(x_{(i,0)})$ or {\sc IsOne}$(x_{(i,1)})$ can
terminate first.
For $i=|\sigma|+1$, only {\sc IsOne}$(x_{(i-1, 1-\sigma[i-1])})$ terminates.
Thus, $M$ always terminates returning $\sigma$.

Finally, let $(\S,\S',\M, \{\1\}, s_{\1}, S)$ be the diagram that has size $I(\bsigma\,|\MN)$.
Since the data can be encoded in a string of length $O(I(\bsigma\,|\MN))$ and
emulating $M$ on $U$ to run on the string produces $\sigma$,
there are constants $d_U, e_U\in \N$ such that
\[
K_U(\sigma)\leq d_U I(\bsigma\,|\MN)+ e_U
\]
that do not depend on $\sigma$.
\end{proof}

By Theorem~\ref{KCTheorem} and Theorem~\ref{DSimulate}, we can say that $I(\bsigma\,|\MN)$ is equivalent to
the Kolmogorov complexity $K_U(\sigma)$.
Thus, the formulation gives the larger class of objects a meaningful measure of information
that generalizes Kolmogorov complexity.

\section{Discussion and Conclusion}\label{sec:conclusion}

How is the information measure we defined different from Kolmogorov complexity in the general case, not the case in the previous section?
Let us take the class $\G$ of subsets of Euclidean plane $X$ as an example
and try to follow the common prescription: that is, we fix an encoding of the objects into strings; and then
we define the Kolmogorov complexity of the string encoding an object as its complexity.
Consider
\begin{equation}
S\stackrel{U}{\longrightarrow}\;S\stackrel{f'}{\longrightarrow} \G
\label{eq:old}
\end{equation}
where $S$ is the set of all strings and $U$ is the partial map defined by a universal Turing machine.
The map $f'$ is an encoding of objects by strings.
The common notion is that we should define the length of the shortest string $s$ such that $x=f'(U(s))$ as the Kolmogorov complexity of $x$.
However, as we mentioned in \ref{kolmocovers}, because the cardinality of $\G$ is larger than that of $S$,
we need to either encode only some of the objects, encode multiple objects by each string, or both.
To allow for both possibilities, we consider
\begin{equation}
S\stackrel{U}{\longrightarrow}\;S'\stackrel{f}{\longleftarrow} \G
\label{eq:old2}
\end{equation}
instead.
Here, $S'=S+\{\uparrow\}$, where $f(x)=\,\uparrow$ means ``$x$ is not encoded.''
For an object $x$, the length of the shortest string $s$ such that $f(x)=U(s)$ is defined as the amount $K_{f,U}(x)$ of
information in $x$, unless $f(x)=\,\uparrow$, in which case $K_{f,U}(x)=\infty$.

The question is: does there exist a map $f$ that makes $K_{f,U}(x)$ equivalent to $I(x\,|\M_\mathrm{E})$?
The trivial answer is: yes, we can define $f$ using $I(x\,|\M_\mathrm{E})$.
If $I(x\,|\M_\mathrm{E})$ is finite, we define $f(x)$ to be the first string $s$ (in some standard order)
such that the shortest string $s'$ with $U(s')=s$ has length $I(x\,|\M_\mathrm{E})$; otherwise we define $f(x)=\,\uparrow$.
But this only highlights the meaninglessness of this kind of discussion without
restricting $f$ in some way: as we noted in the introduction, just about any ``complexity'' can be realized this way.

Let us instead try to define a reasonable encoding $f$ to see if it gives rise to anything close to $I(x\,|\M_\mathrm{E})$.
First, a standard way to encode points in $X$ would be to identify $X$ with $\R^2$ and
encode the two coordinates of a point by an infinite sequence.
We can also encode a countable set of points by a sequence similarly by dovetailing between more and more points,
enumerating the digits for each to higher and higher precision.
To accommodate this, we again modify \eqref{eq:old2} a little and consider
\begin{equation}
S\stackrel{\tilde{U}}{\longrightarrow}\;\tilde{S}\stackrel{\tilde{f}}{\longleftarrow} \G
\label{eq:old3}
\end{equation}
Here, $\tilde{S}$ is the set of all infinite sequences.
If a sequence $s$ is computable, there is a Turing machine that, given a natural number $n$, prints the first $n$ symbols of $s$ and halts.
Let $\tilde{U}$ be the partial map that maps an encoding of such a Turing machine to the sequence.
We define the length of the shortest string $s'$ such that $\tilde{U}(s')=\tilde{f}(x)$ as $K_{\tilde{f},\tilde{U}}(x)$.
If there is no such string $s'$, we define $K_{\tilde{f},\tilde{U}}(x)=\infty$.
Note that this particular scheme is already quite different from ours.
According to this scheme, the information in a single point varies in a way irrelevant to anything
we may be interested in while thinking about subsets of a Euclidean plane.
If the sequence corresponding to the point is uncomputable, it has infinite information.
In contrast, for any point $p$ in $X$, $I(\{p\}\,|\M_\mathrm{E})=1$, as we saw in
section~\ref{sec:measure}.
We begin to see how we introduce unnecessary complications by trying to first encode everything by a string.

How about other, uncountable sets?
Most of geometric objects are uncountable subsets of $X$.
We can imagine encoding lines with a pair of points and circles with a point and a real number, etc., encoded as above.
That is, we {\it define} $\tilde{f}$ so that a line on $X$ going through a pair of points is mapped to a sequence
encoding two points and some flag indicating that it is a line.
This is a definition of $\tilde{f}$ by using characteristics, or regularity, of the subset.
However, the problem is that any interpretation of the sequence in terms of regularity must be incorporated in the
definition of $\tilde{f}$.
Thus, as we wish to add more patterns---lines, circles, repeated patterns---the definition of $\tilde{f}$
has to become more and more elaborate;
and we would wish to somehow define $\tilde{f}$ computationally.
However, this cannot be done because the spirit of using Kolmogorov complexity is that any information,
including any regularities and redundancies, is conveyed through a sequence and any computation
must be done by a Turing machine, strictly on the left side of $\tilde{S}$ in \eqref{eq:old3}.

In order to define $\tilde{f}$ computationally, we would have to define the notion of computation involving $\G$.
And, of course, that is exactly what our scheme does, where the situation is like
\begin{equation}
S\stackrel{g}{\longrightarrow}\D\stackrel{h}{\longrightarrow} \G
\label{eq:new}
\end{equation}
A given set of maps determines the possible set $\D$ of diagrams generated by the maps.
The map $h$ is the interpretation of diagrams into the ground representation.
The difference between the two schemes is on which side of the center the computation, or the ``decompression,''
takes place.
It occurs in $U$ in \eqref{eq:old3} and in $h$ in \eqref{eq:new}.

Thus, it is not easy to give a reasonable and simple definition of $\tilde{f}$ that makes
$K_{\tilde{f},\tilde{U}}(x)$ equivalent to $I(x|\M_\mathrm{E})$.
We cannot seem to define a simple encoding $\tilde{f}$ that reflects the informal notion of information in
the domain of subsets of Euclidean plane $X$.
Even with this single, relatively simple class of objects, we have these problems.
Remember that we cannot just say that there is an encoding and only talk about strings, as we pointed out repeatedly; if we insist encoding objects into strings, we need to define a concrete encoding for each class of objects.

The new representation provides a meta-definition for that purpose.
It allows specifying how to embed multiple computations in larger spaces in a versatile way.
As shown in the previous section, a Turing machine can be simulated by a diagram generated by $\{0,\suc\}$.
This can be isomorphically embedded in a diagram generated by $\M_\mathrm{E}$ by using $p$ and $\func{move}_v$ instead of
$0$ and $\suc$, where $p\in X$ and $\func{move}_v:X\to X$ is a move by a vector $v$, which can be
represented by a diagram.
Although a computation can be in other more direct forms, as in the examples in \ref{compexample}, this shows that universal 
computation can be embedded as a part of the representation.

Also, it allows separating the information that we wish to ignore from that which we measure.
As we have seen, this is important as some information we want to ignore is infinite.
The structure maps delineate the {\it a priori} parts of the objects whose information are to be ignored,
leaving only the information in the structure: hence the name ``structural information.''

Finally, the representation can provide a useful abstraction in designing real-world applications.
In dealing with high-dimensional signal-level objects, such as sounds, images, and sensory readings in robots, 
we need to model the process of finding patterns in raw signal and grouping them together to be described at a symbol level.
The information measure may be used to regularize the process of finding useful patterns in the ground level.
We can imagine implementing a system using this representation, which corresponds to defining an encoding $g$ of diagrams
shown in \eqref{eq:new}.

Given a fixed set of structure maps, it can represent infinite sets by finite symbolic expressions and sparse parameters.
That is, after fixing the sets and maps as primitives, diagrams can be symbolically represented just as graphs are.
While the fixed sets such as Euclidean spaces and the space of real numbers have to be implemented by
some system-dependant approximation such as floating point numbers,  the structure itself is preserved
irrespective of such approximations.
In the case of the line example, although points and vectors might be translated between systems varied in, say, resolution,
its structure, its ``lineness,'' so to speak, would survive.
In the current practice, an encoding of a line by such sparse data would have to be written into the code; leading
to the lack of generality and flexibility.

Also, diagrams can be naturally understood as modules to construct larger and more complex ones.
The set union and intersection are the most basic ways to combine them;
or we can put the output of one diagram into the input of another.
Ultimately, any computation can be used to combine them.
Our formulation provides a useful abstraction that allows us to model such structures in a uniform way
so that they can be manipulated automatically.

\vspace{2ex}
\noindent{\bf Conclusion}

\noindent In this paper, we introduced a uniform representation of general objects.
In an abstraction of the dense representation, which includes strings, bitmaps, and other raw data,
the objects are assumed to be given {\it a priori} as subsets of some sets.
The proposed representation uses a new construct called {\it diagram}
to represent objects with regularity through sparse parameters,
using the maps that characterize the space in which the objects are included as subsets.
Since the representation can emulate any computation, it can exploit any computable regularities in objects to compactly
describe them.
It is also general enough to represent random objects as raw data, making it possible
to interpolate between the dense and sparse representation.
There is a prescribed way to connect the description to the raw data;
thus, the representation is {\it grounded}.
In other words, the relationship between the parameters and the data is part of the representation
so that we can give raw data and then ask what sparser,
more structured representation is possible.
With the representation, we also defined a measure of information in the objects.
We proved that the measure is equivalent to Kolmogorov complexity in the case of strings.

To answer our question in the introduction, we would say that a subset $A$ of a set $X$ with
structures characterized by a set $\M$ of maps is a pattern if $I(A\,|\M)$ is significantly smaller than $|A|$
in the case $A$ is finite and, in the case $A$ is infinite, if $I(A\,|\M)$ is finite.
Even when $I(A\,|\M)$ is infinite, there can be a pattern in a statistical sense,
which we leave for future work.


\begin{thebibliography}{99}

\bibitem{Chaitin66}
G.~J.~Chaitin.
On the Length of Programs for Computing Finite Binary Sequences.
{\it J. ACM} {\bf 13}:547--569, 1966.

\bibitem{Chaitin69}
G.~J.~Chaitin.
On the Length of Programs for Computing Finite Binary Sequences: Statistical Considerations.
{\it J. ACM} {\bf 16}:145--159, 1969.

\bibitem{Edmonds99}
B.~Edmonds. Syntactic Measures of Complexity. Doctoral Thesis (also CPM Report No.: 99-55),
University of Manchester, Manchester, UK. 1999.

\bibitem{Fu74}
K.~S.~Fu. {\it Syntactic Methods in Pattern Recognition}.
Academic Press, 1974.

\bibitem{Gregory1970}
R.~L.~Gregory, {\it The Intelligent Eye}, Weidenfeld \& Nicolson, London, 1970.

\bibitem{Gre76}
U.~Grenander. {\it Lectures in Pattern Theory}, volumes I-III.
Springer-Verlag, 1976, 1978, 1981.

\bibitem{Gre93}
U.~Grenander. {\it General Pattern Theory.}
Oxford University Press, 1993.

\bibitem{GreMil07}
U.~Grenander and M.~Miller. {\it Pattern Theory.}
Oxford University Press, 2007.

\bibitem{Kolmogorov65}
A.~N.~Kolmogorov.
Three Approaches to the Quantitative Definition of Information.
{\it Problems of Information Transmission} {\bf 1}:4--7, 1965.

\bibitem{Li_Vitanyi_97}
M.~Li and P.~Vit\'{a}nyi.
{\em An introduction to Kolmogorov complexity and its applications (2nd ed.)}
Springer-Verlag, 1997.

\bibitem{Rissanen78}
J.~Rissanen.
Modeling by the shortest data description.
{\it Automatica} {\bf 14}:465--471, 1978.

\bibitem{Shannon48}
C.~E.~Shannon.
The mathematical theory of communication.
{\it Bell System Tech. J.}, {\bf 27}:379--423, 623--656, 1948.

\bibitem{Solomonoff60}
R.~J.~Solomonoff.
A Preliminary Report on a General Theory of Inductive Inference.
Report ZTB-135, Zator Co., Cambridge, MA, 1960.

\bibitem{Solomonoff64}
R.~J.~Solomonoff.
A Formal Theory of Inductive Inference.
{\it Information and Control}, {\bf 7}:1--22, 224--254, 1964.

\bibitem{WitTen}
A.~P.~Witkin and J.~M.~ Tenenbaum.
On the Role of Structure in Vision.
In {\it Human and Machine Vision},
pp. 481--543. Academic Press, New York, 1983.


\end{thebibliography}
\end{document}